\algrenewcommand\algorithmicrequire{\textbf{Precondition:}}
\algrenewcommand\algorithmicensure{\textbf{Postcondition:}}
\newcommand{\opt}{G_\textsc{max}}
\newcommand{\optdelta}{\opt(\delta)}
\newcommand{\alg}{G_\textsc{alg}}
\newcommand{\talg}{\widetilde{G}_\textsc{alg}}
\newcommand{\rev}{\textsc{rev}}
\newcommand{\regret}[1]{\textsc{regret}_{#1}}
\newcommand{\reward}[1]{G_{#1}}
\newcommand{\treward}[1]{\widetilde{G}_{#1}}
\newcommand{\ex}[1]{\mathbb{E}\left[#1\right]}
\newcommand{\eps}{\epsilon}
\newcommand{\argmin}[2]{\underset{#1}{\textrm{argmin}}\left(#2\right)}
\newcommand{\rew}[2]{g_{#1}(#2)}
\newcommand{\trew}[2]{\tilde{g}_{#1}(#2)}
\newcommand{\rewb}[1]{\mathbf{{g}}({#1})}
\newcommand{\trewb}[1]{\mathbf{\tilde{g}}({#1})}
\newcommand{\p}[2]{p_{#1}(#2)}
\newcommand{\tp}[2]{\tilde{p}_{#1}(#2)}
\newcommand{\pb}[1]{\mathbf{p}({#1})}
\newcommand{\tpb}[1]{\mathbf{\tilde{p}}({#1})}
\newcommand{\qb}{\mathbf{q}}
\newcommand{\mub}{\boldsymbol{\mu}}
\newcommand{\pib}{\boldsymbol{\pi}}
\newcommand{\w}[2]{w_{#1}(#2)}
\newcommand{\wb}[1]{\mathbf{w}({#1})}
\newcommand{\val}[2]{v_{#1}(#2)}
\newcommand{\vals}[1]{\mathbf{v} (#1)}
\newcommand{\pbest}{p^*}
\newcommand{\ibest}{i^*}
\newcommand{\imin}{i_{\min}}
\newcommand{\range}{c}
\newcommand{\actions}{\ensuremath{A}}
\newcommand{\expset}{\ensuremath{A}}
\newcommand{\naturals}{\mathbb{N}}
\newcommand{\one}{\vec{\mathbf{1}}}
\newcommand{\defeq}{:=}
\newcommand{\myfrac}[2]{{#1}/{#2}}
\renewcommand{\thefootnote}{\fnsymbol{footnote}}
\begin{document}
\title{Multi-scale Online Learning and its Applications to Online Auctions}

\author{\name S\'{e}bastien Bubeck \email sebubeck@microsoft.com  \\
       \addr Microsoft Research,\\
       1 Microsoft Way,\\
       Redmond, WA 98052, USA.
       \AND
       \name Nikhil Devanur \email nikdev@microsoft.com   \\
       \addr Microsoft Research,\\
       1 Microsoft Way,\\
       Redmond, WA 98052, USA.
	   \AND       
	   \name Zhiyi Huang \email zhiyi@cs.hku.hk   \\
       \addr Department of Computer Science,\\
       The University of Hong Kong,\\
       Pokfulam, Hong Kong.
       \AND
       \name Rad Niazadeh \email rad@cs.stanford.edu \\
       \addr Department of Computer Science,\\
       Stanford University,\\
       Stanford, CA 94305, USA.
       }

\editor{Csaba Szepesvari}
\maketitle

\begin{abstract}
We consider revenue maximization in online auction/pricing problems. 
A seller sells an identical item in each period to a new buyer, or a new set of buyers. 
For the online pricing problem, we show regret bounds that scale with the \emph{best fixed price}, rather than the range of the values. 
We also show regret bounds that are \emph{almost scale free}, and match the offline sample complexity, 
when comparing to a benchmark that requires a \emph{lower bound on the market share}. 
These results are obtained by generalizing the classical learning from experts and multi-armed bandit problems to their \emph{multi-scale} versions. 
In this version, the reward of each action is in a  \emph{different range}, 
and the regret with respect to a given action scales with its \emph{own range}, rather than the maximum range.
\end{abstract}
\begin{keywords}
online learning, multi-scale learning, auction theory, bandit information, sample complexity\footnote[2]{Following the theoretical computer science convention, we used \textbf{\emph{alphabetical author ordering.}}}
\end{keywords}

\section{Introduction}
\label{sec:intro}
Consider the following revenue maximization problem in a repeated setting, 
called the \emph{online posted pricing} problem. 
In each period, the seller has a single item to sell, and a new prospective buyer. 
The seller offers to sell the item to the buyer at a given price; 
the buyer buys the item if and only if the price is below his private valuation for the item. 
The private valuation of the buyer itself is never revealed to the seller. 
How should a monopolistic seller iteratively set the prices if he wishes to maximize his revenue? What if he also cares about the market share, i.e. the fraction of time periods at which the item is sold? 

Estimating price sensitivities and demand models in order to optimize 
revenue and market share is the bedrock of econometrics. 
The emergence of online marketplaces has enabled sellers to costlessly change prices, 
as well as collect huge amounts of data. 
This has renewed the interest in understanding best practices for data driven pricing. 
The extreme case of this when the price is updated for each buyer is the {online pricing} problem described above;
one can always use this for less frequent price updates. 
Moreover this problem is intimately related to the classical experimentation and estimation procedures. 

This problem has been studied from an\emph{ online learning} perspective, as a variant of 
the {\em multi-armed bandit} problem. In this variant, there is an arm for each possible price (presumably after an appropriate discretization). The revenue of each arm $p$ is either $p$ or zero, depending on whether the arriving value is at least equal to the price $p$ or smaller than the price $p$, respectively.
The total revenue of a pricing algorithm is then compared to 
the total revenue of the best fixed posted price in hindsight. 
The difference between the two, called the \emph{regret}, is then bounded from above. 
No assumption is made on the distribution of values; 
the regret bounds are required to hold for the \emph{worst case} sequence of values. 
\citet{blum2004online} assume that the buyer valuations are  in $[1,h]$, and show the following multiplicative plus additive bound on the regret: 
for any $\epsilon \in (0,1)$, the regret is at most 
$\epsilon $ times the revenue of the optimal price, plus
$O(\epsilon^{-2}h \log h \log\log h)$. 
\citet{blum2005near} show that the additive factor can be made to be $O(\epsilon^{-3}h  \log\log h)$, 
trading off a $\log h$ factor for an extra $\epsilon^{-1}$ factor.

An undesirable aspect of these bounds is that they scale \emph{linearly with} $h$;
this is particularly problematic when $h$ is an estimate and we might set it to be a generous upper bound on the range of prices we wish to consider. 
A typical use case is when the same algorithm is used for many different products,
with widely varying price ranges. We may not be able to manually tune the range for each product separately. 

One might wonder if this dependence on $h$ is unavoidable, as it seems to be reflected by the existing lower bounds for this problem in the literature (lower bounds are discussed later in the introduction with more details). Interestingly, in all of these lower-bound instances the best fixed price is equal to $h$ itself; Therefore, it is not clear whether this dependency on $h$ is required for instances where $h$ is only a pessimistic upper-bound on the best fixed price. We now ask the following question:

\begin{displayquote}
\textbf{Question}: \emph{do online learning algorithms exist for the online posted pricing problem, such that their regrets are proportional to the best fixed price instead of the highest value?}
\end{displayquote}

Standard off-the-shelf bounds allow regret to depend on the loss of the best arm instead of the worst case loss. However, even such bounds still depend linearly on the maximum range of all the losses, and thus they would not allow to replace $h$ by the best fixed price.

 Fortunately, in the online pricing problem the reward function of the arms is well structured. In particular, as a neat observation, the reward of the arm $p$ is upper-bounded by $p$ (and not only the maximum value). Can we use this structure in our favor to improve the standard regret bounds? We answer this question in the affirmative by the means of reducing the problem to a pure learning problem termed as  \emph{mutli-scale online learning}.

\subsection{Multi-scale online learning}
The main technical ingredients in our results are variants of the classical problems of learning from expert advice and multi-armed bandit. 
We introduce the multi-scale versions of these problems, 
where each action has its reward bounded in a different range. Here, we seek to design online learning algorithms that guarantee \emph{multi-scale regret} bounds, i.e. their regrets with respect to each certain action scales with the range of that particular action, instead of the maximum possible range. These guarantees are in contrast with the regret bounds of the standard versions, which scale with the maximum range. 
\begin{displayquote}
\textbf{Main result (informal):} \emph{we give algorithms for the full information and bandit information versions of the multi-scale online learning problem with multi-scale regret guarantees. }
\end{displayquote}
While we use these bounds mostly for designing online auctions and pricing mechanisms, we expect such bounds to be of independent interest. 

The main idea behind our algorithms is to use a tailored  variant of \emph{online (stochastic) mirror descent (OSMD)} \citep{bubeck2011introduction}. In this tailored version, the algorithm uses a \emph{weighted negative entropy} as the Legendre function (also known as the \emph{mirror map}), where the weight of each term $i$ (corresponding to arm $i$)  is actually equal to the range of that arm. More formally, assuming the range of arm $i$ is equal to $c_i$, our mirror descent algorithms  (Algorithm~\ref{alg:MSMW} for full information, and Algorithm~\ref{alg:Bandit-MSMW} for the bandit information) use the following mirror map: 
$$
F(x)=\sum_{\textrm{arms $i$}}c_i\cdot x_i\ln(x_i)
$$ 

Intuitively speaking, these algorithms take into account different ranges for different arms by first normalizing the reward of each arm by its range (i.e.  divide the reward of arm $i$ by its corresponding range $\range_i$), and then projecting the updated weights by performing a \emph{smooth multi-scale projection} into the simplex. This projection is an instant of the more general Bregman projection~\citep{bubeck2011introduction} for the special case of weighted negative entropy as the mirror map. The mirror descent framework then gives regret bounds in terms of a ``local norm'' as well as an ``initial divergence'', which we then bound differently for each version of the problem. In the technical sections we highlight how the subtle variations arise as a result 
of different techniques used to bound these two terms. 
 
While our algorithms have the style of the multiplicative weights update  (up to a normalization of the rewards), the smooth projection step at each iteration makes them drastically different.  To shed some insight on this projection step, which plays an important role in our analysis, consider a very special case of the problem where the reward of each arm $i$ is deterministically equal to $c_i$. The multiplicative weights  algorithm picks arm $i$ with a probability proportional to $\exp{(c_i)}$. However, as it is clear from the description of Algorithm~\ref{alg:MSMW}, our algorithm uniformly scales the weight of each arm first. Then, in the projection step the weight of each arm $i$ is multiplied by $\exp{(-\tfrac{\lambda^*}{c_i})}$ for some parameter $\lambda^*$. Hence, arm $i$ will be sampled with a probability proportional to $\exp{(-\tfrac{\lambda^*}{c_i})}$ (which is a smooth approximation to $i^*=\textrm{argmax}~c_i$, but in a different way compared to the vanilla multiplicative weights).

The multi-scale versions exhibit subtle variations that do not appear in the standard versions. 
First of all, our applications to auctions and pricing have non-negative rewards, and 
this actually makes a difference. 
For both the expert and the bandit versions, the minimax regret bounds for non-negative rewards are \emph{provably better} than those when rewards could be negative.
Further, for the bandit version, we can prove a better bound if we only 
require the bound to hold with respect to the \emph{best} action, rather than \emph{all} actions (for non-negative rewards). The various regret bounds and comparison to standard bounds are summarized in Tables~\ref{tab:multi-scale}.
\begin{table}[ht]
\renewcommand\arraystretch{1.5}
\renewcommand{\thefootnote}{\fnsymbol{footnote}}
 	\small
\centering
 	\begin{tabular}{|c|c|c|c|}
 		\hline
 		& \multirow{2}{*}{\shortstack{Standard \\regret bound $O(\cdot)$}} & \multicolumn{2}{|c|}{Multi-scale bound (this paper)} \\
 		\cline{3-4}
 		& & \shortstack{Upper bound  $O(\cdot)$ }& \shortstack{Lower bound  $\Omega(\cdot)$}\\
 		\hline
 		Experts/non-negative  & 
 		$ \range_{\max}\sqrt{T\log(k)}$~~\footnotemark[1] & 
 		$ \range_{i}\sqrt{T\log(kT)} $ & $ \range_{i}\sqrt{T\log(k)} $ \\
 		\hline
 		 \multirow{2}{*}{Bandits/non-negative} &  \multirow{2}{*}{$\range_{\max}\sqrt{Tk}$~~\footnotemark[2]} & $ \range_{i} T^{\frac{2}{3}}(k\log(kT))^{\frac{1}{3}} $ &  $c_i\sqrt{TK}$\\
 		 \cline{3-4} &&$\range_{i^*}\sqrt{Tk\log(k)} $, $i^*$ is the best action & -\\
 		\hline
 		Experts/symmetric & 	$\range_{\max}\sqrt{T\log(k)}$~~\footnotemark[1] & $ \range_{i}\sqrt{T\log (k\cdot \frac{\range_{\max}}{\range_{\min}})} $ & $\range_{i}\sqrt{T\log(k)} $ \\
 		\hline
 		Bandits/symmetric & 	$\range_{\max}\sqrt{Tk}$~~\footnotemark[2] & $ \range_{i}\sqrt{Tk\cdot \frac{\range_{\max}}{\range_{\min}}\log (k T\cdot \frac{\range_{\max}}{\range_{\min}})} $ & 
 		 $ \range_{i}\sqrt{Tk\cdot \frac{\range_{\max}}{\range_{\min}}}$ \\
 		\hline
 	\end{tabular}
 		
 	\smallskip
 	
\noindent\par
 	\begin{minipage}{0.6\textwidth}
 		{
 			\center
 			\footnotesize
 			\footnotemark[1]~~{\citet{freund1995desicion}};\quad
 			\footnotemark[2]~~{\citet{audibert2009minimax}}.
 		}
 	\end{minipage}
 	
 	\smallskip
 	
 	\caption{Pure-additive regret bounds for non-negative rewards, i.e. when reward of any action $i$ at any time is in $[0,c_i]$, and symmetric range rewards, i.e. when reward of any action $i$ at any time is in $[-c_i,c_i]$ (suppose $T$ is the time horizon, $\actions$ is the action set, and $k$ is the number of actions).}
 	\label{tab:multi-scale}
 \end{table}

\subsection{The implications for online auctions and pricing}

As a direct application of our multi-scale online learning framework, somewhat surprisingly, 

\begin{displayquote}
\textbf{Second contribution:} \emph{we show that we can get regret proportional to the best fixed price instead of the highest value for the online posted pricing problem.}
\end{displayquote}
(i.e., we can replace $h$ by the best fixed price, which is used in the definition of the benchmark). In particular, we show that the additive bound can be made to be $O(\epsilon^{-2} p^* \log h)$, where $p^*$ is the best fixed price in hindsight. This allows us to use a very generous estimate for $h$ and let the algorithm adapt to the actual range of prices; we only lose a $\log h$ factor.
The algorithm balances exploration probabilities of different prices carefully and automatically zooms in on the relevant price range. 
This does not violate known lower bounds, since in those instances $p^*$ is close to $h$. 

\citet{bar2002incentive}, \citet{blum2004online}, and \citet{blum2005near} also consider the ``full information''  version of the problem, or what we call the \emph{online (single buyer) auction} problem, where 
the valuations of the buyers are revealed to the algorithm after the buyer has made a decision. 
Such information may be available in a context where the buyers have to bid for the items, and are awarded the item if their bid is above a hidden price. 
In this case, the additive term can be improved to $O(\epsilon^{-1} h \log (\epsilon^{-1}))$, which is tight. 
Once again, by a reduction to multi-scale online learning, \emph{we show that  $h$  can be replaced with} $p^*$; 
in particular, we show that the additive term can be made to be $O(\epsilon^{-1}p^* \log(h\epsilon^{-1}))$.

\subsection{Purely multiplicative bounds and sample complexity}
 The regret bounds mentioned above can be turned into a purely multiplicative factor in the following way: for any $\epsilon > 0$, the algorithm is guaranteed to get a $1-O(\epsilon)$ fraction of the best fixed price revenue, 
 provided the number of periods $T \geq E/\epsilon,$
 where $E$ is the additive term in the regret bounds above. 
 This follows from the observation that a revenue of $T$ is a lower bound on the best fixed price revenue. 
Define the number of periods required to get a $1-\epsilon$ multiplicative approximation (as  a function of $\epsilon$) to be  the \emph{convergence rate} of the algorithm. 
 
A $1-\epsilon$ multiplicative factor is also the target in the recent line of work, on the \emph{sample complexity} of auctions, started by  \citet{balcan2008red, elkind2007des,Dhangwatnotai2014revenue,ColeR14}. 
(We give a more comprehensive discussion of this line of work in Section~\ref{sec:related-work}.)
Here,\emph{ i.i.d.} samples of the valuations are given from a \emph{fixed but unknown distribution}, and the goal is to find a price such that its revenue with respect to\ the hidden distribution is a 
$1-\epsilon$ fraction of the optimum revenue for this distribution. 
The sample complexity is the minimum number of samples needed to guarantee this (as a function of $\epsilon$).

The sample complexity and the convergence rate (for the full information setting) are closely related to each other. 
The sample complexity is always smaller than the convergence rate: the problem is easier because of the following. 
\begin{enumerate}
	\item The valuations are i.i.d. in the case of sample complexity, whereas they can be arbitrary (worst case) in the case of convergence rate.  
	\item Sample complexity corresponds to an offline problem: you get all the samples at once. Convergence rate corresponds to an online problem: you need to decide what to do on a given valuation without knowing what valuations arrive in the future.
\end{enumerate}
This is formalized in terms of an \emph{online to offline reduction} [folklore]
which shows  that a convergence rate upper bound can be automatically translated to a sample complexity upper bound.
This lets us convert sample complexity lower bounds into lower bounds on the convergence rate, and in turn into lower bounds on the additive error $E$  in an additive plus multiplicative regret bound. 
For example, the additive error for the  online auction problem (and hence also for the posted pricing problem\footnote{
	We conjecture that the lower bound for the posted pricing problem should be worse by a factor of $\epsilon^{-1}$, since one needs to explore about $\epsilon^{-1}$ different prices. 
}) cannot be $o(h\epsilon^{-1})$~\citep{huang2015making}. 
Moreover, it is insightful to compare convergence rates we show with {\em the best known sample complexity upper bound;  proving better convergence rates would mean improving these bounds as well}. 

A natural target convergence rate for a problem is therefore the corresponding sample complexity, but achieving this is not always trivial. 
In particular, we consider an interesting version of the sample complexity bound for auctions, for which no analogous convergence rate bound is known in the literature. This version takes into account \emph{both revenue and market share}, and gets sample complexity bounds that are \emph{scale free}; 
there is no dependence on $h$, which means it works for unbounded valuations! For any $\delta \in (0,1)$, the best fixed price benchmark is relaxed to ignore those prices whose market share (which is equivalent to the probability of sale) is below a 
$\delta $ fraction; as $\delta$ increases the benchmark is lower.
This is a meaningful benchmark since in many cases revenue is not the only goal, even if you are a monopolist.
A more reasonable goal is to maximize revenue subject to 
the constraint that the market share is above a certain threshold.
What is more, this gives a sample complexity of $O(\epsilon^{-2}\delta^{-1}\log (\delta^{-1} \epsilon^{-1}))$~\citep{huang2015making}. 
In fact $\delta$ can be set to $h^{-1}$ without loss of generality, when the values are in $[1,h]$,\footnote{\label{fn1}
	When the values are in $[1,h]$, we can guarantee a revenue of $T$ by posting  a price of 1, and to beat this, any other price (and in particular a price of $h$) would have to sell at least $T/h$ times.
} 
and the above bound then matches the sample complexity with respect to the best fixed price revenue.
In addition, this bound gives a precise interpolation:  
as the target market share $\delta$ increase, the number of samples needed decreases almost linearly. 

\begin{displayquote}
\textbf{Third contribution:} \emph{we show a convergence rate that almost matches the above sample complexity, for the full information setting.}
\end{displayquote}
We have a mild dependence on $h$; the rate is proportional to $\log\log h$. 
Further, we also show a near optimal convergence rate for the online posted pricing problem.\footnote{
Unfortunately, we cannot yet guarantee that our online algorithm itself gets a market share of $\delta$, although we strongly believe that it does. Showing such bounds on the market share of the algorithm is an important avenue for future research.}

\paragraph{Multiple buyers:}
All of our results in the full information (online auction) setting extend to the multiple buyer model.
In this  model, in each time period, a new set of $n$ buyers competes for a single item. 
The seller runs a truthful auction that determines the winning buyer and his payment. 
The benchmark here is the set of all ``Myerson-type'' mechanisms. 
These are mechanisms that are optimal when each period has $n$ buyers of potentially different types, and the value of each buyer is drawn independently from a type dependent distribution.
In fact, our convergence rates also imply new sample complexity bounds for these problems 
(except that they are not computationally efficient). 
 
 The various bounds and comparisons to previous work are summarized in Tables~\ref{tab:opt} \& \ref{tab:opt-delta}. 
 
 \begin{table}[ht]
 
 	\renewcommand\arraystretch{1.5}
	\renewcommand{\thefootnote}{\fnsymbol{footnote}}
	\centering
 	\small
 	\resizebox{\textwidth}{!} {%

 	\begin{tabular}{|c|c|c|c|c|}
 		\hline
 		& \multirow{2}{*}{\shortstack{Lower bound }} & \multicolumn{3}{|c|}{Upper bound} \\
 		\cline{3-5}
 		& & \shortstack{Best known \\(Sample complexity) }& \shortstack{Best known \\ (Convergence rate)} & \shortstack{This paper \\(Thm.~\ref{thm:multadditivebounds}) }\\
 		\hline
 		Online single buyer auction & 
 		$\Omega\big(\frac{h}{\epsilon^2}\big)$~~\footnotemark[1] & 
 		$\tilde{O}\big(\frac{h}{\epsilon^2}\big)$ ~~\footnotemark[2] &
 		$\tilde{O}\big(\frac{h}{\epsilon^2}\big)$~~\footnotemark[2] & $\tilde{O}\big(\frac{\pbest}{\epsilon^2}\big)$ \\
 		\hline
 		Online posted pricing & $\Omega\big( \max\{ \frac{h}{\epsilon^2}, \frac{1}{\epsilon^3} \} \big)$~~\footnotemark[1]\footnotemark[4] & - & 
 		$\tilde{O}\big(\frac{h}{\epsilon^3}\big)$~~\footnotemark[2] & $\tilde{O}\big(\frac{\pbest}{\epsilon^3}\big)$ \\
 		\hline
 		Online multi buyer auction & $\Omega(\frac h {\epsilon^2})$~~\footnotemark[1] & 
 		$O(\frac{nh}{\epsilon^3})$ ~~\footnotemark[3] & - & $\tilde{O}\big(\frac{n h}{\epsilon^3}\big)$\\
 		\hline
 	\end{tabular}
 	}
 	\smallskip
 	
 	\noindent
 	\begin{minipage}{\textwidth}
 		{
 			\footnotesize
 			\footnotemark[1]~~{\citet{huang2015making}};~~
 			\footnotemark[2]~~{\citet{blum2004online}};~~
\footnotemark[3]~~{\citet{devanur2016sample,gonczarowski2016efficient,elkind2007des}};~~
\footnotemark[4]~~{\citet{kleinberg2003value}}.
 		}
 	\end{minipage}
 	
 	\smallskip
 	
 	\caption{Number of rounds/samples needed to get a $1-\epsilon$ approximation to the best offline price/mechanism. Sample complexity is for the offline case with i.i.d. samples from an unknown distribution. 
 		Convergence rate is for the online case with a worst case sequence. 
 		Sample complexity is always no larger than the convergence rate. 
 		Lower bounds hold for sample complexity too, except for the online posted pricing problem for which there is no sample complexity version. The additive plus multiplicative regret bounds are converted to convergence rates by dividing the additive error by $\epsilon$. 
 		In the last row, $n$ is the number of buyers. 
 	In the last column, 	$\pbest$ denotes the optimal price. }
 	
 	\vspace{2mm}
 	\label{tab:opt}
	
 	\begin{tabular}{|c|c|c|c|}
 		\hline
 		& \multirow{2}{*}{\shortstack{Lower bound \\(Sample complexity)}} & \multicolumn{2}{|c|}{Upper bound} \\
 		\cline{3-4}
 		& & \shortstack{Best known \\(Sample complexity) }& \shortstack{This paper \\(Thm.~\ref{thm:convergencerates}) }\\
 		\hline
 		Online single buyer auction & 
 		$\Omega\big(\frac{1}{\epsilon^2 \delta}\big)$~~\footnotemark[1] & 
 		$\tilde{O}\big(\frac{1}{\epsilon^2 \delta}\big)$~~\footnotemark[1] & $\tilde{O}\big(\frac{1}{\epsilon^2 \delta}\big)$ \\
 		\hline
 		Online posted pricing & $\Omega\big(\max\{ \frac{1}{\epsilon^2 \delta}, \frac{1}{\epsilon^3} \}\big)$~~\footnotemark[1]\footnotemark[2] & - & $\tilde{O}\big(\frac{1}{\epsilon^4 \delta}\big)$ \\
 		\hline
 		Online multi buyer auction & 	$\Omega\big(\frac{1}{\epsilon^2 \delta}\big)$~~\footnotemark[1] & - & $\tilde{O}\big(\frac{n}{\epsilon^3 \delta}\big)$\\
 		\hline
 	\end{tabular}
 	
 	\smallskip
 	
 	\noindent\par
 	\begin{minipage}{0.65\textwidth}
 		{
 			\footnotesize
 			\footnotemark[1]~~{\citet{huang2015making}};\quad
			\footnotemark[2]~~{\citet{kleinberg2003value}}.
 		}
 	\end{minipage}
 	
 	\smallskip
 	
 	\caption{Sample complexity \& convergence rate w.r.t.\  the opt mechanism/price with market share $\geq\delta$.}
 	\label{tab:opt-delta}
 \end{table}

\vspace{-2mm}

 \subsection{Other related work}
 \label{sec:related-work} 
The online pricing problem, also called \emph{dynamic pricing}, is a much studied topic, across disciplines such as operations research and management science \citep{talluri2006theory}, economics \citep{segal2003optimal}, marketing, and of course computer science. 
The multi-armed bandit approach to pricing is particularly popular. 
  See \citet{den2015dynamic} for a recent survey on various approaches to the problem. 
  
  \citet{kleinberg2003value} consider the online pricing problem, under the assumption that the values are in $[0,1]$,  and considered purely additive factors. 
  They showed that the minimax additive  regret is $\tilde{\Theta}(T^{2/3})$, where $T$ is the number of periods. 
  This is similar in spirit to regret bounds that scale with $h$, since one has to normalize the values 
  so that they are in $[0,1]$.  
   The finer distinction about the magnitude of the best fixed price is absent in this work. 
Recently, \citet{syrgkanis2017sample} also consider the online auction problem, with an emphasis on 
a notion of ``oracle based'' computational efficiency. 
They assume the values are all in $[0,1]$ and do not consider the scaling issue that we do;
 this makes their contribution orthogonal to ours. 

Starting with \citet{Dhangwatnotai2014revenue}, there has been a spate of recent results analyzing the sample complexity of pricing and auction problems. 
\citet{ColeR14} and \citet{devanur2016sample} consider multiple buyer auctions with regular distributions (with unbounded valuations) and give sample complexity bounds that are polynomial in $n$ and $\epsilon^{-1}$, where $n$ is the number of buyers. 
\citet{MR15} consider arbitrary distributions with values bounded by $h$, and gave bounds that are polynomial in $n,h,$ and $\epsilon^{-1}$. 
\citet{roughgarden2016ironing,huang2015making} give further improvements on the single- and multi-buyer versions respectively;
Tables~\ref{tab:opt} and \ref{tab:opt-delta} give a comparison of these results with our bounds, for the problems we consider. 
The dynamic pricing problem has also been studied when there are a given number of copies of the item to sell (limited supply) \citep{agrawal2014bandits,babaioff2015dynamic,badanidiyuru2013bandits,besbes2009dynamic}. 
There are also variants where the seller interacts with the same buyer repeatedly, and the buyer can strategize to influence his utility in the future periods \citep{amin2013learning}. 

\citet{foster}  also consider the multi-scale online learning problem motivated by a model selection problem.
 They consider additive bounds, for the symmetric case, for full information, but not bandit feedback. 
Their regret bounds are not comparable to ours in general; 
our bounds are better for the pricing/auction applications we consider, 
and their bounds are better for their application.


\paragraph{Organization} We start in Section~\ref{sec:learning} by showing regret upper bounds for the multi-scale experts problem with non-negative rewards (Theorem \ref{thm:expert-regret-bound-rewards}). The corresponding upper bounds for the bandit version are in section \ref{sec:bandit} (Theorem~\ref{thm:bandit-regret-bound}). In Section~\ref{sec:auction} we show how the multi-scale regret bounds (Theorems \ref{thm:expert-regret-bound-rewards} and \ref{thm:bandit-regret-bound}) 
  imply the  corresponding bounds for the auction/pricing problems (Theorems \ref{thm:multadditivebounds} and \ref{thm:convergencerates}). Finally, the regret (upper and lower) bounds for the symmetric range are discussed in Section~\ref{sec:symmetric} (Theorems~~\ref{thm:expert-regret-bound-symmetric}, \ref{thm:expert-log-dependency-symmetric}, \ref{thm:bandit-regret-bound-symmetric}, and \ref{thm:bandit-lower-bound-symmetric}).

\section{Full Information Multi-scale Online Learning}

\label{sec:learning}
We consider a variety of online algorithmic problems that are all parts of the \emph{multiscale online learning} framework. We start by defining this framework, in which different actions have different ranges. We exploit this structure and express our results in terms of \emph{action-specific} regret bounds for this general problem. To obtain these results, we use a variant of online mirror descent and propose a multiplicative-weight update style learning algorithm for our problem, termed as \emph{Multi-Scale Multiplicative-Weight (MSMW)} algorithm. 

Next, we investigate the single buyer auction problem (or equivalently the full-information single buyer dynamic pricing problem) as a canonical application, and show how to get multiplicative cum additive approximations here by the help of the multi-scale online learning framework. To show the tightness of our bounds, we compare the convergence rate of our dynamic pricing with the sample complexity of a closely related offline problem, i.e. the near optimal Bayesian revenue maximization from samples~\citep{ColeR14}.

\subsection{The  framework}

\label{sec:multi-scale-nonnegative-results}

Our full-information multi-scale online learning framework is basically the classical learning from expert advice problem. The main difference is that the \emph{range} of rewards of different experts could be different.  More formally, suppose there is a set of actions $\actions$.\footnote{We use the terms experts, arms and actions interchangeably in this paper.} The online problem proceeds in $T$ rounds, where in each round $t \in [T]: $\footnote{
	We use the notation  $[n] :=  \{ 1,2,\ldots,n\} $, for any $n \in \naturals.$}  
\begin{itemize}[itemindent=-2mm]
	\item The adversary picks a reward function $\rewb{t}$, where $\rew{i}{t}$ is the reward of action $i$.
	\item The algorithm picks an action $i_t \in \actions$ simultaneously.
	\item Then the algorithm gets the reward $\rew{i_t}{t}$ and observes the entire reward function $\rewb{t}$.
	
\end{itemize}
The total reward of the algorithm is denoted by 
$$\textstyle \alg \defeq \sum_{t=1}^{T} \rew{i_t}{t} . $$
The standard ``best fixed action''  benchmark is  
$$\textstyle
\opt \defeq\max_{i \in \actions} \sum_{t=1}^{T} \rew{i}{t}.$$	
We further assume that the action set is finite. Without loss of generality, if the action set is of size $k$, we identify $\actions = [k]$. The reward $\rewb{t}$ is such that for all $i \in \actions$, $\rew{i}{t} \in [0,\range_i]$, where $\range_i\in\mathbb{R}_+$  is the \emph{range} of action $i$. 
\subsection{Multi-scale regret bounds}
\label{sec:full-info-regret-simple}
We prove action-specific regret bounds, which we call also \emph{multi-scale regret guarantees}.  Towards this end, we define the following quantities. 
\begin{eqnarray}
\reward{i} & \defeq & \textstyle \sum_{t\in[T]}\rew{i}{t}~,\\[0.8ex]
\regret{i} & \defeq & \reward{i}-\alg~.
\end{eqnarray}
The regret bound w.r.t. action $i$, i.e., an upper bound on $\ex{\regret{i}}$, depends on the range $\range_i$, as well as 
any \emph{prior} distribution $\pib$ over the action set $\actions$;
this way, we can handle countably many actions. 
Let $\range_{\min} = \inf_{i \in \actions} \range_i $ and $\range_{\max} = \sup_{i \in \actions} \range_i $ (if applicable) be the minimum and the maximum range. 
We first state a version of the regret bound which is parameterized by $\epsilon > 0$; 
such bounds are stronger than $\sqrt{T}$ type bounds which are more standard. 
\begin{theorem}[Main Result]
	\label{thm:expert-regret-bound-rewards}
There exists an algorithm for the full-information multi-scale online learning problem that	 takes as input any distribution $\pib$ over $\expset$, the ranges $c_i, ~\forall ~i \in A$ and a parameter $0 < \epsilon \le 1$,  and satisfies:
	\begin{equation}
	\forall i \in \actions:~~\ex{\regret{i}} \leq \epsilon \cdot \reward{i} + O \left( \frac{1}{\epsilon} \log \big( \frac{1}{\epsilon \pi_i} \big) \cdot \range_i  \right)
	\end{equation}
\end{theorem}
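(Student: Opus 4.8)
The plan is to instantiate online stochastic mirror descent with the weighted negative-entropy mirror map $F(x)=\sum_{i\in\actions} c_i\, x_i\ln x_i$ over the simplex $\Delta(\actions)$, treating the reward vectors $\rewb{t}$ as linear gains. Writing $x_t$ for the algorithm's distribution at round $t$ and $y_{t+1}$ for the unprojected dual update defined by $\nabla F(y_{t+1})=\nabla F(x_t)+\eta\,\rewb{t}$, the map $F$ makes this step separable and explicit: since $\nabla F(x)_i=c_i(\ln x_i+1)$, one gets $y_{t+1,i}=x_{t,i}\exp(\eta\,\rew{i}{t}/c_i)$, i.e.\ a multiplicative-weights step on the \emph{range-normalized} gains $\rew{i}{t}/c_i\in[0,1]$, followed by the Bregman projection onto $\actions$ that rescales coordinate $i$ by $\exp(-\lambda^*/c_i)$. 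I would set $\eta=\Theta(\epsilon)$ and initialize $x_1$ from the prior $\pib$ (with a scale-dependent modification described below).

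First I would record the one-shot mirror-descent inequality, obtained from the three-point identity for $D_F$ together with the generalized Pythagorean theorem for the projection: for every comparator $u\in\Delta(\actions)$, $\sum_t\langle \rewb{t},u-x_t\rangle\le \tfrac1\eta D_F(u,x_1)+\tfrac1\eta\sum_t D_F(x_t,y_{t+1})$. Taking $u=e_i$ turns the left side into $\reward{i}-\ex{\alg}$, so it remains to control the two terms on the right.

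The stability term is where non-negativity of rewards pays off, and I would bound it cleanly. Using $\ln(x_{t,i}/y_{t+1,i})=-\eta\,\rew{i}{t}/c_i$ and $e^z-1-z\le z^2$ for $z\in[0,1]$, each summand becomes $D_F(x_t,y_{t+1})=\sum_i c_i x_{t,i}\big(e^{\eta \rew{i}{t}/c_i}-1-\eta \rew{i}{t}/c_i\big)\le \eta^2\sum_i x_{t,i}\,\rew{i}{t}^2/c_i$. The key observation is the elementary inequality $\rew{i}{t}^2/c_i\le \rew{i}{t}$, valid \emph{precisely} because $\rew{i}{t}\in[0,c_i]$; hence $D_F(x_t,y_{t+1})\le \eta^2\langle \rewb{t},x_t\rangle$, and summing gives $\tfrac1\eta\sum_t D_F(x_t,y_{t+1})\le \eta\,\ex{\alg}\le\eta\,\reward{i}$ whenever the regret is nonnegative (otherwise the claimed bound is trivial). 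With $\eta=\epsilon$ this is exactly the promised multiplicative term $\epsilon\,\reward{i}$.

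The main obstacle is the initial-divergence term $\tfrac1\eta D_F(e_i,x_1)$. A direct computation gives $D_F(e_i,x_1)=c_i\ln(1/x_{1,i})-c_i+\sum_j c_j x_{1,j}$; the first part yields the desired $O(\tfrac1\epsilon c_i\log(1/\pi_i))$ once $x_{1,i}\asymp\pi_i$, but the trailing sum $\sum_j c_j x_{1,j}$ is a fixed quantity that can be as large as the \emph{largest} range and is not, by itself, proportional to $c_i$. Resolving this is the crux: rather than initializing at $\pib$ verbatim, I would initialize at a scale-truncated version that caps the mass placed on actions whose range is much larger than $c_i$, and charge the residual contribution of those high-range actions against the multiplicative budget $\epsilon\,\reward{i}$ (exploiting that a far-larger-range action can only tighten the comparison with $i$ when its own cumulative reward, and hence $\reward{i}$, is already large). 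Performing this truncation at probability scale $\epsilon$ is exactly what turns $\log(1/\pi_i)$ into $\log(1/(\epsilon\pi_i))$ and produces the extra $\log(1/\epsilon)$ factor; balancing it against the two clean terms above and setting $\eta=\Theta(\epsilon)$ then yields the stated bound. I expect the bookkeeping here — making the per-action comparison uniform over \emph{all} $i$ with a single fixed initialization — to be the most delicate part of the argument.
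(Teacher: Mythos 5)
Your machinery is exactly the paper's: the same weighted-negative-entropy mirror map, the same one-shot inequality, and the same stability bound using $\rew{i}{t}^2/\range_i \le \rew{i}{t}$ (valid only for non-negative rewards) to get $\tfrac{1}{\eta}\sum_t D_F(x_t,y_{t+1}) \le \eta\,\ex{\alg} \le \eta\,\reward{i}$. You also correctly isolate the crux: with comparator $\mathbf{1}_i$, the initial divergence is $D_F(\mathbf{1}_i,x_1)=\range_i\ln(1/x_{1,i})-\range_i+\sum_j \range_j x_{1,j}$, and the trailing sum $\sum_j \range_j x_{1,j}$ need not be $O(\range_i)$. But your resolution of this crux is a genuine gap, not bookkeeping. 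The truncation you describe (``cap the mass on actions whose range is much larger than $\range_i$'') is action-dependent, whereas the algorithm must commit to a single initialization; and no single initialization can make your inequality deliver the theorem for all $i$ simultaneously. Concretely, take two actions with $\range_1=1$, $\range_2=h$, uniform $\pib$, and fixed $\epsilon$: the bound for $i=1$ forces $h\mu_2=O(1)$, i.e.\ $\mu_2=O(1/h)$, while the bound for $i=2$ forces $\range_2\ln(1/\mu_2)=O(\range_2\log(1/\epsilon))$, i.e.\ $\mu_2=\Omega(1)$ --- contradictory for large $h$. Your fallback, charging the residual against the budget $\epsilon\,\reward{i}$, rests on the claim that a far-larger-range action only matters when ``its own cumulative reward, and hence $\reward{i}$, is already large''; this is a non sequitur: the trailing sum is a reward-independent constant, and the adversary can keep $\reward{i}$ arbitrarily small while the prior mass on large-range actions still contributes in full.

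The missing idea, and the paper's actual fix, is to change the \emph{comparator}, not (only) the initialization: take $\qb=(1-\eta)\mathbf{1}_i+\eta\pib$ as benchmark and $\mub=(1-\eta)\mathbf{1}_{\imin}+\eta\pib$ as initialization, where $\imin$ is a minimum-range action. Since $\qb$ and $\mub$ agree on every coordinate outside $\{i,\imin\}$, every term $\range_j\big(q_j\ln(q_j/\mu_j)-q_j+\mu_j\big)$ with $j\notin\{i,\imin\}$ vanishes identically, so the trailing sum never appears; the two surviving terms are at most $\tfrac{1}{\eta}\range_i\ln\tfrac{1}{\eta\pi_i}$ (since $\mu_i\ge\eta\pi_i$) and $\tfrac{1}{\eta}\range_{\imin}\le\tfrac{1}{\eta}\range_i$. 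Non-negativity of rewards is then used a second time to drop the term $\eta\sum_j\pi_j\reward{j}$ that the smoothed comparator adds to the left-hand side. This mixture at scale $\eta=\Theta(\epsilon)$ is also what places the $\epsilon$ inside the logarithm --- your intuition on that point was right, but it enters through the comparator, not a truncation. Note finally that your route, patched the only way it can be with a fixed initialization (rescaling $\mu_j\propto\pi_j\range_{\min}/\range_j$ to kill the trailing sum, as in the paper's proof of Theorem~\ref{thm:expert-regret-bound-symmetric}), yields a strictly weaker bound with an extra $\log(\range_i/\range_{\min})$ factor, which Theorem~\ref{thm:expert-log-dependency-symmetric} shows is unavoidable for symmetric ranges; the comparator smoothing is precisely what separates the non-negative case from that barrier.
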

Compare this to what you get by using the standard analysis for the experts problem~\citep{arora2012multiplicative}, where the second term in the regret bound  is 
$O \big( \frac{1}{\epsilon} \log (k ) \cdot \range_{\max} \big)$.
Choosing $\pib$ to be the uniform distribution in the above theorem gives $O \big( \frac{1}{\epsilon} \log \big( \frac{k}{\epsilon} \big) \cdot \range_i  \big)$. Also, one can compare the pure-additive version of this bound with the classic pure-additive regret bound $O\big( \range_{\max}\cdot \sqrt{T\log(k)}\big)$ for the experts problem by setting $\epsilon=\sqrt{\frac{\log (kT)}{T}}$ (Corollary~\ref{cor:expert-regret-bound-rewards-additive}).
\begin{corollary}
	\label{cor:expert-regret-bound-rewards-additive}
	There exists an algorithm for the full-information multi-scale online learning problem that takes as input the ranges $c_i, ~\forall ~i \in A$,  and satisfies:
	\begin{equation}
	\forall i \in \actions:~~\ex{\regret{i}} \leq {O} \left( \range_{i}\cdot\sqrt{T\log(kT)} \right)
	\end{equation}
\end{corollary}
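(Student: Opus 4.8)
The plan is to obtain the corollary as a direct specialization of Theorem~\ref{thm:expert-regret-bound-rewards}, choosing the two free parameters $\pib$ and $\epsilon$ appropriately. First I would take $\pib$ to be uniform over the $k$ actions, so that $\pi_i = 1/k$ for every $i$; as the remark following the theorem records, this replaces the additive term by $O\big(\tfrac{1}{\epsilon}\log(\tfrac{k}{\epsilon})\cdot\range_i\big)$, yielding
\begin{equation}
\ex{\regret{i}} \;\leq\; \epsilon\cdot\reward{i} \;+\; O\Big(\tfrac{1}{\epsilon}\log\big(\tfrac{k}{\epsilon}\big)\cdot\range_i\Big).
\end{equation}

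Next I would control the multiplicative term using the range constraint. Since $\rew{i}{t}\in[0,\range_i]$ for every round, summing gives $\reward{i}=\sum_{t\in[T]}\rew{i}{t}\le T\range_i$, hence $\epsilon\cdot\reward{i}\le \epsilon\,T\range_i$. I would then set $\epsilon=\sqrt{\log(kT)/T}$, the value that balances the two terms. This is admissible (i.e.\ $\epsilon\le 1$) whenever $T\ge\log(kT)$; when instead $T<\log(kT)$ we have $\range_i\sqrt{T\log(kT)}>\range_i T\ge\ex{\regret{i}}$ (the last step using $\reward{i}\le T\range_i$ and $\alg\ge 0$), so the claimed bound holds trivially in that regime. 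With the chosen $\epsilon$, the multiplicative term becomes $\epsilon\,T\range_i=\range_i\sqrt{T\log(kT)}$.

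It remains to verify that the additive term is of the same order. Here $\tfrac{1}{\epsilon}=\sqrt{T/\log(kT)}$, while $\log(k/\epsilon)=\log k+\tfrac{1}{2}\log\!\big(T/\log(kT)\big)=O(\log(kT))$, since both $\log k$ and $\log T$ are at most $\log(kT)$. Thus the additive term is $O\big(\sqrt{T/\log(kT)}\cdot\log(kT)\cdot\range_i\big)=O\big(\range_i\sqrt{T\log(kT)}\big)$, and summing the two $O(\range_i\sqrt{T\log(kT)})$ contributions gives the claim. I anticipate no genuine obstacle: the whole argument is a parameter substitution together with the elementary observation that the logarithmic factors collapse to $O(\log(kT))$, the only point demanding care being the admissibility constraint $\epsilon\le 1$ and the small-$T$ regime handled above.
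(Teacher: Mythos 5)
Your proposal is correct and follows essentially the same route as the paper, which obtains the corollary from Theorem~\ref{thm:expert-regret-bound-rewards} with the uniform prior and the choice $\epsilon=\sqrt{\log(kT)/T}$. Your treatment is actually slightly more careful than the paper's one-line derivation, since you explicitly bound $\reward{i}\le T\range_i$, verify the collapse of the logarithmic factors, and handle the admissibility constraint $\epsilon\le 1$ via the small-$T$ regime.
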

\begin{remark}
We should assert that in a multi-scale regret guarantee, we provide a separate regret bound for each action, where the bound on the regret of action $i$ only scales linearly with $\range_{i}$. This type of guarantee should ``not'' be mistaken as a bound on the worst action.
\end{remark}

Here is the map of the rest of this section. In Section~\ref{sec:MSMW} we propose an algorithm that exploits the reward structure, and later in Section~\ref{sec: OMD-proof} we show how this algorithm is an online mirror descent with weighted negative entropy as its mirror map. For reward-only instances, we prove the regret bound in Section~\ref{sec:expert-reward-only}. We finally turn our attention to the single buyer online auction problem in Section~\ref{sec:canonical-pricing}.

\subsection{Multi-Scale Multiplicative-Weight (MSMW) algorithm}
\label{sec:MSMW}
We achieve our regret bound in Theorem~\ref{thm:expert-regret-bound-rewards} by using the MSMW algorithm (Algorithm~\ref{alg:MSMW}). The main idea behind this algorithm is to take into account different ranges for different experts, and therefore:
\begin{enumerate}
\item We normalize the reward of each expert accordingly, i.e.  divide the reward of expert $i$ by its corresponding range $\range_i$;
\item We project the updated weights by performing a \emph{smooth multi-scale projection} into the simplex: the algorithm finds a $\lambda^*$ such that multiplying the current weight of each expert $i$ by $\exp{(-\tfrac{\lambda^*}{c_i})}$ makes a probability distribution over the experts. It then uses this resulting probability distribution for sampling the next expert. 
\end{enumerate}

\begin{algorithm}[h]
\caption{MSMW}
\label{alg:MSMW}
\begin{algorithmic}[1]
	\State{\textbf{input}}~ initial distribution $\mub$ over $\expset$, learning rate $0 < \eta \le 1$.
  \State{\textbf{initialize}}~ $\pb{1}$ such that $\p{i}{1} = \mu_i$ for all $i \in \actions$.  
	\For{$t=1,\dots,T$}
		\State Randomly pick an action drawn from $\pb{t}$, and observe $\rewb{t}$. 
		\State $\forall i \in \actions:~~\w{i}{t+1}\leftarrow\p{i}{t}\cdot \exp(\eta\cdot\frac{\rew{i}{t}}{\range_i})$. 
		\State Find $\lambda^*$ (e.g., binary search) s.t. $\sum_{i \in \actions}\w{i}{t+1}\cdot\exp(-\frac{\lambda^*}{\range_i})=1$.
		\State $\forall i \in \actions:~~\p{i}{t+1}\leftarrow \w{i}{t+1}\cdot\exp(-\frac{\lambda^*}{\range_i}).$
  \EndFor		
\end{algorithmic}
\end{algorithm}

\subsection{Equivalence to online mirror descent  with weighted negative entropy}
\label{sec: OMD-proof}
While it is possible to analyze the regret of the MSMW algorithm (Algorithm~\ref{alg:MSMW}) by using first principles, we take a different approach (the elementary analysis can still be found in the appendix, Section~\ref{appendix:OMD-first-principles}). We show how this algorithm is indeed an instance of the Online Mirror Descent (OMD) algorithm for a particular choice of the \emph{Legendre function} (also known as \emph{the mirror map}). 
\subsubsection{Preliminaries on online mirror descent.} Fix an open convex set $\mathcal{D}$ and its closure $\bar{\mathcal{D}}$, which in our case are $(0,+\infty)^{\expset}$ and $[0,+\infty)^{\expset}$ respectively, and a closed-convex action set $\mathcal{A}\subset \bar{\mathcal{D}}$, which in our case is $\Delta_\expset$, i.e. the set of all probability distributions over experts in $\expset$. At the heart of an OMD algorithm there is a \emph{Legendre} function $F:\bar{\mathcal{D}}\rightarrow \mathbb{R}$, i.e. a strictly convex function that admits continuous first order partial derivatives on $\mathcal{D}$ and $\lim_{x\to\bar{\mathcal{D}}\setminus\mathcal{D}}\lVert \nabla F(x)\rVert=+\infty$, where $\nabla F(.)$ denotes the gradient map of $F$. One can think of OMD as a member of \emph{projected gradient descent} algorithms, where the \emph{gradient update} happens in the \emph{dual space} $\nabla F(\mathcal{D})$ rather than in primal $\mathcal{D}$, and the \emph{projection} is defined by using the \emph{Bregman divergence} associated with $F$ rather than $\ell_2$-distance (see Figure~\ref{fig:omd}).

\begin{figure}[htb]
\begin{center}
\label{fig:omd}
\includegraphics[scale=0.5]{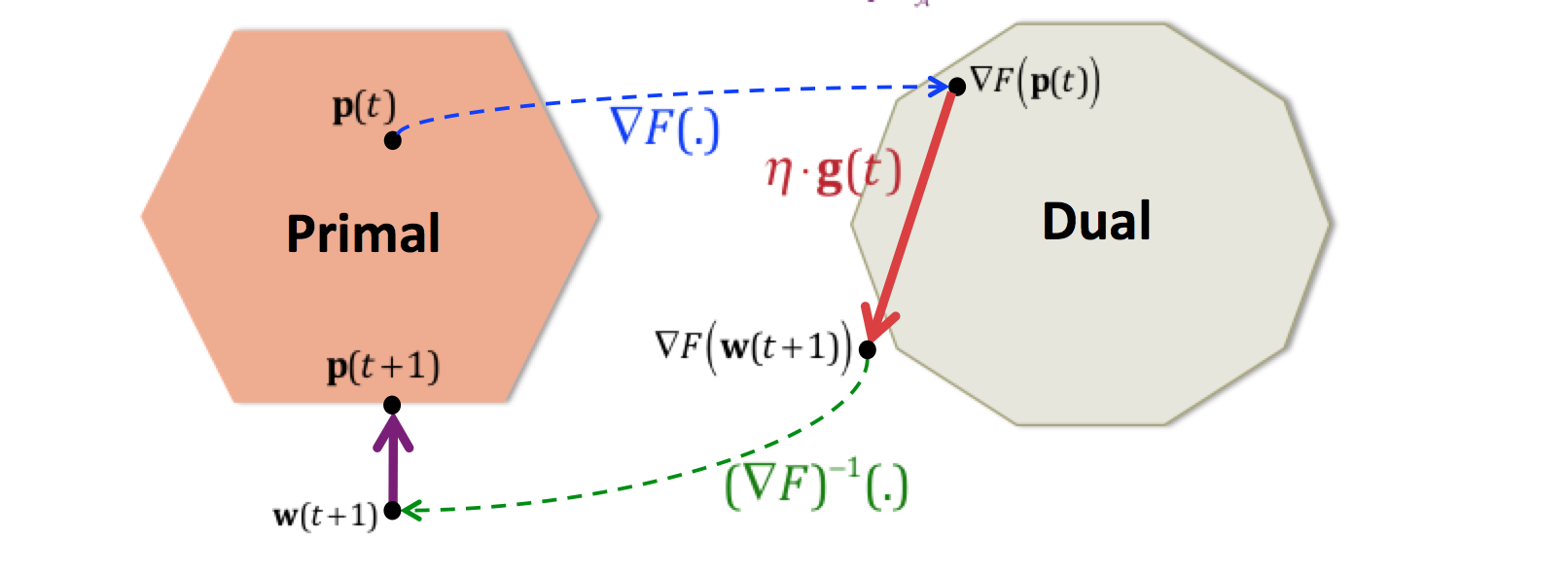}
\caption{Online Mirror Descent (OMD): moving to the dual space by gradient map \emph{(blue)}, gradient update in the dual space \emph{(red)}, applying the inverse gradient map \emph{(green)}, and finally projecting back to the simplex using Bregman projection \emph{(purple)}.}
\end{center}
\end{figure}

\begin{definition}[Bregman Divergence~\citep{bubeck2011introduction}]
\label{def:BD}
 Given a Legendre function $F$ over $\Delta_{\expset}$, the Bregman divergence associated with $F$, denoted as $D_F:\Delta_{\expset}\times\Delta_{\expset}\to\mathbb{R}$, is defined by
\begin{equation*}
D_F(x,y)=F(x)-F(y)-(x-y)^T\nabla F(y)
\end{equation*}
\end{definition}
 
 \begin{definition}[Online Mirror Descent~\citep{bubeck2011introduction}] Suppose $F$ is a Legendre function. At every time $t\in[T]$, the online mirror descent algorithm with Legendre function $F$ selects an expert drawn from distribution $\pb{t}$, and then updates $\wb{t}$ and $\pb{t}$ given rewards $\rewb{t}$  by:
\begin{align}
& \textrm{\emph{Gradient update:}} \notag \\
& \quad\quad 
\label{eq:gradient-update}
\nabla F(\wb{t+1})=\nabla F(\pb{t})+\eta\cdot \rewb{t}\Rightarrow \wb{t+1}=(\nabla F)^{-1}\left(\nabla F(\pb{t})+\eta\cdot \rewb{t}\right)\\
& \textrm{\emph{Bregman projection:}} \notag \\
& \quad\quad 
\label{eq:projection}
\pb{t+1}=\argmin{\mathbf{p}\in\Delta_\expset}{D_F(\mathbf{p},\wb{t+1})}
\end{align}
where $\eta>0$ is called the learning rate of OMD.
 \end{definition}
 
We use the following standard regret bound of OMD (Refer to~\cite{bubeck2011introduction} for a thorough discussion on OMD. For completeness, a proof is also provided in the appendix, Section~\ref{appendix:OMD-proof}).  Roughly speaking, this lemma upper-bounds the regret by the summation of two separate terms: ``local norm'' (the first term), which captures the total deviation between $\pb{t}$ and $\wb{t+1}$, and ``initial divergence'' (the second term), which captures how much the initial distribution is far from the target distribution.
\begin{lemma} 
\label{lem:OMD-regret}
For any learning rate parameter $0<\eta\leq 1$ and any benchmark distribution $\qb$ over $\actions$, the OMD algorithm with Legendre function $F(.)$ admits the following:
\begin{equation}\textstyle
\sum_{t\in[T]} \rewb{t} \cdot \big( \qb - \pb{t} \big)\leq \frac{1}{\eta}\sum_{t\in[T]}D_F(\pb{t},\wb{t+1})+\frac{1}{\eta}D_F(\qb,\pb{1})
\end{equation} 
\end{lemma}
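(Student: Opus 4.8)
The plan is to prove the inequality round-by-round and then telescope. Fix a round $t$. The first move is to use the gradient update~\eqref{eq:gradient-update}, which rearranges to $\eta\cdot\rewb{t}=\nabla F(\wb{t+1})-\nabla F(\pb{t})$, so that the per-round regret term becomes
\begin{equation*}
\eta\cdot\rewb{t}\cdot\big(\qb-\pb{t}\big)=\big(\nabla F(\wb{t+1})-\nabla F(\pb{t})\big)^{T}\big(\qb-\pb{t}\big).
\end{equation*}
The goal is to re-express the right-hand side purely in terms of Bregman divergences, so that summing over $t$ produces a telescoping sum.

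Next I would expand the two resulting inner products, $\nabla F(\wb{t+1})^{T}(\qb-\pb{t})$ and $\nabla F(\pb{t})^{T}(\qb-\pb{t})$, using the definition of $D_F$ (Definition~\ref{def:BD}). A direct computation — adding and subtracting the $F(\qb)$ and $F(\pb{t})$ terms, which cancel — collapses the expression into the three-point identity
\begin{equation*}
\eta\cdot\rewb{t}\cdot\big(\qb-\pb{t}\big)=D_F(\pb{t},\wb{t+1})+D_F(\qb,\pb{t})-D_F(\qb,\wb{t+1}).
\end{equation*}
This step is routine algebra and is the standard way the ``local norm'' term $D_F(\pb{t},\wb{t+1})$ and the ``initial divergence'' naturally separate out.

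The crux of the argument is controlling the last term $-D_F(\qb,\wb{t+1})$, which requires relating the un-projected iterate $\wb{t+1}$ to its projection $\pb{t+1}$. Here I would invoke the generalized Pythagorean inequality for Bregman projections: since $\pb{t+1}$ is the $D_F$-projection of $\wb{t+1}$ onto the convex set $\Delta_\expset$ via~\eqref{eq:projection}, and $\qb\in\Delta_\expset$, the first-order optimality condition $\big(\nabla F(\pb{t+1})-\nabla F(\wb{t+1})\big)^{T}\big(\qb-\pb{t+1}\big)\ge 0$ yields
\begin{equation*}
D_F(\qb,\wb{t+1})\ge D_F(\qb,\pb{t+1})+D_F(\pb{t+1},\wb{t+1})\ge D_F(\qb,\pb{t+1}).
\end{equation*}
This is the main obstacle: it is the one place where the Legendre property of $F$ and the convexity of $\Delta_\expset$ are genuinely used, ensuring the projection is well defined and that the optimality inequality holds. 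Substituting $-D_F(\qb,\wb{t+1})\le -D_F(\qb,\pb{t+1})$ gives the per-round bound $\eta\cdot\rewb{t}\cdot(\qb-\pb{t})\le D_F(\pb{t},\wb{t+1})+D_F(\qb,\pb{t})-D_F(\qb,\pb{t+1})$.

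Finally I would sum over $t\in[T]$. The divergences $D_F(\qb,\pb{t})-D_F(\qb,\pb{t+1})$ telescope to $D_F(\qb,\pb{1})-D_F(\qb,\pb{T+1})$, and dropping the nonnegative term $D_F(\qb,\pb{T+1})\ge 0$ leaves at most $D_F(\qb,\pb{1})$. Dividing through by $\eta$ recovers exactly the claimed inequality, with the surviving sum $\frac{1}{\eta}\sum_{t\in[T]}D_F(\pb{t},\wb{t+1})$ contributing the local-norm term and $\frac{1}{\eta}D_F(\qb,\pb{1})$ the initial-divergence term.
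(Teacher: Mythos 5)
Your proof is correct and follows essentially the same route as the paper's: rewrite the per-round term via the gradient update, apply the three-point identity to get $D_F(\pb{t},\wb{t+1})+D_F(\qb,\pb{t})-D_F(\qb,\wb{t+1})$, bound the last term using the Pythagorean property of the Bregman projection, and telescope. The only cosmetic difference is that you derive the Pythagorean inequality from the first-order optimality condition, whereas the paper cites it directly as a standard property of Bregman divergences (its Lemma~\ref{lem:BD-properties}).
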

 \subsubsection{MSMW algorithm as an OMD}
For our application, we focus on a particular choice of Legendre function that captures different learning rates proportional to $\range_i^{-1}$ for different experts, as we saw earlier in Algorithm~\ref{alg:MSMW}. We start by defining the \emph{weighted negative entropy} function.
\begin{definition}
Given expert-ranges $\{\range_i\}_{i\in\expset}$, the \emph{weighted negative entropy} is defined by
\begin{equation}\textstyle
F(x)=\sum_{i\in\expset} \range_i\cdot x_i\ln(x_i)
\end{equation}
\end{definition}
\begin{corollary}
\label{cor:neg-entropy}
 It is straightforward to see $F(x)=\sum_{i\in\expset} \range_i\cdot x_i\ln(x_i)$ is a non-negative Legendre function over $\mathbb{R}_+^{\expset}$. Moreover, $\nabla F(x)_i=\range_i(1+\ln(x_i))$ and $D_F(x,y)=\sum_{i\in\expset}\range_i\cdot (x_i\ln(\frac{x_i}{y_i})-x_i+y_i)$.
\end{corollary}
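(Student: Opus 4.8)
The plan is to treat Corollary~\ref{cor:neg-entropy} as a direct verification. Since $F$ is separable across coordinates, every claim reduces to a one-variable computation on the scalar map $x\mapsto \range_i\,x\ln x$, which I then sum over $i\in\expset$. Concretely, I would first compute the gradient and Hessian, use them to check the three defining properties of a Legendre function, and finally substitute the gradient into Definition~\ref{def:BD} to read off the claimed divergence formula.

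For the derivatives, differentiating $F(x)=\sum_{i\in\expset}\range_i x_i\ln x_i$ coordinate-wise gives $\nabla F(x)_i=\range_i(1+\ln x_i)$, which is exactly the stated gradient and is continuous on $\mathcal{D}=(0,+\infty)^{\expset}$; this also justifies the gradient-update step in Algorithm~\ref{alg:MSMW}. The Hessian is the diagonal matrix with entries $\partial^2 F/\partial x_i^2=\range_i/x_i$ and all cross-partials zero, so it is positive definite on $\mathcal{D}$ provided each $\range_i>0$; hence $F$ is strictly convex there. The remaining Legendre condition is the boundary blow-up: as $x$ approaches any point of $\bar{\mathcal{D}}\setminus\mathcal{D}$, some coordinate $x_i\to 0^+$, whence $\ln x_i\to-\infty$ and therefore $\|\nabla F(x)\|\to+\infty$. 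Together with continuity and strict convexity, this shows $F$ is Legendre over $\bar{\mathcal{D}}=[0,+\infty)^{\expset}$.

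For the divergence, I substitute $\nabla F(y)_i=\range_i(1+\ln y_i)$ into $D_F(x,y)=F(x)-F(y)-(x-y)^\top\nabla F(y)$ and collect the coefficient of each $\range_i$. The bracketed term is $x_i\ln x_i-y_i\ln y_i-(x_i-y_i)(1+\ln y_i)$, in which the $y_i\ln y_i$ contributions cancel and the $x_i\ln y_i$ term combines with $x_i\ln x_i$ to give $x_i\ln(x_i/y_i)-x_i+y_i$, yielding $D_F(x,y)=\sum_{i\in\expset}\range_i\big(x_i\ln(x_i/y_i)-x_i+y_i\big)$ as claimed. Nonnegativity of the divergence then follows termwise from the elementary inequality $a\ln(a/b)\ge a-b$ for $a,b>0$ (equivalently $\ln u\ge 1-u^{-1}$), together with $\range_i\ge 0$.

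None of the steps is genuinely hard — the corollary is labelled ``straightforward'' for good reason — so the only points deserving attention are bookkeeping ones: checking that the gradient-norm condition is stated over the correct closure $\bar{\mathcal{D}}$, and making explicit the standing assumption $\range_i>0$, which is needed both for strict convexity and for the boundary blow-up (if some $\range_i=0$, that coordinate degenerates and $F$ fails to be strictly convex). I would also flag that ``non-negative'' is most naturally read as the nonnegativity of the associated divergence $D_F$ rather than of $F$ itself, since $x\ln x<0$ on $(0,1)$ forces $F<0$ on the interior of the simplex; if the intended meaning is instead a property of $F$, the statement would have to be restricted to $x_i\ge 1$, so I would clarify this reading in the write-up.
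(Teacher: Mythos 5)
Your proposal is correct and takes essentially the same route as the paper, which offers no explicit proof beyond declaring the claim straightforward: the intended justification is exactly your coordinate-wise computation of the gradient and Hessian, the boundary blow-up of $\|\nabla F\|$, and the substitution of $\nabla F(y)_i=\range_i(1+\ln y_i)$ into Definition~\ref{def:BD}. Your flag about ``non-negative'' is also apt---since $x\ln x<0$ on $(0,1)$, $F$ itself is non-positive on the interior of the simplex, and the property the paper actually relies on (e.g., $D_F(\pb{t+1},\wb{t+1})\ge 0$ in the proof of Lemma~\ref{lem:OMD-regret}) is the non-negativity of the divergence, which you correctly establish termwise from $u\ln u\ge u-1$.
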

We now have the following lemma that shows Algorithm~\ref{alg:MSMW} is indeed an OMD algorithm.
\begin{lemma}
\label{lem:MSMWisOMD}
The MSMW algorithm, i.e. Algorithm~\ref{alg:MSMW}, is equivalent to an OMD algorithm associated with the weighted negative entropy $F(x)=\sum_{i\in\expset}\range_i\cdot x_i\ln(x_i)$ as its Legendre function.
\end{lemma}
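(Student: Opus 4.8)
The plan is to verify that the two update steps of Algorithm~\ref{alg:MSMW} --- the weight update (line 5) and the smooth projection (lines 6--7) --- coincide exactly with the OMD gradient update~\eqref{eq:gradient-update} and the Bregman projection~\eqref{eq:projection}, respectively, specialized to $F(x)=\sum_{i\in\expset}\range_i x_i\ln(x_i)$. For the gradient update I would start from the coordinate-wise gradient map recorded in Corollary~\ref{cor:neg-entropy}, namely $\nabla F(x)_i = \range_i(1+\ln x_i)$. Substituting this into~\eqref{eq:gradient-update} gives, for each $i$, the scalar equation $\range_i(1+\ln \w{i}{t+1}) = \range_i(1+\ln \p{i}{t}) + \eta\cdot\rew{i}{t}$. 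The additive constants cancel, and after dividing by $\range_i$ and exponentiating one obtains $\w{i}{t+1} = \p{i}{t}\cdot\exp(\eta\cdot\rew{i}{t}/\range_i)$, which is precisely line 5. This is the routine half of the argument.

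The more substantive step is the projection. Here I would solve the convex program $\pb{t+1} = \argmin{\mathbf{p}\in\Delta_\expset}{D_F(\mathbf{p},\wb{t+1})}$ explicitly. Using the divergence formula from Corollary~\ref{cor:neg-entropy}, the objective is $\sum_{i}\range_i(p_i\ln(p_i/\w{i}{t+1}) - p_i + \w{i}{t+1})$. I would form the Lagrangian with a single multiplier $\lambda$ enforcing $\sum_i p_i = 1$, temporarily dropping the nonnegativity constraints. Setting the partial derivative in $p_i$ to zero and simplifying yields $\range_i\ln(p_i/\w{i}{t+1}) + \lambda = 0$, i.e. $p_i = \w{i}{t+1}\exp(-\lambda/\range_i)$; the normalization constraint then reads $\sum_i \w{i}{t+1}\exp(-\lambda/\range_i) = 1$, which is exactly the defining equation for $\lambda^*$ in line 6, and the resulting $p_i$ is exactly line 7.

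To make this rigorous I would address three points. First, the nonnegativity constraints are never active: because $F$ is Legendre, $D_F(\cdot,\wb{t+1})$ blows up as any coordinate approaches the boundary of the simplex, so the minimizer lies in the relative interior and the stationarity conditions above are both necessary and sufficient. Second, existence and uniqueness of $\lambda^*$: the map $\lambda\mapsto\sum_i \w{i}{t+1}\exp(-\lambda/\range_i)$ is continuous and strictly decreasing (its derivative $-\sum_i(\w{i}{t+1}/\range_i)\exp(-\lambda/\range_i)$ is negative), ranging from $+\infty$ down to $0$, so it crosses the level $1$ exactly once. Third, uniqueness of the minimizer follows from strict convexity of $D_F(\cdot,\wb{t+1})$ in its first argument. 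I expect the main obstacle to be purely a matter of care rather than difficulty: one must ensure the KKT stationary point genuinely is the constrained minimizer, which the Legendre/interior argument settles cleanly. Once this is in place, the algebra matching the two descriptions is immediate, and the equivalence follows by induction on $t$ since both algorithms are initialized identically.
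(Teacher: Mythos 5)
Your proposal is correct and follows essentially the same route as the paper's proof: verify line 5 against the OMD gradient update via $\nabla F(x)_i = \range_i(1+\ln x_i)$, then solve the Bregman projection by Lagrangian/KKT stationarity to recover lines 6--7. The extra care you take (interiority of the minimizer so KKT suffices, strict monotonicity of $\lambda\mapsto\sum_i \w{i}{t+1}\exp(-\lambda/\range_i)$ giving existence and uniqueness of $\lambda^*$, and the induction on $t$) only tightens details the paper leaves implicit; it does not change the argument.
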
 

\begin{proof}
Look at the gradient update step of OMD, as  in Equation~\eqref{eq:gradient-update}, with Legendre function $F(x)=\sum_{i\in\expset}\range_i\cdot x_i\ln(x_i)$. By using Corollary~\ref{cor:neg-entropy} we have
\begin{align*}
 &\nabla F(\wb{t+1})=\nabla F(\pb{t})+\eta\cdot \rewb{t}\Rightarrow \range_i(1+\ln(\w{i}{t+1}))=\range_i(1+\ln(\p{i}{t}))+\eta\cdot \rew{i}{t}~,
\end{align*}
and therefore, $\w{i}{t+1}=\p{i}{t}\cdot \exp(\eta\cdot\frac{\rew{i}{t}}{\range_i})$. Moreover, for the Bregman projection step we have
\begin{equation}
 \pb{t+1}=\argmin{\mathbf{p}\in\Delta_\expset}{D_F(\mathbf{p},\wb{t+1})}=\argmin{\mathbf{p}\in\Delta_\expset}{\sum_{i\in\expset}{\range_i\cdot(p_i\ln(\frac{p_i}{\w{i}{t+1}})-p_i+\w{i}{t+1})}}
\end{equation}
This is a convex minimization over a convex set. To find a closed form solution, we look at the Lagrangian dual function $\mathcal{L}(\mathbf{p},\lambda)\triangleq\sum_{i\in\expset}{\range_i\cdot(p_i\ln(\frac{p_i}{\w{i}{t+1}})-p_i+\w{i}{t+1})}+\lambda(\sum_{i\in\expset}p_i-1)$ and the Karush-Kuhn-Tucker (KKT) conditions $\nabla \mathcal{L}(\mathbf{p^*},\lambda^*)=\mathbf{0}$. We have
\begin{equation}
\range_i\cdot\ln(\frac{p^*_i}{\w{i}{t+1}})+\lambda^*=0\Rightarrow p^*_i=\w{i}{t+1}\cdot\exp({-\frac{\lambda^*}{\range_i}})
\end{equation}
As $\sum_{i\in\expset}p^*_i=1$, $\lambda^*$ should be unique number s.t. $\sum_{i\in\expset}\w{i}{t+1}\cdot\exp(-\frac{\lambda^*}{\range_i})=1$, and then $\p{i}{t+1}=\w{i}{t+1}\cdot\exp({-\frac{\lambda^*}{\range_i}})$. So, Algorithm~\ref{alg:MSMW} is equivalent to OMD with weighted negative entropy as its Legendre function. 
\end{proof}

By combining Lemma~\ref{lem:OMD-regret}, Corollary~\ref{cor:neg-entropy} and finally Lemma~\ref{lem:MSMWisOMD} we prove the following regret bound for the MSMW algorithm. We encourage the reader to also look at the appendix, Section~\ref{appendix:OMD-first-principles},  for an extra proof using first principles.

\begin{proposition}
\label{prop:expert-regret-bound-general}
For any initial distribution $\mub$ over $\actions$, and any learning rate parameter $0 < \eta \le 1$, and any benchmark distribution $\qb$ over $\expset$, the MSMW algorithm satisfies that:
\[
\sum_{i \in \expset} q_i \cdot \reward{i} - \ex{\alg} \leq \eta \sum_{t \in [T]} \sum_{i\in \actions}\p{i}{t} \frac{(\rew{i}{t})^2}{\range_i} + \frac{1}{\eta} \cdot \sum_{i \in \expset} \range_i \bigg( q_i \ln \big( \frac{q_i}{\mu_i} \big) - q_i + \mu_i \bigg) ~.
\]

\end{proposition}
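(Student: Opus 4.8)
The plan is to combine the three facts established immediately above: Lemma~\ref{lem:MSMWisOMD} identifies MSMW with OMD under the weighted negative entropy $F$, Lemma~\ref{lem:OMD-regret} decomposes the OMD regret into a ``local norm'' term and an ``initial divergence'' term, and Corollary~\ref{cor:neg-entropy} provides closed forms for $\nabla F$ and $D_F$. First I would rewrite the left-hand side. In the full-information model the iterates $\pb{t}$ and reward functions $\rewb{t}$ are deterministic given the reward sequence, and the only randomness in $\alg=\sum_t \rew{i_t}{t}$ enters through the sampling $i_t\sim\pb{t}$; the tower property then gives $\ex{\alg}=\sum_{t}\sum_{i}\p{i}{t}\rew{i}{t}$. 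Hence
\[
\sum_{i\in\expset} q_i\cdot\reward{i}-\ex{\alg}=\sum_{t\in[T]}\rewb{t}\cdot\big(\qb-\pb{t}\big),
\]
which is exactly the quantity bounded by Lemma~\ref{lem:OMD-regret}. Applying that lemma with $F$ the weighted negative entropy reduces the task to bounding its two right-hand terms.

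The initial-divergence term is immediate: substituting the initialization $\pb{1}=\mub$ from Algorithm~\ref{alg:MSMW} into the $D_F$ formula of Corollary~\ref{cor:neg-entropy} gives $\tfrac{1}{\eta}D_F(\qb,\mub)=\tfrac{1}{\eta}\sum_{i\in\expset}\range_i\big(q_i\ln(q_i/\mu_i)-q_i+\mu_i\big)$, which is the second term of the claim verbatim.

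The main work, and the step I expect to be the crux, is the local-norm term $\tfrac{1}{\eta}\sum_t D_F(\pb{t},\wb{t+1})$. Using the gradient-update identity $\w{i}{t+1}=\p{i}{t}\exp(\eta\,\rew{i}{t}/\range_i)$ from the proof of Lemma~\ref{lem:MSMWisOMD} together with the $D_F$ formula, the logarithmic terms collapse since $\ln(\p{i}{t}/\w{i}{t+1})=-\eta\,\rew{i}{t}/\range_i$, and one obtains
\[
D_F(\pb{t},\wb{t+1})=\sum_{i\in\actions}\range_i\,\p{i}{t}\Big(e^{\eta\,\rew{i}{t}/\range_i}-1-\tfrac{\eta\,\rew{i}{t}}{\range_i}\Big).
\]
The key observation is that the multi-scale normalization forces the exponent $x:=\eta\,\rew{i}{t}/\range_i$ into $[0,1]$, because $\rew{i}{t}\in[0,\range_i]$ and $\eta\le 1$; this is precisely the reason the ranges are used as the entropy weights. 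Invoking the elementary inequality $e^x-1-x\le x^2$ on $[0,1]$ then yields $D_F(\pb{t},\wb{t+1})\le\eta^2\sum_{i}\p{i}{t}(\rew{i}{t})^2/\range_i$, and summing over $t$ and dividing by $\eta$ produces the first term of the claim.

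The only points requiring care are verifying the elementary inequality on the whole interval $[0,1]$ (for instance by checking that $g(x)=1+x+x^2-e^x$ satisfies $g(0)=0$ and $g'(x)\ge 0$ on $[0,1]$) and confirming that the bound $\rew{i}{t}\le\range_i$ is what keeps $x$ inside $[0,1]$. Without the range-weighting the exponent would instead scale like $\range_{\max}/\range_i$, the second-order Taylor bound would break, and the clean $\eta^2$ factor would be lost; this is where the tailored mirror map pays off. Combining the two bounds with the rewritten left-hand side gives the stated inequality.
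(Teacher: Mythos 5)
Your proposal is correct and follows the paper's proof essentially step for step: rewrite the left-hand side as $\sum_t \rewb{t}\cdot(\qb-\pb{t})$, invoke Lemma~\ref{lem:OMD-regret} via the identification in Lemma~\ref{lem:MSMWisOMD}, collapse the local-norm term using $\w{i}{t+1}=\p{i}{t}\exp(\eta\rew{i}{t}/\range_i)$ and the inequality $e^x-1-x\le x^2$, and substitute $\pb{1}=\mub$ into the Bregman divergence for the initial term. The only (harmless) difference is that you restrict the exponent to $[0,1]$ using non-negativity of rewards, whereas the paper verifies the same inequality on $[-1,1]$ so that the identical argument can later be reused for the symmetric-range setting of Section~\ref{sec:symmetric}.
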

\begin{proof}[\textbf{of Proposition~\ref{prop:expert-regret-bound-general}}]
	We have:
	
	\begin{equation}
	\sum_{i \in \expset} q_i \cdot \reward{i} - \ex{\alg} = \sum_{t\in[T]} \qb \cdot \rewb{t}-\sum_{t\in[T]}\pb{t}\cdot\rewb{t}=\sum_{t\in[T]} \rewb{t} \cdot \big( \qb - \pb{t} \big)
	\end{equation}
	By applying the regret bound of OMD (Lemma~\ref{lem:OMD-regret}) to upper-bound the RHS, we have 
	\begin{equation}
	\sum_{i \in \expset} q_i \cdot \reward{i} - \ex{\alg}\leq \frac{1}{\eta}\sum_{t\in[T]}D_F(\pb{t},\wb{t+1})+\frac{1}{\eta}D_F(\qb,\pb{1})
	\end{equation}
	To bound the first term in regret, a.k.a \emph{local norm}, we have: 
	\begin{align}
		D_F(\pb{t},\wb{t+1})&=\sum_{i\in\expset}\range_i\cdot(\p{i}{t}\ln(\frac{\p{i}{t}}{\w{i}{t+1}})-\p{i}{t}+\w{i}{t+1})\nonumber\\
		&\label{eq:local-norm}=\sum_{i\in\expset}\range_i\cdot\p{i}{t}(-\eta\cdot\frac{\rew{i}{t}}{\range_i}-1+exp(\eta\cdot\frac{\rew{i}{t}}{\range_i}))
	\end{align}
	Note that $\eta \cdot \frac{\rew{i}{t}}{\range_i} \in [-1, 1]$ because $\rew{i}{t} \in [-\range_i, \range_i]$ and $0 < \eta \le 1$.
	By $\exp(x)-x-1\leq x^2$ for $-1 \le x \le 1$ and that $\eta \rew{i}{t} \in [-\range_i, \range_i]$, the above is upper bounded by $\eta^2\sum_{i\in \actions}\p{i}{t} \frac{(\rew{i}{t})^2}{\range_i}$. We can also rewrite the second term in regret.  In fact, if we set $\pb{1}=\mub$, then
	\[ 
	\frac{1}{\eta}\cdot D_F(\qb,\pb{1})=\frac{1}{\eta} \cdot \sum_{i \in \expset} \range_i \bigg( q_i \ln \big( \frac{q_i}{\mu_i} \big) - q_i + \mu_i \bigg)
	\]
	By summing the upper-bounds $ \eta^2\sum_{i\in \actions}\p{i}{t} \frac{(\rew{i}{t})^2}{\range_i}$ on each term of local norm in (\ref{eq:local-norm}) for $t\in[T]$ and putting all the pieces together, we get the desired bound.
\end{proof}

%

\subsection{Regret analysis for non-negative rewards}
\label{sec:expert-reward-only}
\begingroup
\def\thetheorem{\ref{thm:expert-regret-bound-rewards}}
\begin{theorem}
	
There exists an algorithm for the full-information multi-scale online learning problem that	 takes as input any distribution $\pib$ over $\expset$, the ranges $c_i, ~\forall ~i \in A$ and a parameter $0 < \epsilon \le 1$,  and satisfies:
	\begin{equation}
	\forall i \in \actions:~~\ex{\regret{i}} \leq \epsilon \cdot \reward{i} + O \left( \frac{1}{\epsilon} \log \big( \frac{1}{\epsilon \pi_i} \big) \cdot \range_i  \right)
	\end{equation}
\end{theorem}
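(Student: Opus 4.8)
The plan is to exploit the explicit multiplicative-weights form of Algorithm~\ref{alg:MSMW} directly, rather than feeding a point-mass benchmark into Proposition~\ref{prop:expert-regret-bound-general}. The reason for this choice is a subtlety worth flagging up front: if one applies Lemma~\ref{lem:OMD-regret} with $\qb=e_i$ (the indicator of action $i$), the initial-divergence term becomes $D_F(e_i,\mub)=\range_i\ln(1/\mu_i)-\range_i+\sum_{j}\range_j\mu_j$, and the stray term $\sum_j\range_j\mu_j$ (the $\mub$-expected range) does not scale with $\range_i$; for a single fixed $\mub$ it cannot be made $O(\range_i\log(1/(\eps\pi_i)))$ simultaneously for every $i$ (think of a tiny-range action alongside a huge-range action). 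So I would instead telescope the weights, which yields a clean per-action bound with no such term.

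First I would write the update of Algorithm~\ref{alg:MSMW} as $\p{i}{t+1}=\p{i}{t}\exp\!\big((\eta\,\rew{i}{t}-\lambda^*_t)/\range_i\big)$, take logarithms, and telescope over $t\in[T]$. In the full-information setting $\pb{t}$ and $\lambda^*_t$ are determined by the observed rewards, so with $\p{i}{1}=\mu_i$ and $\Lambda\defeq\sum_{t\in[T]}\lambda^*_t$ this gives the identity $\ln\p{i}{T+1}=\ln\mu_i+(\eta\reward{i}-\Lambda)/\range_i$ for every action $i$. Since $\p{i}{T+1}\le 1$, we obtain the clean per-action inequality $\eta\,\reward{i}\le \range_i\ln(1/\mu_i)+\Lambda$, with no Bregman blow-up.

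The crux is then to bound the cumulative normalization $\Lambda$ by (essentially) the algorithm's own reward $\ex{\alg}=\sum_{t}\sum_i\p{i}{t}\rew{i}{t}$. I would prove the per-round estimate $\lambda^*_t\le \eta\sum_i\p{i}{t}\rew{i}{t}+\eta^2\sum_i\p{i}{t}\,(\rew{i}{t})^2/\range_i$ from the defining equation $\sum_i\p{i}{t}\exp((\eta\rew{i}{t}-\lambda^*_t)/\range_i)=1$, using $e^x\le 1+x+x^2$ for $x\le 1$ (the exponent is at most $\eta\rew{i}{t}/\range_i\le 1$ by non-negativity and $\eta\le 1$). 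Here non-negativity is used twice: to justify the expansion, and --- crucially --- via $(\rew{i}{t})^2/\range_i\le\rew{i}{t}$ to convert the quadratic term into reward, so that summing over $t$ gives $\Lambda\le\eta(1+\eta)\,\ex{\alg}$. This is exactly the mechanism that lets the final bound scale with $\range_i$ rather than $\range_{\max}$, and it is the feature separating the non-negative case from the symmetric one.

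Finally I would combine: the per-action inequality and the normalization bound give $\eta\reward{i}\le \range_i\ln(1/\mu_i)+\eta(1+\eta)\ex{\alg}$; dividing by $\eta$ and rearranging with $\ex{\alg}\le\reward{i}$ (WLOG the regret is positive) yields $\ex{\regret{i}}=\reward{i}-\ex{\alg}\le \eta\,\reward{i}+\tfrac{1}{\eta}\range_i\ln(1/\mu_i)$. Taking $\eta=\eps$ and $\mub=\pib$, so that $\ln(1/\mu_i)=\ln(1/\pi_i)\le\log(1/(\eps\pi_i))$, gives the stated bound. The main obstacle is squarely the per-round estimate on the implicit normalizer $\lambda^*_t$: unlike standard multiplicative weights, where $\lambda^*_t$ is an explicit log-partition controlled by $\ln x\le x-1$, here $\lambda^*_t$ is defined only implicitly and the distinct scales $\range_i$ couple the coordinates inside the exponent, so establishing the second-order bound (and pinning down the right constant) is the technical heart of the argument.
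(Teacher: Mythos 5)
There is a genuine gap, and it is fatal to the route you chose. Your per-round estimate on the normalizer is false. Take $k=2$, $\range_1=1$, $\range_2=C$ huge, $\p{1}{t}=\p{2}{t}=\tfrac12$, and rewards $\rew{1}{t}=1$, $\rew{2}{t}=0$. The defining equation $\tfrac12 e^{\eta-\lambda}+\tfrac12 e^{-\lambda/C}=1$ forces $\lambda^*_t\to\eta$ as $C\to\infty$ (at $\lambda=\eta-\delta$ the left side tends to $\tfrac12 e^{\delta}+\tfrac12>1$), whereas your claimed bound is $\tfrac{\eta}{2}+\tfrac{\eta^2}{2}<\eta$. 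The mechanism is that a huge-range action barely responds to the normalization, since $e^{-\lambda/\range_2}\approx 1$; so $\lambda^*_t$ must be as large as the full exponent $\eta\rew{1}{t}/\range_1$ of the rewarded small-range action, not its probability-weighted average. With $k-1$ huge-range actions the ratio of $\lambda^*_t$ to $\eta\sum_i\p{i}{t}\rew{i}{t}$ grows like $k$, so no constant-factor repair of the lemma exists, and your conclusion $\Lambda\le\eta(1+\eta)\ex{\alg}$ fails.

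Worse, the statement you arrive at --- the theorem's bound for MSMW initialized at $\mub=\pib$ --- is itself false, so no analysis can rescue it. Run the same two-action instance from the uniform initialization for $T$ rounds with rewards $(1,0)$ every round, choosing $C\ge \eta T$. Since $\lambda^*_t<\eta$ in every round (at $\lambda=\eta$ the normalization sum is $\p{1}{t}+\p{2}{t}e^{-\eta/C}<1$), we get $\p{2}{t}\ge\tfrac12 e^{-\eta T/C}\ge \tfrac{1}{2e}$ throughout, hence $\ex{\regret{1}}\ge T/(2e)$, which violates $\eps\cdot\reward{1}+O\big(\tfrac1\eps\log\tfrac{1}{\eps\pi_1}\big)\cdot\range_1$ for small $\eps$ and large $T$. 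This shows that the ``stray term'' $\sum_j\range_j\mu_j$ you flagged is not an artifact of plugging $\qb=\mathbf{1}_i$ into Lemma~\ref{lem:OMD-regret}: MSMW drains probability mass off huge-range actions extremely slowly, so an initialization placing constant mass on them genuinely incurs linear regret against small-range actions. That is exactly why the paper's proof of Theorem~\ref{thm:expert-regret-bound-rewards} initializes at $\mub=(1-\eta)\cdot\mathbf{1}_{\imin}+\eta\cdot\pib$ and compares against the mixture benchmark $\qb=(1-\eta)\cdot\mathbf{1}_{i}+\eta\cdot\pib$: all coordinates $j\notin\{i,\imin\}$ cancel in the divergence, leaving only $O\big(\range_i\ln\tfrac{1}{\eta\pi_i}\big)$ and $\range_{\imin}\le\range_i$. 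Your telescoping identity $\eta\reward{i}\le\range_i\ln(1/\mu_i)+\Lambda$ is correct, but any valid completion must re-introduce this $\imin$-biased initialization and a correspondingly different control of $\Lambda$, at which point it reduces to the paper's argument in different notation.
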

\addtocounter{theorem}{-1}
\endgroup
\begin{proof}[\textbf{of Theorem~\ref{thm:expert-regret-bound-rewards}}]
Suppose $\imin$ is an action with the minimum $\range_{i}$.
Let $\mub = (1 - \eta) \cdot \mathbf{1}_{\imin} + \eta \cdot \pib$, and let $\qb = (1 - \eta) \cdot \mathbf{1}_{i} + \eta \cdot \pib$ in Proposition~\ref{prop:expert-regret-bound-general}.
If $i \ne \imin$, we get that (note that $\mu_j = q_j$ for any $j \ne i, \imin$):
\begin{align*}
(1 - \eta) \cdot \reward{i} + \eta \cdot \sum_{j \in \actions} \pi_j \cdot \reward{j} - \ex{\alg }
& \leq 
\eta\cdot\ex{\alg} + \frac{1}{\eta} \cdot \range_i \cdot \bigg( q_i \ln \big( \frac{q_i}{\mu_i} \big) - q_i + \mu_i \bigg) \\
& \quad\quad
+ \frac{1}{\eta} \cdot \range_{\imin} \cdot \bigg( q_{\imin} \ln \big( \frac{q_{\imin}}{\mu_{\imin}} \big) - q_{\imin} + \mu_{\imin} \bigg)
\end{align*}

By $1 \ge q_i > \mu_i \ge \eta \pi_i$, the second term on the RHS is upper bounded as:
\[
\frac{1}{\eta} \cdot \range_i \cdot \bigg( q_i \ln \big( \frac{q_i}{\mu_i} \big) - q_i + \mu_i \bigg)
\le 
\frac{1}{\eta} \cdot \range_i \cdot \ln \big( \frac{1}{\eta \pi_i} \big)
\]

Similarly, by $ 1 \ge \mu_{\imin} > q_{\imin} \ge 0$, the third term on the RHS is upper bounded as
\[
\frac{1}{\eta} \cdot \range_{\imin} \cdot \bigg( q_{\imin} \ln \big( \frac{q_{\imin}}{\mu_{\imin}} \big) - q_{\imin} + \mu_{\imin} \bigg) 
\le
\frac{1}{\eta} \cdot \range_{\imin} 
\le
\frac{1}{\eta} \cdot \range_{i} 
\]

Finally, note that $\reward{j} \ge 0$ for all $j \in \actions$ in reward-only instances.
So the LHS is lower bounded by 
\[
(1 - \eta) \cdot \reward{i} - \ex{\alg}
=
(1 - \eta) \cdot \regret{i} - \eta \cdot \ex{\alg}.
\]

Putting all this  together, we get that 
\[
\ex{\regret{i}}
\leq
\frac{2\eta}{1 - \eta} \cdot \ex{\alg} + O \bigg( \frac{1}{\eta}\ln \big( \frac{1}{\eta \pi_i} \big) \cdot \range_i \bigg) 
\le
3\eta \cdot \ex{\alg} + O \bigg( \frac{1}{\eta}\ln \big( \frac{1}{\eta \pi_i} \big) \cdot \range_i \bigg) 
~.
\]

The theorem then follows by choosing $\eta = \frac{\epsilon}{3}$ and rearranging terms.
\end{proof}

\subsection{A canonical application: online single buyer auction}
\label{sec:canonical-pricing}
\paragraph{The setup.} The simple auction design problem that we consider is as follows. There is a seller with infinite identical copies of an item. Buyers arrive over time. At each round, the seller picks a \emph{price} and the arriving buyer reports her \emph{value}. If the value is no less than the price, the trade happens; money goes to the seller and the copy of the item goes to the arriving buyer. The goal is to maximize the revenue of the seller.

Formally, we look at this problem as an instance of the full information multi-scale online learning framework; The action set is  $\actions = [1,h]$.~\footnote{Here, we allow an infinite action set. Later, we show how to discretize to get around this issue.}  The reward function is such that at round $t$ the adversary (i.e. the arriving buyer) picks a value $v(t) \in [1,h]$ and for any price $p\in \actions$ picked by the seller (i.e. the algorithm), the reward is $\rew{p}{t} := p \cdot \mathbf{1}(v(t) \geq p)$. This is a full information setting, because the value $v(t)$ is revealed to the algorithm after each round $t$. 

\paragraph{The additive/multiplicative approximation.} In order to obtain a $(1 -\epsilon)$-approximation of the optimal revenue, i.e. the revenue of the best fixed price $p^*$ in hindsight,  it suffices to consider prices of the form $(1 + \epsilon)^j$ for $0 \le j \le \lfloor \log_{1+\epsilon}{h} \rfloor = O(\frac{\log{h}}{\epsilon})$. As a result, we reduce the online single buyer auction problem to the multi-scale online learning with full information and finite actions. The action set has $k = O(\frac{\log{h}}{\epsilon})$ actions whose ranges form a geometric sequence $(1 + \epsilon)^j$, $0 \le j < k$.

Recall the definition of $\opt$ in Section~\ref{sec:multi-scale-nonnegative-results}, and let $\pbest$ be the best fixed price in hindsight, which is the price that achieves $\opt$. We now show how to get a multiplicative cum additive approximation for this problem with $\opt$ as the benchmark, \`a la \citet{blum2004online,blum2005near}.
The main improvement over these results is that the additive term scales with the best price rather than $h$. 
\begin{theorem}
	\label{thm:multadditivebounds-simple}
There is  an algorithm for the online single buyer auction problem that takes as input a parameter $\epsilon >0$, and satsify $\alg \geq (1-\epsilon) \opt - O(E)$, where:
	\[ 
	E = \frac{\pbest \log (\myfrac{\log h}{\epsilon})}{\epsilon}~.
	\]
	Also, even if $h$ is not known up front, there is an (slightly modified) algorithm that achieves a similar approximation guarantee for online single buyer auction with:
	\[ 
	E = \frac{\pbest \log (\myfrac{\pbest}{\epsilon})}{\epsilon}~.
	\]
\end{theorem}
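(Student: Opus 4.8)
The plan is to reduce the auction to the multi-scale experts problem and invoke Theorem~\ref{thm:expert-regret-bound-rewards} at the \emph{single} action corresponding to a near-optimal price, so that the additive term inherits the range of that price rather than $h$.

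First I would set up the discretization. Let $\pbest$ be the continuous optimum and let $p' = (1+\epsilon)^{j'}$ be the largest power of $(1+\epsilon)$ with $p' \le \pbest$, so that $\pbest/(1+\epsilon) < p' \le \pbest$. Every buyer who purchases at $\pbest$ also purchases at the lower price $p'$, each such sale now yielding revenue $p' \ge \pbest/(1+\epsilon)$, while additional buyers with value in $[p',\pbest)$ only help; hence $\reward{p'} \ge \frac{1}{1+\epsilon}\opt \ge (1-\epsilon)\opt$. So restricting to the geometric grid costs only a $(1-\epsilon)$ factor. The crucial observation is that the range of action $p'$ is exactly $\range_{p'} = p' \le \pbest$, which is precisely why the additive term will scale with $\pbest$ instead of $h$.

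For the known-$h$ case I would run MSMW on the finite grid $\{(1+\epsilon)^j : 0 \le j \le \lfloor \log_{1+\epsilon} h \rfloor\}$ of size $k = O(\log h / \epsilon)$ with the uniform prior $\pi_i = 1/k$. Applying Theorem~\ref{thm:expert-regret-bound-rewards} at $i = p'$ gives $\reward{p'} - \ex{\alg} \le \epsilon\,\reward{p'} + O\big(\frac{1}{\epsilon}\log(\frac{k}{\epsilon})\,\range_{p'}\big)$. Since $\log(k/\epsilon) = O(\log(\log h / \epsilon))$ and $\range_{p'} \le \pbest$, rearranging yields $\ex{\alg} \ge (1-\epsilon)\reward{p'} - O(E)$ with $E = \frac{\pbest \log(\log h / \epsilon)}{\epsilon}$, and combining with $\reward{p'} \ge (1-\epsilon)\opt$ gives $\ex{\alg} \ge (1-\epsilon)^2\opt - O(E) \ge (1-2\epsilon)\opt - O(E)$; rescaling $\epsilon$ by a constant finishes this part. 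For the unknown-$h$ case the finite uniform prior is unavailable, so I would run MSMW over the countable action set $\{(1+\epsilon)^j : j \ge 0\}$ (materializing actions lazily as larger values arrive) with a geometrically decaying prior $\pi_j \propto (1+\epsilon)^{-j}$, normalized via $\sum_{j\ge 0}(1+\epsilon)^{-j} = \frac{1+\epsilon}{\epsilon}$. The point of this choice is that the prior decays at the same geometric rate as the ranges grow: for $p' = (1+\epsilon)^{j'}$ one computes $\log(1/\pi_{j'}) = j'\log(1+\epsilon) + \log\frac{1+\epsilon}{\epsilon} = \log p' + \log(1/\epsilon) + O(1)$, so $\log\big(\frac{1}{\epsilon\pi_{j'}}\big) = O(\log(\pbest/\epsilon))$. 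Theorem~\ref{thm:expert-regret-bound-rewards} at $p'$ then produces an additive term $O\big(\frac{1}{\epsilon}\log(\pbest/\epsilon)\cdot\range_{p'}\big) = O\big(\frac{\pbest \log(\pbest/\epsilon)}{\epsilon}\big)$, and the identical rearrangement delivers $E = \frac{\pbest \log(\pbest/\epsilon)}{\epsilon}$.

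The main obstacle is the design of the prior in the unknown-$h$ case: one must balance the countably many growing ranges against the prior so that the product $\range_i \cdot \log(1/(\epsilon\pi_i))$, evaluated at the near-optimal action, stays proportional to $\pbest \log(\pbest/\epsilon)$ rather than diverging. Matching a geometric prior to the geometric ranges is exactly what achieves this; any heavier tail (e.g.\ polynomial decay) would inflate $\log(1/\pi_{j'})$ by more than a constant factor, and any lighter one would fail to normalize the ranges' contribution. A minor additional check is that MSMW remains well defined over a countably infinite action set, which holds because only finitely many prices ever receive nonzero reward and the projection step (the search for $\lambda^*$ with $\sum_i \w{i}{t+1}\exp(-\lambda^*/\range_i)=1$) is unaffected by the tail of negligibly weighted, never-sold prices.
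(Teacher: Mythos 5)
Your treatment of the known-$h$ case is correct and is exactly the paper's argument: discretize to the grid $\{(1+\epsilon)^j\}$, note the near-optimal grid price $p'$ has range $\range_{p'}=p'\le\pbest$, and apply Theorem~\ref{thm:expert-regret-bound-rewards} with the uniform prior over the $k=O(\myfrac{\log h}{\epsilon})$ actions. The unknown-$h$ case, however, has a genuine gap, and it sits precisely where you wave it off as ``a minor additional check.'' Your claim that the countable-action MSMW can be simulated by lazily materializing prices because the projection ``is unaffected by the tail of negligibly weighted, never-sold prices'' is false: under your prior $\pi_j\propto(1+\epsilon)^{-j}$ the never-sold tail above the current maximum value carries probability mass roughly $1/((1+\epsilon)v)$ at prior scale (about $\tfrac{1}{1+\epsilon}$ at the start), so it is not negligible; the value of $\lambda^*$, and hence every sampling probability, depends on this infinite tail. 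An algorithm that renormalizes only over the materialized prices is a \emph{different} algorithm from countable MSMW, and Theorem~\ref{thm:expert-regret-bound-rewards} does not directly apply to it. The paper is explicit that the countable-action algorithm ``cannot be implemented directly,'' and this is exactly why it introduces Algorithm~\ref{alg:single-auction-unknown} together with a round-by-round revenue comparison between the implementable finite algorithm and the hypothetical countable one.

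Once you are forced into that comparison argument, your choice of prior breaks the bound. The comparison charges, for each price $p$, a one-time loss of at most $p\cdot\pi_p$ (the first round a value $\ge p$ appears, the hypothetical algorithm plays $p$ with probability at most $\pi_p$). With the paper's prior $\pi_p\propto 1/p^2$ this sums to $\sum_p \beta/p = O(1)$ over all prices. With your prior $\pi_p\propto 1/p$ each term is $p\cdot\pi_p=\Theta(\epsilon)$, and summing over the $\Theta(\myfrac{\log h}{\epsilon})$ prices that can become relevant gives a loss of $\Theta(\log h)$ --- which is not $O\big(\myfrac{\pbest\log(\myfrac{\pbest}{\epsilon})}{\epsilon}\big)$ when $h$ is huge and $\pbest$ is small (say $\pbest=1$ and $\log h \gg \epsilon^{-2}$), i.e., exactly the regime the unknown-$h$ statement is designed for. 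Your side remark that any lighter tail ``would fail to normalize the ranges' contribution'' is also backwards: the $1/p^2$ prior still gives $\log\big(\tfrac{1}{\epsilon\pi_{p'}}\big) = 2\log p' + O(\log\myfrac{1}{\epsilon}) = O(\log(\myfrac{\pbest}{\epsilon}))$ in the regret term, so it loses nothing there, and it is the decay rate that makes $\sum_p p\,\pi_p$ convergent and the implementable algorithm's loss $O(1)$. The fix is to adopt the paper's prior and its two-algorithm comparison; as written, your proof of the second bound does not go through.
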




\begin{proof}[\textbf{of Theorem~\ref{thm:multadditivebounds-simple}}]\\
\noindent\emph{\large{[Part 1: known $h$]}}~Recall the above formulation of the problem as an online learning problem with full information.
The proof then follows by Theorem~\ref{thm:expert-regret-bound-rewards}, letting $\pib$ to be the uniform distribution over the $k = O(\myfrac{\log{h}}{\epsilon})$ actions, i.e., discretized prices. 

\noindent\emph{\large{[Part 2: unknown $h$]}}~When $h$ is not known up front, we consider a variant of our algorithm (Algorithm~\ref{alg:single-auction-unknown}) that picks the next price in each round $t$  from the set of \emph{relevant prices} (denoted by $\mathcal{P}$), updates this set if necessary, and then updates the weights of prices in this set as in Algorithm~\ref{alg:MSMW}. The main new idea here is to update the set of prices $\mathcal{P}$ so that it only includes prices that are at most the highest value we have seen so far (let the highest seen value be $1$ at the beginning). Now, for the sake of analysis, consider a hypothetical algorithm (called $\texttt{ALG}^{H}$) that considers a countably infinite action space comprising all prices of the form $(1+\epsilon)^j$, for $j \ge 0$. We first show this hypothetical algorithm $\texttt{ALG}^{H}$ satisfies the required approximation guarantee in Theorem~\ref{thm:multadditivebounds-simple}. We then show the expected revenue of Algorithm~\ref{alg:single-auction-unknown} is at least the expected revenue of $\texttt{ALG}^{H}$ (minus a constant that is negligible in our bound), and hence the final proof.

The proof of the regret bound of Theorem~\ref{thm:expert-regret-bound-rewards} works when we have countably many actions (although we cannot implement such algorithms directly). Now, consider simulating $\texttt{ALG}^{H}$ and let the prior distribution $\pib$ be such that for any price $p = (1 + \epsilon)^j$, $\pi_p = \epsilon(\eps+2) (1 + \epsilon)^{-2(j+1)} = \frac{\epsilon(\eps+2)}{(1 + \epsilon)^2} \cdot \frac{1}{p^2}$ (this choice will become more clear later in the proof; in short we need $\pi_p$ to be proportional to $\tfrac{1}{p^2}$). The approximation guarantee in Theorem~\ref{thm:multadditivebounds-simple} then follows by Theorem~\ref{thm:expert-regret-bound-rewards}. We now argue the followings:
\begin{itemize}
\item For any round $t$, unless the value in that round is a new highest value, Algorithm~\ref{alg:single-auction-unknown} gets weakly higher revenue than $\texttt{ALG}^{H}$. This is because the probability that Algorithm~\ref{alg:single-auction-unknown} plays any relevant price in $\mathcal{P}$ (that has a non-zero gain in this round) is weakly higher than that in $\texttt{ALG}^{H}$.
\item For any price $p = (1+\eps)^j$, consider the first time a value at least $p$ shows up. Algorithm~\ref{alg:single-auction-unknown} suffers a loss of at most $p\cdot \pi_p$ compared to $\texttt{ALG}^{H}$, due to $\texttt{ALG}^{H}$'s probability of playing $p$ in that round, where $\pi_p$ is the probability of playing $p$ in the initial distribution. This is because the probability that $\texttt{ALG}^{H}$ plays $p$ in this round is at most $\pi_p$ as $p$ has not got any positive gains before this round.

\item Then, by choosing $\pi_p$ to be inversely proportional to $p^2$, we can show that Algorithm~\ref{alg:single-auction-unknown} has an additive loss of $\sum_{p} \tfrac{\beta}{p}=\tfrac{\eps+2}{\eps+1}=O(1)$ compared to$\texttt{ALG}^{H}$, where $\beta=\left(\sum_{p}\tfrac{1}{p^2}\right)^{-1}=\tfrac{\eps(2+\eps)}{(1+\eps)^2}$ is the normalization constant of the initial distribution $\pmb{\pi}$. This finishes the proof.
\end{itemize} 
\end{proof}
%
%
%

\begin{algorithm}[h]
\caption{Online single buyer auction (for unknown $h$)}
\label{alg:single-auction-unknown}
\begin{algorithmic}[1]
	\State{\textbf{input}} learning rate $0 < \eta \le 1$, price discretization parameter $0<\epsilon\le 1$.
  \State{\textbf{initialize}}~ the set of relevant prices $\mathcal{P}=\{1\}$. Let $\alpha_1(1)=1$.
	\For{$t=1,\dots,T$}
		\State Randomly pick a price in $\mathcal{P}$ drawn from $\pmb{\alpha}(t)$, and observe $\rewb{t}$. 
		\State Update $\mathcal{P}$ to be all the prices $(1+\epsilon)^j$ that are at most the highest value until time $t$.
		\State $\forall p \in \mathcal{P}:~~\w{p}{t+1}\leftarrow\alpha_{p}({t})\cdot \exp(\eta\cdot\frac{\rew{p}{t}}{p})$. 
		\State Find $\lambda^*$ (e.g., binary search) s.t. $\sum_{p \in \mathcal{P}}\w{p}{t+1}\cdot\exp(-\frac{\lambda^*}{p})=1$.
		\State $\forall p \in \mathcal{P}:~~\alpha_{p}({t+1})\leftarrow \w{p}{t+1}\cdot\exp(-\frac{\lambda^*}{p}).$
  \EndFor		
\end{algorithmic}
\end{algorithm}

Bounds on the sample complexity of auctions for single buyer problem~\citep{HuangMR15} imply that the first bound in this theorem is tight up to $\log$ factors: the lower bound is $h\epsilon^{-1}$ in an instance where $\pbest$ is actually equal to $h$.  Also, the best upper bound known is by \citet{blum2004online,blum2005near}, which is 
\[E=\frac{h \log (\myfrac{1}{\epsilon})}{\epsilon}~.\]
We conclude that Theorem~\ref{thm:multadditivebounds-simple} generalizes the known tight sample complexity upper-bound for the offline single buyer Bayesian revenue maximization to the online adversarial setting.

\section{Multi-Scale Online Learning with Bandit Feedback}
\label{sec:bandit}
In this section, we look at the bandit feedback version of multi-scale online learning framework proposed in Section~\ref{sec:multi-scale-nonnegative-results}. Essentially, the only difference here is that after the algorithm picks an arm $i_t$ at time $t$, it only observes the obtained reward, i.e. $\rew{i_t}{t}$, and \emph{does not} observe the entire reward function $\rewb{t}$.

Inspired by the online stochastic mirror descent algorithm~\citep{bubeck2011introduction} we introduce \emph{Bandit-MSMW} algorithm. Our algorithm follows the standard bandit route of using unbiased estimators for the rewards in a full information strategy (in this case MSMW). We also mix the MSMW distribution with an extra uniform exploration, and use a tailored initial distribution to obtain the desired mutli-scale regret bounds.

\subsection{Bandit multi-scale regret bounds}

For the bandit version, we can get similar regret guarantees as in Section~\ref{sec:full-info-regret-simple} for the full-information variant, but only for the {\em best} action. 
If we require the regret bound to hold for all actions, then we can only get a weaker bound, where the second term has $\epsilon^{-2}$ instead of $\epsilon^{-1}$. 
The  difference between the bounds for the bandit and the full information setting is essentially  a factor of $k$, which is unavoidable. 
\begin{theorem}	\label{thm:bandit-regret-bound}
There exists an algorithm for the online multi-scale problem with bandit feedback that takes as input the ranges  $c_i, ~\forall ~i \in A$,  and a parameter $0 < \epsilon \le 1$,  and satisfies, 
\begin{itemize}
	\item for $i^* = \arg \max_{i \in \actions} \reward{i}$, 
	\begin{equation}\textstyle
	\ex{\regret{i^*}}\leq \epsilon\cdot\reward {i^*}+O \left( \frac{1}{\epsilon} k \log \big( \frac{k}{\epsilon} \big) \cdot \range_{i^*} \right).
	\end{equation}
	\item  for all $i\in \actions$, 
	\begin{equation}\textstyle
	\ex{\regret{i}}\leq \epsilon\cdot\reward{i} + O \left( \frac{1}{\epsilon^2} k \log \big( \frac{k}{\epsilon} \big) \cdot \range_{i} \right).
	\end{equation}
\end{itemize}
\end{theorem}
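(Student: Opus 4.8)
The plan is to lift the full-information \textbf{MSMW} guarantee (Proposition~\ref{prop:expert-regret-bound-general}) to the bandit setting by the usual device of feeding \emph{importance-weighted} reward estimates into MSMW while sampling from a $\gamma$-smoothed version of its distribution. Concretely, in Bandit-MSMW (Algorithm~\ref{alg:Bandit-MSMW}) I would keep an internal MSMW distribution $\pb{t}$ driven by the unbiased estimates $\trew{i}{t} = \rew{i}{t}\,\ind{i_t = i}/\hat{p}_i(t)$, but actually play $i_t$ from $\hat{p}_i(t) = (1-\gamma)\p{i}{t} + \gamma/k$. This smoothing gives unbiasedness $\ex{\trew{i}{t}\mid\text{past}} = \rew{i}{t}$, the two lower bounds $\hat{p}_i(t)\ge(1-\gamma)\p{i}{t}$ and $\hat{p}_i(t)\ge \gamma/k$, and — once I impose $\eta\le\gamma/k$ — the range condition $\eta\,\trew{i}{t}/\range_i\in[0,1]$ that the local-norm step of the Proposition needs on every sample path (the estimates can be as large as $\range_i k/\gamma$, so this constraint is not automatic). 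As in the proof of Theorem~\ref{thm:expert-regret-bound-rewards}, I would take a \emph{tailored} initial distribution $\mub$ concentrated on $\imin$ plus an $\eta$-fraction of $\pib$, and the matching benchmark $\qb = (1-\eta)\mathbf{1}_i + \eta\pib$ (with $\pib$ uniform), so that the divergence term $\tfrac1\eta D_F(\qb,\mub)$ scales as $\tfrac1\eta\range_i\log(k/\eta)$, i.e.\ with the \emph{single} action's range $\range_i$.

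Applying Proposition~\ref{prop:expert-regret-bound-general} pathwise to $\trewb{t}$ and taking expectations (using unbiasedness and that $\pb{t}$ is measurable w.r.t.\ the past), together with $\ex{\alg}\ge(1-\gamma)\ex{\sum_t\sum_j\p{j}{t}\rew{j}{t}}$ to pass from the internal MSMW reward to the true reward, yields $(1-\eta)\reward{i} - \tfrac{1}{1-\gamma}\ex{\alg}\le \ex{\mathrm{LN}} + \tfrac1\eta D_F(\qb,\mub)$, where $\mathrm{LN}=\eta\sum_t\sum_i\p{i}{t}\,\trew{i}{t}^2/\range_i$ is the local norm; rearranging turns this into a bound on $\ex{\regret{i}}$ up to an extra $\Theta(\gamma)\ex{\alg}$ smoothing loss. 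The crux is to control $\ex{\mathrm{LN}}$, and here I would use non-negativity through $\rew{i}{t}^2/\range_i\le\rew{i}{t}$ and bound it in \emph{two} different ways. Bound (A): using $\p{i}{t}/\hat{p}_i(t)\le \tfrac1{1-\gamma}$ gives $\ex{\mathrm{LN}}\le \tfrac{\eta}{1-\gamma}\sum_i\reward{i}\le \tfrac{\eta k}{1-\gamma}\reward{\ibest}$, since $\reward{\ibest}=\max_i\reward{i}$. Bound (B): using instead $\p{i}{t}/\hat{p}_i(t)\le \p{i}{t}k/\gamma$ gives $\ex{\mathrm{LN}}\le \tfrac{\eta k}{\gamma}\ex{\sum_t\sum_i\p{i}{t}\rew{i}{t}}\le \tfrac{\eta k}{\gamma(1-\gamma)}\ex{\alg}$, i.e.\ in terms of the algorithm's \emph{own} reward.

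For the best-action inequality I would invoke Bound (A) with $\gamma=\Theta(\eps)$ and $\eta=\Theta(\eps/k)$, so the local norm is at most $\eps\,\reward{\ibest}$ and the divergence term is $O\!\big(\tfrac{k}{\eps}\range_{\ibest}\log(k/\eps)\big)$, which (with the $\Theta(\gamma)\ex{\alg}\le\Theta(\eps)\reward{\ibest}$ smoothing loss) produces exactly $\eps\,\reward{\ibest}+O\!\big(\tfrac1\eps k\log(k/\eps)\range_{\ibest}\big)$. For the all-action inequality I would instead use Bound (B) with $\gamma=\Theta(\eps)$ and the \emph{smaller} rate $\eta=\Theta(\eps^2/k)$, making the local norm at most $\Theta(\eps)\ex{\alg}$; I then close with a simple dichotomy: either $\ex{\regret{i}}\le 0$ and the claim is trivial, or $\ex{\alg}\le\reward{i}$, whence $\Theta(\eps)\ex{\alg}\le\Theta(\eps)\reward{i}$ and the divergence term becomes $O\!\big(\tfrac{k}{\eps^2}\range_i\log(k/\eps)\big)$, giving the second bound.

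I expect the main obstacle to be precisely the all-action local-norm control. The naive Bound (A) scales with $\reward{\ibest}$, which can dwarf the target $\range_i$-scale when $i$ is a low-range action, so it simply cannot appear in that action's bound; the fix is to switch to Bound (B), charging the local norm to $\ex{\alg}$ and absorbing it through the $\ex{\alg}\le\reward{i}$ dichotomy, at the price of the smaller learning rate that accounts for the extra $\eps^{-1}$ separating the two guarantees. The secondary technical point is guaranteeing the range condition $\eta\,\trew{i}{t}/\range_i\le 1$ against the large importance weights, which is what forces $\eta\le\gamma/k$ and is comfortably satisfied in both parameter regimes above.
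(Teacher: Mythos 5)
Your proposal is correct and follows essentially the same route as the paper's proof: the same Bandit-MSMW algorithm (importance-weighted estimates fed into MSMW with $\gamma/k$ uniform smoothing and an $\imin$-concentrated initial distribution), the same pathwise application of Proposition~\ref{prop:expert-regret-bound-general} under the constraint $\eta \le \gamma/k$, and the same parameter choices $\gamma = \Theta(\epsilon)$ with $\eta = \Theta(\epsilon/k)$ for the best action and $\eta = \Theta(\epsilon^2/k)$ for all actions; your Bound (B) is just the paper's two-step argument (local norm $\le 2\eta\sum_j \reward{j}$ combined with $\ex{\alg} \ge \frac{\gamma}{k}\sum_j \reward{j}$) fused into one, and your dichotomy plays the role of the paper's rearrangement that converts the $\gamma\cdot\ex{\alg}$ term into $\gamma\cdot\reward{i}$. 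The only nit is that the best-action case also needs that dichotomy (since $\ex{\alg} \le \reward{i^*}$ is not automatic — an adaptive algorithm can outperform the best fixed action), but that is the same one-line fix you already invoke for the all-action case.
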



%
 Also, one can compute the pure-additive versions of the bounds in Theorems~\ref{thm:bandit-regret-bound}  by setting $\epsilon=\sqrt{\frac{k\log(kT)}{T}}$ and $\epsilon=(\frac{k\log(kT)}{T})^{\frac{1}{3}}$ resepctively (Corollary~\ref{cor:bandit-regret-bound-rewards-additive}), and compare with the pure-additive regret bound $O\big( \range_{\max}\cdot \sqrt{Tk}\big)$ for the adversarial multi-armed bandit problem~\citep{audibert2009minimax,auer1995gambling}.
\begin{corollary}
	\label{cor:bandit-regret-bound-rewards-additive}
	There exist algorithms for the online multi-scale bandits problem that satisfies, 
	\begin{itemize}
\item For $i^* = \arg \max_{i \in \actions} \reward{i}$,
	\begin{equation}
	\ex{\regret{i^*}} \leq {O} \left( \range_{i^*}\cdot\sqrt{Tk\log(kT)} \right)
	\end{equation}
\item For all $i\in\actions$,
	\begin{equation}
	\ex{\regret{i}} \leq {O} \left( \range_{i}\cdot T^{\frac{2}{3}}(k\log(kT))^{\frac{1}{3}} \right)
	\end{equation}
		\end{itemize}
\end{corollary}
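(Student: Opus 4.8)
The plan is to derive both pure-additive bounds directly from Theorem~\ref{thm:bandit-regret-bound} by plugging in a fixed, carefully chosen value of $\epsilon$, using only the elementary observation that the multiplicative term can be absorbed into an additive one. No new algorithmic idea is needed: the same algorithm is run, with $\epsilon$ set as prescribed.

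First I would record the trivial upper bound $\reward{i} = \sum_{t\in[T]}\rew{i}{t} \le T\range_i$, which holds for every action $i$ since each reward lies in $[0,\range_i]$. Hence the multiplicative term $\epsilon\cdot\reward{i}$ appearing in Theorem~\ref{thm:bandit-regret-bound} is at most $\epsilon T\range_i$, and each of the two bounds collapses to a pure-additive bound of the shape $\epsilon T\range_i + O(\epsilon^{-s}k\log(k/\epsilon)\range_i)$, with $s=1$ for the best action $i^*$ and $s=2$ for all actions. It then only remains to choose $\epsilon$ so as to balance the two summands.

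For the best-action bound, balancing $\epsilon T$ against $\epsilon^{-1}k\log(k/\epsilon)$ suggests $\epsilon\approx\sqrt{k\log(\cdot)/T}$, so I would set $\epsilon=\sqrt{k\log(kT)/T}$. The one technical checkpoint is to verify that $\log(k/\epsilon)=O(\log(kT))$ under this choice: since $k/\epsilon = k\sqrt{T/(k\log(kT))}=\sqrt{kT/\log(kT)}\le\sqrt{kT}$, we indeed get $\log(k/\epsilon)=O(\log(kT))$. Substituting then makes both summands equal to $O(\range_{i^*}\sqrt{Tk\log(kT)})$, as claimed. The all-actions bound is identical in spirit: balancing $\epsilon T$ against $\epsilon^{-2}k\log(k/\epsilon)$ gives $\epsilon\approx(k\log(\cdot)/T)^{1/3}$, and setting $\epsilon=(k\log(kT)/T)^{1/3}$ (again with $\log(k/\epsilon)=O(\log(kT))$ by the same estimate) makes both summands $O(\range_i T^{2/3}(k\log(kT))^{1/3})$.

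The only subtlety requiring explicit care is the hypothesis $\epsilon\le 1$ needed to invoke Theorem~\ref{thm:bandit-regret-bound}. If the prescribed $\epsilon$ exceeds $1$ — that is, when $T<k\log(kT)$ — then the claimed right-hand side already exceeds $T\range_i$, whereas $\regret{i}=\reward{i}-\alg\le\reward{i}\le T\range_i$ holds unconditionally; so the stated bound is trivially true in that regime, and the substitution argument above is only invoked when $\epsilon\le 1$. This boundary case is the sole place needing attention; the remainder is routine substitution together with the two short $\log(k/\epsilon)=O(\log(kT))$ estimates.
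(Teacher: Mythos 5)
Your proposal is correct and matches the paper's own derivation: the paper obtains Corollary~\ref{cor:bandit-regret-bound-rewards-additive} precisely by plugging $\epsilon=\sqrt{k\log(kT)/T}$ and $\epsilon=(k\log(kT)/T)^{1/3}$ into Theorem~\ref{thm:bandit-regret-bound} and bounding the multiplicative term via $\reward{i}\le T\range_i$. Your additional checks (that $\log(k/\epsilon)=O(\log(kT))$ and that the bound holds trivially when the prescribed $\epsilon$ exceeds $1$) are details the paper leaves implicit, and they are handled correctly.
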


Here is a map of this section. In Section~\ref{sec:bandit-MSMW} we propose our bandit algorithm and prove its general regret guarantee for non-negative rewards. Then in Section~\ref{sec:bandit-regret} we show how to get a multi-scale style regret guarantee for the best arm $\range_{i^*}$, and a weaker guarantee for all arms $\{\range_i\}_{i\in\actions}$.
\subsection{Bandit Multi-Scale Multiplicative Weight (Bandit-MSMW) algorithm}
\label{sec:bandit-MSMW}
We present our Bandit algorithm (Algorithm~\ref{alg:Bandit-MSMW}) when the set of actions $\actions$ is finite (with $|A| = k$). Let $\eta$ be the learning rate and $\gamma$ be the exploration probability. We show the following regret bound. 

\begin{algorithm}[htb]
\caption{Bandit-MSMW}
\label{alg:Bandit-MSMW}
\begin{algorithmic}[1]
	\State{\textbf{input}}~ exploration parameter $\gamma>0$, learning rate $\eta>0$.
  \State{\textbf{initialize}}~ $\pb{1} = (1 - \gamma) \mathbf{1}_{\imin} + \frac{\gamma}{k} \mathbf{1}$, where $\imin$ is the arm with minimum range $\range_{\imin}$.  
	\For{$t=1,\dots,T$}
		\State Let $\tpb{t} = (1 - \gamma) \pb{t} + \frac{\gamma}{k} \mathbf{1}$.
		\State Randomly pick an expert $i_t$ drawn from $\tpb{t}$, and observe $\rew{i_t}{t}$. 
		\State Let $\trewb{t}$ be such that
		\begin{align*}
			\trew{i}{t} = \begin{cases}
				\tfrac{\rew{i}{t}}{\tp{i}{t}} & \text{if $i = i_t$}; \\[1ex]
				0 & \text{otherwise}.
			\end{cases}
		\end{align*}
		\State $\forall i \in \actions:~~\w{i}{t+1}\leftarrow\p{i}{t}\cdot \exp(\frac{\eta}{\range_i} \cdot \trew{i}{t})$. 
		\State Find $\lambda^*$ (e.g., binary search) s.t. $\sum_{i \in \actions}\w{i}{t+1}\cdot\exp(-\frac{\lambda^*}{\range_i})=1$.
		\State $\forall i \in \actions:~~\p{i}{t+1}\leftarrow \w{i}{t+1}\cdot\exp(-\frac{\lambda^*}{\range_i}).$
  \EndFor		
\end{algorithmic}
\end{algorithm}

\begin{lemma}
\label{lem:bandit-msmw}
For any exploration probability $0<\gamma\leq \frac{1}{2}$ and any learning rate parameter $0 < \eta \le \frac{\gamma}{k}$, the Bandit-MSMW algorithm achieves the following regret bound when the gains are non-negative :
%
\[\textstyle
\forall i\in\actions: \ex{\regret{i}}\leq O \left( \frac{1}{\eta} \log \big( \frac{k}{\gamma} \big) \cdot \range_i + 
\eta \sum_{j \in \actions} \reward{j} + \gamma \cdot \reward{i} \right)
\]
\end{lemma}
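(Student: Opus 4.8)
The plan is to regard the Bandit-MSMW iterates $\pb{t}$ as the output of the full-information MSMW algorithm run on the \emph{estimated} reward sequence $\trewb{t}$, and then to invoke the OMD regret guarantee (Proposition~\ref{prop:expert-regret-bound-general}, equivalently Lemma~\ref{lem:OMD-regret} with the weighted negative entropy) as a purely deterministic inequality along this sequence. The decisive choice is the benchmark distribution: rather than the indicator $\mathbf{1}_i$, I take $\qb = (1-\gamma)\mathbf{1}_i + \frac{\gamma}{k}\mathbf{1}$, mirroring the initial distribution $\pb{1} = (1-\gamma)\mathbf{1}_{\imin} + \frac{\gamma}{k}\mathbf{1}$. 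This yields
\[
\sum_{j\in\actions} q_j\treward{j} - \sum_{t\in[T]}\pb{t}\cdot\trewb{t} \;\leq\; \eta\sum_{t\in[T]}\sum_{i\in\actions}\p{i}{t}\frac{(\trew{i}{t})^2}{\range_i} \;+\; \frac{1}{\eta}D_F(\qb,\pb{1}).
\]
Before using this I must check that the elementary inequality $\exp(x)-x-1\le x^2$ underlying the local-norm term still applies, i.e. that $\frac{\eta}{\range_i}\trew{i}{t}\in[0,1]$. Since $\tp{i}{t}\ge\frac{\gamma}{k}$ forces $\trew{i}{t}\le\frac{k\range_i}{\gamma}$, this is precisely where the hypothesis $\eta\le\frac{\gamma}{k}$ enters.

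Next I would take expectations. The estimator is conditionally unbiased, $\ex{\trew{i}{t}\mid\text{past}}=\rew{i}{t}$, so $\ex{\treward{j}}=\reward{j}$ and $\ex{\sum_{t}\pb{t}\cdot\trewb{t}}=\ex{\sum_{t}\pb{t}\cdot\rewb{t}}$; call the latter $P$. For the local-norm term I use $\ex{(\trew{i}{t})^2\mid\text{past}}=\frac{(\rew{i}{t})^2}{\tp{i}{t}}$ together with $\tp{i}{t}\ge(1-\gamma)\p{i}{t}$ and $\gamma\le\frac12$ to get $\ex{\sum_i\p{i}{t}\frac{(\trew{i}{t})^2}{\range_i}\mid\text{past}}\le 2\sum_i\frac{(\rew{i}{t})^2}{\range_i}\le 2\sum_i\rew{i}{t}$, where the final step is the non-negativity bound $(\rew{i}{t})^2\le\range_i\rew{i}{t}$. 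Summing over $t$ and applying the tower rule controls the entire local-norm contribution by $2\eta\sum_j\reward{j}$, the middle term of the claim.

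The crux is the initial divergence $D_F(\qb,\pb{1})$. Using the closed form from Corollary~\ref{cor:neg-entropy}, every coordinate $j\notin\{i,\imin\}$ contributes zero, since there $q_j=\p{j}{1}=\frac{\gamma}{k}$; this cancellation is the entire reason for aligning $\qb$ with $\pb{1}$ away from $i$ and $\imin$, and it is what prevents a spurious $\frac{\gamma}{k}\sum_j\range_j$ term. The coordinate $j=i$ contributes at most $\range_i\ln(k/\gamma)$ up to an additive $O(\range_i)$ (as $q_i\le1$ and $\p{i}{1}=\frac{\gamma}{k}$), while the coordinate $j=\imin$ contributes at most $\range_{\imin}(1-\gamma)\le\range_i$, because its logarithmic term is negative and $\range_{\imin}\le\range_i$ by the choice of $\imin$ as the minimum-range arm. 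Hence $D_F(\qb,\pb{1})=O(\range_i\log(k/\gamma))$. I expect this to be the main obstacle: with the naive benchmark $\mathbf{1}_i$ the divergence carries $\frac{\gamma}{k}\sum_j\range_j$, which is not $O(\range_i)$, so both the mixed benchmark \emph{and} the placement of the initial mass on $\imin$ are essential.

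Finally I would translate the resulting bound on $P$ into one on $\ex{\regret{i}}$. Because rewards are non-negative and $\tpb{t}=(1-\gamma)\pb{t}+\frac{\gamma}{k}\mathbf{1}$, a round-by-round expectation gives the identity $\ex{\alg}=(1-\gamma)P+\frac{\gamma}{k}\sum_j\reward{j}$. Substituting $\sum_j q_j\reward{j}=(1-\gamma)\reward{i}+\frac{\gamma}{k}\sum_j\reward{j}$, solving the expected inequality for $P$, and plugging in, the $\frac{\gamma}{k}\sum_j\reward{j}$ contributions are non-negative and may be dropped, while $(1-\gamma)^2\ge1-2\gamma$ converts the $\reward{i}$ coefficient into a loss of $2\gamma\reward{i}$. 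This yields $\ex{\regret{i}}\le 2\gamma\reward{i}+2\eta\sum_j\reward{j}+\frac{1}{\eta}D_F(\qb,\pb{1})$, which is the stated bound once constants are absorbed into $O(\cdot)$.
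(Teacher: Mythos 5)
Your proof is correct and follows essentially the same route as the paper's: you apply Proposition~\ref{prop:expert-regret-bound-general} pathwise to the importance-weighted estimates $\trewb{t}$, use the same mixed benchmark $\qb=(1-\gamma)\mathbf{1}_i+\frac{\gamma}{k}\mathbf{1}$ against the initialization $\pb{1}=(1-\gamma)\mathbf{1}_{\imin}+\frac{\gamma}{k}\mathbf{1}$, and bound the local-norm and initial-divergence terms exactly as the paper does (including the observation that $\eta\le\gamma/k$ is what keeps the estimated rewards within the range needed for the $\exp(x)-x-1\le x^2$ step). The only difference is cosmetic: in the last step you use the exact identity $\ex{\alg}=(1-\gamma)\,\ex{\sum_{t\in[T]}\pb{t}\cdot\rewb{t}}+\frac{\gamma}{k}\sum_{j\in\actions}\reward{j}$ rather than the paper's one-sided inequality, which directly produces the $\gamma\cdot\reward{i}$ term instead of a $\gamma\cdot\ex{\alg}$ term that must be absorbed at the end.
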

\begin{proof}[\textbf{of Lemma~\ref{lem:bandit-msmw}}]
\label{appendix:bandit-regret-proof}
We further define:
\begin{eqnarray*}
\talg & \triangleq & \textstyle \sum_{t\in[T]} \rew{i_t}{t} = \sum_{t\in[T]}\tpb{t}\cdot\trewb{t}~,\\
\treward{j} & \triangleq & \textstyle \sum_{t\in[T]}\trew{j}{t}~.
\end{eqnarray*}
%

In expectation over the randomness of the algorithm, we have:
\begin{enumerate}
	\item $\ex{\alg} = \ex{\talg}$; and
	\item $\reward{j} = \ex{\treward{j}}$ for any $j \in \actions$.
\end{enumerate}
%

Hence, to upper bound $\ex{\regret{i}}=\reward{i}-\ex{\alg}$, it suffices to upper bound $\ex{\treward{i} - \talg}$.

By the definition of the probability that the algorithm picks each arm, i.e., $\tpb{t}$, we have:
\[\textstyle
\ex{\talg} \ge (1 - \gamma) \sum_{t\in[T]} \pb{t} \cdot \trewb{t} ~.
\]

Hence, we have that for any initial distribution $\qb$ over $\actions$:
\begin{align}
\sum_{j \in \actions} q_j \cdot \ex{\treward{j}} - \ex{\talg} \le ~ & \textstyle \ex{\sum_{j \in \actions} q_j \cdot \treward{j} - \sum_{t\in[T]} \pb{t} \cdot \trewb{t}} + \frac{\gamma}{1-\gamma} \ex{\talg} \notag \\
\le ~ & \textstyle \ex{\sum_{j \in \actions} q_j \cdot \treward{j} - \sum_{t\in[T]} \pb{t} \cdot \trewb{t}} + 2 \gamma \ex{\talg} ~. \label{eqn:bandit-msmw1}
\end{align}

Next, we upper bound the 1st term on the RHS.
Note that $\pb{t}$'s are the probabilities of choosing experts by MSMW when the experts have rewards $\trewb{t}$'s.
By Proposition~\ref{prop:expert-regret-bound-general}, we have that for any benchmark distribution $\qb$ over $S$, the Bandit-MSMW algorithm satisfies that:
\begin{equation}
\label{eqn:bandit-msmw2}
\sum_{j \in \actions} q_j \cdot \treward{j} - \sum_{t\in[T]} \pb{t} \cdot \trewb{t} \leq \eta \sum_{t\in[T]} \sum_{j \in \actions} \frac{\p{j}{t}}{\range_j} \cdot \big( \trew{j}{t} \big)^2 + \frac{1}{\eta} \sum_{j \in \actions} \range_j \left( q_j \ln \big( \frac{q_j}{\p{j}{1}} \big) - q_j + \p{j}{1} \right) ~.
\end{equation}

For any $t \in [T]$ and any $j \in \actions$, by the definition of $\trew{j}{t}$, it equals $\tfrac{\rew{j}{t}}{\tp{j}{t}}$ with probability $\tp{j}{t}$, and equals $0$ otherwise.
Thus, if we fix the random coin flips in the first $t-1$ rounds and, thus, fix $\tpb{t}$, and take expectation over the randomness in round $t$, we have that:
\[
\ex{\frac{\p{j}{t}}{\range_j} \cdot \big( \trew{j}{t} \big)^2} = \frac{\p{j}{t}}{\range_j} \cdot \tp{j}{t} \cdot \left(\frac{\rew{j}{t}}{\tp{j}{t}} \right)^2 = \frac{\p{j}{t}}{\tp{j}{t}} \frac{(\rew{j}{t})^2}{\range_j} ~.
\]

Further note that since $\tp{j}{t} \ge (1 - \gamma) \p{j}{t}$, and $\rew{j}{t} \le \range_j$, the above is upper bounded by $\frac{1}{1 - \gamma} \rew{j}{t} \le 2 \rew{j}{t}$.
Putting together with \eqref{eqn:bandit-msmw2}, we have that for any $0 < \eta \le \frac{\gamma}{n}$:
\begin{align*}
\ex{\sum_{j \in \actions} q_j \cdot \treward{j} - \sum_{t\in[T]} \pb{t} \cdot \trewb{t}}
\le ~
&
\eta \sum_{t\in[T]}\sum_{j \in \actions} 2 \rew{j}{t} + \frac{1}{\eta} \sum_{j \in \actions} \range_j \left( q_j \ln \big( \frac{q_j}{\p{j}{1}} \big) - q_j + \p{j}{1} \right) \\
= ~
&
2 \eta \sum_{j \in \actions} \reward{j} + \frac{1}{\eta} \sum_{j \in \actions} \range_j \left( q_j \ln \big( \frac{q_j}{\p{j}{1}} \big) - q_j + \p{j}{1} \right)
\end{align*}

Combining with \eqref{eqn:bandit-msmw1}, we have:
\[
\sum_{j \in \actions} q_j \cdot \ex{\treward{j}} - \ex{\talg} \le 2 \eta \sum_{j \in \actions} \reward{j} + \frac{1}{\eta} \sum_{j \in \actions} \range_j \left( q_j \ln \big( \frac{q_j}{\p{j}{1}} \big) - q_j + \p{j}{1} \right) + 2 \gamma \ex{\talg}
\]

Let $\qb = (1 - \gamma) \mathbf{1}_{i} + \frac{\gamma}{k} \mathbf{1}$.
Recall that $\pb{1} = (1 - \gamma) \mathbf{1}_{\imin} + \frac{\gamma}{k} \mathbf{1}$ (recall $\imin$ is the arm with minimum range $\range_{\imin}$).
Similar to the discussion for the expert problem in Section~\ref{sec:expert-reward-only}, the 2nd term on the RHS is upper bounded by $O \left( \frac{1}{\eta} \log \big( \frac{k}{\gamma} \big) \cdot \range_i \right)$.
Hence, we have:
\begin{equation}
\label{eqn:bandit-msmw3}
\sum_{j \in \actions} q_j \cdot \ex{\treward{j}} - \ex{\talg} \le 2 \eta \sum_{j \in \actions} \reward{j} + O \left( \frac{1}{\eta} \log \big( \frac{k}{\gamma} \big) \cdot \range_i \right) + 2 \gamma \ex{\talg} ~.
\end{equation}

Further, the LHS is lower bounded as:
\[ 
(1 - \gamma) \ex{\treward{i}} + \frac{\gamma}{k} \sum_{j \in \actions} \ex{\treward{j}} - \ex{\talg} \ge (1 - \gamma) \ex{\treward{i}} - \ex{\talg} ~.
\]

The lemma then follows by putting it back to \eqref{eqn:bandit-msmw3} and rearranging terms.
\end{proof}


\subsection{Regret bounds for non-negative rewards - proof of Theorem~\ref{thm:bandit-regret-bound} }
\label{sec:bandit-regret}
\begin{proof}[\textbf{of Theorem~\ref{thm:bandit-regret-bound}}]
Letting $\gamma = \epsilon$ and $\eta = \frac{\gamma}{k} = \frac{\epsilon}{k}$ in Lemma~\ref{lem:bandit-msmw}, we get that the expected regret w.r.t.\ an action $i \in \actions$ is bounded by:
\[\textstyle
O \left( \epsilon \cdot \reward{i} + \frac{\epsilon}{k} \sum_{j \in \actions} \reward{j} + \range_i \cdot \frac{k}{\epsilon} \ln \big( \frac{k}{\epsilon} \big) \right) ~.
\]

When $i = \ibest$ (best arm), regret is bounded by $O \left( \epsilon \cdot \reward{\ibest} + \range_\ibest \cdot \frac{k}{\epsilon} \ln \big( \frac{k}{\epsilon} \big) \right)$, as desired.

For the regret w.r.t. an arbitrary action, 
note that $\ex{\alg} \ge \frac{\gamma}{k} \sum_{j \in \actions} \reward{j}$.
Thus, the regret bound w.r.t.\ an action $i \in \actions$ in Lemma~\ref{lem:bandit-msmw} is further upper bounded by:
\[\textstyle
O \left( \frac{1}{\eta} \log \big( \frac{k}{\gamma} \big)
\cdot \range_i + \left( \frac{\eta k}{\gamma} + \gamma\right) \cdot \ex{\talg} \right) 
\]

The theorem then follows by letting $\gamma = \epsilon$ and $\eta = \frac{\gamma^2}{k} = \frac{\epsilon^2}{k}$. 
%
\end{proof}


\section{More Applications of Multi-scale Learning for  Auctions and Pricing}
\label{sec:auction}
In this section, we consider applying the multi-scale online learning framework, developed in Section~\ref{sec:learning} and Section~\ref{sec:bandit}, to design several other online auctions and pricings be the single buyer auction (discussed in Section~\ref{sec:canonical-pricing}). Besides the single buyer auction, the problems that we consider are as follows. 
\begin{itemize}[itemindent=-2mm]
	\item\textbf{Online posted pricing:} 
		The same as the online single buyer auction of Section~\ref{sec:canonical-pricing}, but in the bandit setting. The algorithm only learns the indicator function $ \mathbf{1}(v(t) \geq p_t)$ where $p_t$ is the price it picks in round $t$.
	\item\textbf{Online multi buyer auction:}
		The action set is the set of all ``Myerson-type'' mechanisms for $n$ buyers, for some $ n\in \naturals$. (See Definition \ref{def:myerson-type-auction}.)
		The adversary picks a valuation vector $\vals t\in [1,h]^n$ and the reward of a mechanism $M$ is its revenue when the valuation of the buyers is given by $\vals t$; 
		this is denoted by $\rev_M (\vals t)$.
		 The algorithm sees the full vector of valuations $\vals t$. 		
\end{itemize}
\subsection{Auctions and pricing as multi-scale online learning problems}
\label{sec:auctions-as-learning}
We now show how to reduce the above problems to special cases of multi-scale online learning.

\paragraph{Online multi buyer auction}
In multi buyer auctions, we consider the set of all discretized Myerson-type auctions as the action space. 
We start by defining Myerson-type auctions:

\begin{definition}[Myerson-type auctions]
\label{def:myerson-type-auction}
A \emph{Myerson-type auction} is defined by $n$ non-decreasing virtual value mappings $\phi_1, \dots, \phi_n : [1,h] \mapsto [-\infty,h]$.
Given a value profile $v_1, \dots, v_n$, the item is given to the bidder $j$ with the largest non-negative virtual value $\phi_j(v_j)$.
Then, bidder $j$ pays the minimum value that would keep him as the the winner.
\end{definition}

\citet{Myerson} shows that when the bidders' values are drawn from independent (but not necessarily identical) distributions, the revenue-optimal auction is a Myerson-type auction.
\citet[Lemma~5]{devanur2016sample} observe that to obtain a $1-\epsilon$ approximation, it suffices to consider the set of discretized Myerson-type auctions that treat each bidder's value as if it is equal to the closest power of $1+\epsilon$ from below.
As a result, it suffices to consider the set of discretized Myerson-type auctions, each of which is defined by the virtual values of $(1+\epsilon)^j$'s, i.e., by $O ( \myfrac{n \log{h}}{\epsilon} )$ real numbers $\phi_\ell ( (1+\epsilon)^j )$, for $\ell \in [n]$, and $0 \le j \le \lfloor \log_{1+\epsilon}{h} \rfloor$. Furthermore, first \citet{elkind2007des} and later on
\citet{devanur2016sample,gonczarowski2016efficient} note that a discretized Myerson-type auction is in fact completely characterized by the total ordering of $\phi_\ell ( (1+\epsilon)^j )$'s;\footnote{\cite{cai2012optimal} also generalizes this observation to multi-dimensional types.} their actual values do not matter.
Indeed, both the allocation rule and the payment rule are determined by the ordering of virtual values.
As a result, our action space is a finite set with at most $O ( (\myfrac{n \log{h}}{\epsilon})! )$ actions.
The range of an action, i.e., a discretized Myerson-type auction, is the largest price ever charged by the auction, i.e., the largest value $v$ of the form $(1+\epsilon)^j$ such that there exists $\ell \in [n]$, $\phi_\ell(v) > \phi_\ell ( (1+\epsilon)^{-1} v )$.

\subsection{Multiplicative/additive approximations}

Similar to Section~\ref{sec:canonical-pricing}, we show how to get a multiplicative cum additive approximations for these problems with $\opt$ as the benchmark. Recall the definition of $\opt$ in Section~\ref{sec:multi-scale-nonnegative-results} and let $\pbest$ be the best fixed price on hindsight, which is the price that achieves $\opt$. 
\begin{theorem}
	\label{thm:multadditivebounds}
There are algorithms for the online posted pricing and the online multi buyer auction problems that take as input a parameter	$\epsilon >0$, and satsify $\alg \geq (1-\epsilon) \opt - O(E)$, where respectively (for the two problems mentioned above)
	\[ 
	E = \frac{\pbest \log h \log (\myfrac{\log h}{\epsilon})}{\epsilon^2}, 
	~~~~\text{and }~~~~ \frac{h n \log{h} \log( \myfrac{n \log{h}}{\epsilon} )}{\epsilon^2} ~.
	\]
	Even if $h$ is not known up front, we can still get the similar approximation guarantee for the online multi buyer auction with:
	\[ 
	E = \frac{h n \log{h} \log( \myfrac{n \log{h}}{\epsilon} )}{\epsilon^2} ~.
	\]
\end{theorem}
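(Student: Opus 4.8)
The plan is to handle the two known-$h$ claims by direct reduction to the multi-scale regret bounds already established, and to handle the unknown-$h$ claim by the dynamic-action-set technique from Part~2 of Theorem~\ref{thm:multadditivebounds-simple}. For \textbf{online posted pricing}, I would use the reduction of Section~\ref{sec:canonical-pricing}: the actions are the $k=O(\myfrac{\log h}{\epsilon})$ prices $(1+\epsilon)^j$, and since $\rew{p}{t}=p\cdot\mathbf{1}(v(t)\ge p)\in[0,p]$ the range of price $p$ is $\range_p=p$, under bandit feedback. Because $\opt$ is attained by a single best price, only the best action is relevant, so I would invoke the \emph{first} (best-arm) bullet of Theorem~\ref{thm:bandit-regret-bound} with $i^*$ the revenue-optimal discretized price, whose range is exactly $\pbest$. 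Substituting $k=O(\myfrac{\log h}{\epsilon})$ and $\range_{i^*}=\pbest$ into $\ex{\regret{i^*}}\le\epsilon\reward{i^*}+O(\tfrac1\epsilon k\log(\tfrac k\epsilon)\range_{i^*})$ yields additive error $O\big(\myfrac{\pbest\log h\log(\myfrac{\log h}{\epsilon})}{\epsilon^2}\big)$, which is exactly $E$; it is essential to use the best-arm bullet here, since its leading $\epsilon^{-1}$ (rather than the $\epsilon^{-2}$ of the all-arms bullet) is what produces $\epsilon^{-2}$ overall instead of $\epsilon^{-3}$.

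For the \textbf{online multi buyer auction} with known $h$, I would use the full-information reduction of Section~\ref{sec:auctions-as-learning}: the actions are the discretized Myerson-type auctions, of which there are $k=O\big((\myfrac{n\log h}{\epsilon})!\big)$, each with range at most $h$ (its largest charged price). I would apply Theorem~\ref{thm:expert-regret-bound-rewards} with $\pib$ uniform and $i^*$ the discretized Myerson auction approximating the optimal one. Then $\log\myfrac{1}{\epsilon\pi_{i^*}}=\log\myfrac{k}{\epsilon}=O\big(\myfrac{n\log h}{\epsilon}\log\myfrac{n\log h}{\epsilon}\big)$ by Stirling, and $\range_{i^*}\le h$, so the additive error is $O\big(\myfrac{hn\log h\log(\myfrac{n\log h}{\epsilon})}{\epsilon^2}\big)$, again matching $E$. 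Note the benchmark mechanism genuinely charges prices up to $h$, which is why $h$ (rather than a smaller scale) legitimately appears here.

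The common final step converts the guarantee on the discretized benchmark into one on $\opt$. Rounding the optimal continuous price (resp.\ the virtual-value thresholds of the optimal Myerson auction) down to the nearest power of $1+\epsilon$ loses at most a $(1-\epsilon)$ factor of revenue: for pricing because every buyer who buys at $\pbest$ also buys at the rounded price, which is at least $\pbest/(1+\epsilon)$, and for the multi buyer case this is precisely \citet[Lemma~5]{devanur2016sample}. Hence the best discretized action satisfies $\reward{i^*}\ge(1-\epsilon)\opt$, and combining with $\ex{\alg}\ge\reward{i^*}-\ex{\regret{i^*}}\ge(1-\epsilon)\reward{i^*}-O(E)$ gives $\ex{\alg}\ge(1-\epsilon)^2\opt-O(E)\ge(1-2\epsilon)\opt-O(E)$; rescaling $\epsilon$ by a constant yields the stated form.

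The hardest part will be the \textbf{unknown-$h$} claim for multi buyer auctions, which I would prove in parallel with Part~2 of Theorem~\ref{thm:multadditivebounds-simple}. I would analyze a hypothetical algorithm over the countably infinite set of all discretized Myerson auctions (orderings of virtual values at all powers $(1+\epsilon)^j$, $j\ge0$) under a prior $\pib$ whose mass decays with each mechanism's range $r_M$ fast enough that $\sum_M \pi_M r_M$ converges, so that $\tfrac1\epsilon\log\myfrac{1}{\pi_{i^*}}\cdot\range_{i^*}$ still matches $E$ with $h$ the effective largest value. The realizable algorithm, in the spirit of Algorithm~\ref{alg:single-auction-unknown}, would activate only those mechanisms whose charged prices never exceed the highest value seen so far; I would then argue that on every non-record round it weakly outperforms the hypothetical algorithm, while the total shortfall over record-setting rounds is at most $\sum_M \pi_M r_M=O(1)$. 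The main obstacle is exactly this accounting: because a mechanism's range is determined by the entire ordering of its virtual values rather than by one price, defining the activation rule and choosing a prior that simultaneously keeps $\log\myfrac{1}{\pi_{i^*}}$ at the Stirling bound and keeps $\sum_M \pi_M r_M$ finite requires care in how probability is spread across the many mechanisms sharing a common range.
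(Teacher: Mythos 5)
Your proposal is correct and follows the paper's proof essentially step for step: the known-$h$ posted-pricing bound is the paper's invocation of Theorem~\ref{thm:bandit-regret-bound} with $k=O(\myfrac{\log h}{\epsilon})$ price actions (the paper leaves implicit which bullet is used, and you rightly make explicit that it must be the best-arm bullet, whose leading $\epsilon^{-1}$ is what yields $\epsilon^{-2}$ overall rather than $\epsilon^{-3}$), and the known-$h$ multi-buyer bound is the paper's invocation of Theorem~\ref{thm:expert-regret-bound-rewards} with a uniform prior over the $O((\myfrac{n\log h}{\epsilon})!)$ discretized Myerson auctions together with Stirling's bound. The one place you stop short---how to spread prior mass over mechanisms sharing a common range in the unknown-$h$ case---is resolved in the paper by a two-level prior that is \emph{not} tied to each mechanism's individual range: mechanisms are grouped by a price cap $p=(1+\epsilon)^j$, where the group for cap $p$ consists of the $k_p=O((\myfrac{n\log p}{\epsilon})!)$ Myerson-type auctions that treat all values above $p$ as equal to $p$; group $p$ receives total mass $\frac{\epsilon(\epsilon+2)}{(1+\epsilon)^2}\cdot\frac{1}{p^2}$, split uniformly within the group. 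With this choice, $\log\big(\myfrac{1}{\pi_{i^*}}\big)\le 2\log h+\log k_h+O(\log(\myfrac{1}{\epsilon}))$ is still dominated by the Stirling term, so the regret bound matches $E$, while the total shortfall of the implementable algorithm (which activates the group for cap $p$ only once the highest value seen reaches $p$, the multi-buyer analogue of Algorithm~\ref{alg:single-auction-unknown}) is at most $\sum_{p} k_p\cdot\frac{\epsilon(\epsilon+2)}{(1+\epsilon)^2}\cdot\frac{1}{p^2 k_p}\cdot p=\frac{\epsilon(\epsilon+2)}{(1+\epsilon)^2}\sum_{j\ge 0}(1+\epsilon)^{-j}=O(1)$---exactly the geometric-sum accounting you anticipated, with the uniform split inside each cap group making the per-group count $k_p$ cancel.
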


We conjecture that our bound for the online posted pricing problem is tight up to logarithmic factors, and leave resolving this as an open problem. 
The second bound is not comparable to the best sample complexity for the multi buyer auction problem by \citet{roughgarden2016ironing}; it is better than theirs for large $\epsilon$ (when ${1}/{\epsilon} \leq o(nh)$), and is worse for smaller $\epsilon$ (when $1/\epsilon \geq \omega(nh)$).
Also, compare the first bound to the corresponding upper bound for the pricing problem by \citet{blum2005near}, which is
\[\min\left\{\frac{h \log h \log\log h}{\epsilon^2}, \frac{h  \log\log h}{\epsilon^3} \right\}. \]
Essentially, the main improvement over this result is that the additive term scales with the best price rather than $h$.

\subsection{Proof of Theorem~\ref{thm:multadditivebounds}}

\begin{proof}
\emph{Online posted pricing.~}
Recall the formulation of the problem as an online learning problem with bandit feedback in Section~\ref{sec:auctions-as-learning}. This part then follows by Theorem~\ref{thm:bandit-regret-bound} with $k = O(\myfrac{\log{h}}{\epsilon})$ actions.
\medskip

\emph{Online multi buyer auction.~}
Recall the formulation of the problem as an online learning problem with full information in Section~\ref{sec:auctions-as-learning}.
The proof then follows by Theorem~\ref{thm:expert-regret-bound-rewards}, where we let $\pib$ be the uniform distribution over the $k = O((\myfrac{n \log{h}}{\epsilon})!)$ actions, i.e., Myerson-type auctions. 

When $h$ is not known up front, similar to the proof of Theorem~\ref{thm:multadditivebounds-simple}, we consider a hypothetical  algorithm with countably infinite action space $\actions$ as follows.
For any $p = (1 + \epsilon)^j$, $j \ge 0$, let the $k_p = O((\myfrac{n \log{p}}{\epsilon})!)$ Myerson-type auctions for values in $[1,p]$ be in $\actions$; we assume these auctions treat any values greater than $p$ as if they were $p$.
Further, we choose the prior distribution $\pib$ such that the probability mass of each auction for range $[1,p]$ is equal to $\frac{\epsilon(\epsilon+2)}{(1 + \epsilon)^2} \cdot \frac{1}{p^2} \cdot \frac{1}{k_p}$.
The approximation guarantee then follows by Theorem~\ref{thm:expert-regret-bound-rewards}. To implement this algorithm, we use the same trick as in the proof Theorem~\ref{thm:multadditivebounds-simple} by running a modified algorithm that only considers auctions for all ranges $[1,p]$ where $p$ is no larger than the highest value seen so far among all the buyers (i.e. a multi-buyer auction version of Algorithm~\ref{alg:single-auction-unknown}). The rest of the proof that shows the revenue loss of this algorithm compared to the hypothetical algorithm is negligible is similar to the proof of Theorem~\ref{thm:multadditivebounds-simple} (and hence omitted for brevity). 

\end{proof}
\vspace{-4mm}
\subsection{Competing with \texorpdfstring{$\delta$-guarded benchmarks}{Lg}}
For the single buyer auction/pricing problem, we define a 
$\delta$-guarded benchmark, for any $\delta \in [0,1]$.
This benchmark is restricted to only those prices that sell 
the item in at least a $\delta$ fraction of the rounds. 
\[ \textstyle \opt(\delta)    \defeq  \max\left\{ \sum_{t=1}^T \rew{p}{t} :p \in \actions , \sum_{t=1}^T \mathbf{1}(v_t \geq p) \geq \delta T\right\}  .\]
As observed in Footnote~\ref{fn1}, one can replace $\delta$ with $1/h$ and get the corresponding guarantees for $\opt$ rather than $\optdelta$. 
However, the main point of these results is to show a graceful improvement of the bounds as $\delta$ is chosen to be larger. 
\paragraph{Multiple buyers:~}
For the multi buyer auction problem, we define the $\delta$-guarded benchmark as follows. 
For any sequence of value vectors $\vals 1,\vals 2, \ldots, \vals T$, 
let $\bar{V}$ denote the largest value such that there are at least $\delta T$ distinct $t\in[1:T]$ with 
$\max_{i \in [n]}  \val i t \ge \bar{V}$.
Define the $\delta$-guarded benchmark to be
$$\textstyle \optdelta = \max_M \sum_{t=1}^T Rev_M\left( \min(\bar{V} \one, \vals t))\right) ,$$
where the ``$\min$'' is taken coordinate-wise, 
and the ``max'' is over all  Myerson-type mechanisms. In other words, here is how we can describe the $\delta$-guarded benchmark: for each Myerson-type auction $M$, after identifying the value cap $\bar{V}$, we cut all the values that are above $\bar{V}$ by this quantity, and then run $M$. The benchmark is then the revenue of the best Myerson-type auction under these modified values.

We focus on purely multiplicative approximation factors when competing with $\optdelta $. In particular, for any given $\epsilon>0$, we are interested in a $1-\epsilon$ approximation. 
We state our results in terms of the \emph{convergence rate}. 
We say that $T(\epsilon,\delta)$ is the convergence rate of an algorithm if  for all time horizon $T \geq T(\epsilon, \delta)$, we are guaranteed that $\alg \geq (1-\epsilon)\optdelta$. 
Our main results are as follows. 
\begin{theorem}\label{thm:convergencerates}
	There are algorithms for the online single buyer auction, online posted pricing, and the online multi buyer auction problems with  convergence  rates respectively of $$O\left(\frac{\log (\myfrac{\log h}{\epsilon})}{\epsilon^2 \delta} \right), ~~~~
	O\left(\frac{\log h}{\epsilon^4\delta}\right), 
	~~~~\text{and } O\left( \frac{n \log{(\myfrac{1}{\epsilon \delta})} \log( \myfrac{n \log(\myfrac{1}{\epsilon \delta})}{\epsilon} )}{\epsilon^3 \delta} + \frac{\log{ (\myfrac{\log{h}}{\epsilon}) }}{\epsilon^2 \delta} \right).$$
	Even if $h$ is not known upfront, we can still get the following similar convergence rates for online single buyer auction and online multi buyer auction respectively:
	\[ 
	O\left(\frac{\log (\myfrac{\pbest}{\epsilon})}{\epsilon^2 \delta} \right), 
	~~~~\text{and }~~ O\left( \frac{n \log{(\myfrac{1}{\epsilon \delta})} \log( \myfrac{n \log(\myfrac{1}{\epsilon \delta})}{\epsilon} )}{\epsilon^3 \delta} + \frac{\log{ (\myfrac{h}{\epsilon}) }}{\epsilon^2 \delta} \right) ~.
	\]
\end{theorem}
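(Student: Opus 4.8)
The plan is to convert the multiplicative-plus-additive regret guarantees of Theorem~\ref{thm:expert-regret-bound-rewards} (for the two full-information problems) and Theorem~\ref{thm:bandit-regret-bound} (for online posted pricing) into convergence rates via the online-to-multiplicative reduction sketched in the introduction: a bound of the form $\alg \ge (1-\epsilon)\optdelta - E$ yields $\alg \ge (1-2\epsilon)\optdelta$ as soon as the additive error satisfies $E \le \epsilon\,\optdelta$. The whole argument therefore reduces to controlling $E$, and the one new ingredient compared with Theorem~\ref{thm:multadditivebounds} is a bound on the \emph{range of the benchmark action} in terms of $\optdelta$ and the market-share parameter $\delta$.

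First I would establish the key structural estimate that the optimal $\delta$-guarded action has range at most $\optdelta/(\delta T)$. For the single-buyer auction and posted pricing this is immediate: writing $\pbest$ for the price attaining $\optdelta$, by definition it sells in at least $\delta T$ rounds, each contributing revenue $\pbest$, so $\optdelta = \reward{\pbest} \ge \pbest\,\delta T$, i.e. the range $\range_{\pbest}=\pbest \le \optdelta/(\delta T)$. I would then instantiate the action-specific bounds at $i=\pbest$: the full-information bound gives $E = O\big(\epsilon^{-1}\log(k/\epsilon)\cdot \pbest\big)$, and the (all-actions) bandit bound gives $E = O\big(\epsilon^{-2} k\log(k/\epsilon)\cdot \pbest\big)$, where $k=O(\epsilon^{-1}\log h)$. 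Substituting $\pbest \le \optdelta/(\delta T)$, imposing $E \le \epsilon\,\optdelta$, and solving for $T$ produces the first two claimed rates (up to lower-order logarithmic factors), namely $O(\epsilon^{-2}\delta^{-1}\log(\myfrac{\log h}{\epsilon}))$ and $O(\epsilon^{-4}\delta^{-1}\log h)$; for the unknown-$h$ single-buyer rate I would reuse the $\pi_p \propto p^{-2}$ prior of Theorem~\ref{thm:multadditivebounds-simple}, so that $\log(1/(\epsilon\pi_{\pbest}))$ scales with $\log(\pbest/\epsilon)$ rather than with $\log k$.

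The multi-buyer case requires the one genuinely delicate step, namely reconciling the fact that the algorithm runs on the \emph{original} vectors $\vals t$ while $\optdelta$ is defined on the \emph{capped} vectors $\min(\bar V\one,\vals t)$. The plan is to argue that the benchmark-achieving Myerson auction may be taken, without loss of generality, to have virtual-value functions that are constant above $\bar V$; such an auction never charges more than $\bar V$, so (i)~its range is at most $\bar V$, and (ii)~since capping at $\bar V$ alters neither its virtual values nor its payment thresholds, its revenue on the original vectors equals its revenue on the capped vectors, i.e. exactly $\optdelta$. Thus a discretized version of this auction lies in the action set with range $\le \bar V$ and reward $\ge (1-\epsilon)\optdelta$; and by the definition of $\bar V$ (at least $\delta T$ rounds have $\max_{i\in[n]}\val i t \ge \bar V$) a single reserve at $\bar V$ already guarantees $\optdelta \ge \bar V\,\delta T$, so again $\bar V \le \optdelta/(\delta T)$. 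Feeding this into Theorem~\ref{thm:expert-regret-bound-rewards} with the $O((\myfrac{n\log h}{\epsilon})!)$ actions and solving $E \le \epsilon\,\optdelta$ for $T$ gives the stated multi-buyer rate, and the unknown-$h$ version follows by the same $p^{-2}$-type prior as in Theorem~\ref{thm:multadditivebounds}.

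I expect the main obstacle to be this capping reconciliation for the multi-buyer problem, together with the careful bookkeeping of logarithmic factors: obtaining the refined $\log(\myfrac 1{\epsilon\delta})$ dependence (rather than $\log h$) in the leading term requires charging the number of \emph{relevant} discretized auctions to the effective range $\bar V$ fixed by $\delta$ rather than to the crude bound $h$, and separating out the single-reserve contribution responsible for the additive $\epsilon^{-2}\delta^{-1}\log(\myfrac{\log h}{\epsilon})$ term.
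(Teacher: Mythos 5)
Your treatment of the single-buyer auction and the posted-pricing parts is correct and essentially identical to the paper's proof: in both, one instantiates the action-specific regret bound (Theorem~\ref{thm:expert-regret-bound-rewards}, resp.\ Theorem~\ref{thm:bandit-regret-bound}) at the $\delta$-guarded optimal price over the $k=O(\myfrac{\log h}{\epsilon})$ discretized prices, uses $\reward{\pbest}\ge \delta T\cdot \pbest$ to absorb the additive term once $T$ is large enough, and handles unknown $h$ with the $\pi_p\propto p^{-2}$ prior of Theorem~\ref{thm:multadditivebounds-simple}. Your capping argument for the multi-buyer benchmark (taking the optimal auction's virtual values constant above $\bar V$, so that its range is at most $\bar V$ and its revenue on the original and the capped profiles coincide, together with $\optdelta\ge \delta T\cdot\bar V$ from a single anonymous reserve at $\bar V$) also matches the paper's Eqns.~\eqref{eqn:multi-buyer-convergence1} and \eqref{eqn:multi-buyer-convergence2}, and you make the ``revenue on original equals revenue on capped'' point more explicit than the paper does.

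The gap is in the multi-buyer rate itself. You feed the crude action set of $O\bigl((\myfrac{n\log h}{\epsilon})!\bigr)$ discretized Myerson auctions into Theorem~\ref{thm:expert-regret-bound-rewards}; with a uniform prior this gives $\log(1/\pi_i)=\Theta\bigl(\tfrac{n\log h}{\epsilon}\log\tfrac{n\log h}{\epsilon}\bigr)$, hence a convergence rate of order $\tfrac{n\log h\,\log(\myfrac{n\log h}{\epsilon})}{\epsilon^3\delta}$, which carries a $\log h$ factor in the leading term and is strictly worse than the claimed bound whenever $\log h \gg \log(\myfrac{1}{\epsilon\delta})$. Your closing paragraph flags exactly this issue (``charging the number of relevant discretized auctions to $\bar V$ rather than to $h$'') but leaves it as an anticipated obstacle rather than resolving it, so your sentence asserting that the stated multi-buyer rate follows is not justified as written. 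The missing ingredient, which is the one genuinely new idea in the paper's proof, is a second, $\delta$-dependent discretization of the \emph{value space}: since $\optdelta\ge\delta T\cdot\bar V$, any revenue collected at prices below $\epsilon\delta\bar V$ totals at most $T\cdot\epsilon\delta\bar V\le\epsilon\cdot\optdelta$, so one may restrict attention to auctions that distinguish neither among values above $\bar V$ nor among values below $\epsilon\delta\bar V$. For each of the $O(\myfrac{\log h}{\epsilon})$ candidate values of $\bar V$ this leaves only $O(\myfrac{\log(\myfrac{1}{\epsilon\delta})}{\epsilon})$ distinct values, hence $k=O\bigl(\tfrac{\log h}{\epsilon}\cdot(\tfrac{n\log(\myfrac{1}{\epsilon\delta})}{\epsilon})!\bigr)$ actions in total, and $\log k$ then splits into precisely the two terms, $\tfrac{n\log(\myfrac{1}{\epsilon\delta})}{\epsilon}\log\bigl(\tfrac{n\log(\myfrac{1}{\epsilon\delta})}{\epsilon}\bigr)$ plus $\log\bigl(\tfrac{\log h}{\epsilon}\bigr)$, that produce the stated two-term rate after dividing by $\epsilon^2\delta$. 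The same refined action set (with the $\myfrac{1}{\bar V}$-weighted prior over the candidate caps $\bar V=(1+\epsilon)^j$) is what yields the unknown-$h$ multi-buyer bound, whose second term degrades only from $\log(\myfrac{\log h}{\epsilon})$ to $\log(\myfrac{h}{\epsilon})$.
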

	Once again, we compare to the sample compexity bounds:
	our first is within a $\log \log h$ factor of the best sample complexity upper bound in \citet{huang2015making}.
	The lower bound for the online single buyer auction is $\Omega(\delta^{-1}\epsilon^{-2})$,
	which is also the best lower bound known for the pricing and the multi-buyer problem.\footnote{ \citet{ColeR14} show that at least a linear dependence on $n$ is necessary when the values are drawn from a regular distribution, but as is, their lower bound needs unbounded valuations. The lower bound probably holds for ``large enough $h$'' but it is not clear if it holds for all $h$.}  
	For the online posted pricing problem, we conjecture that the right dependence on $\epsilon$ should be $\epsilon^{-3}$.  
	No sample complexity bounds for the multi-buyer problem were known before; in fact we introduce the definition of a $\delta$-guarded benchmark for this problem.

\subsection{Proof of Theorem~\ref{thm:convergencerates}}


\begin{proof}
\emph{Online single buyer auction.~}
By Theorem~\ref{thm:expert-regret-bound-rewards}, letting $\pib$ be the uniform distribution over the $k = O(\myfrac{\log{h}}{\epsilon})$ actions, i.e., discretized prices, we have that for any price $p$ (recall that $\range_p = p$):
\[
\alg \ge (1 - \epsilon) \cdot \reward{p} - O \left( \tfrac{\log ( \myfrac{\log{h}}{\epsilon} )}{\epsilon} \cdot p \right) ~.
\]
For the $\delta$-guarded optimal price $\pbest$ (i.e., subject to selling in at least $\delta T$ rounds), we have $\reward{\pbest} \ge \delta T \cdot \pbest$.
Therefore, when $T \ge O\left(\myfrac{\log (\myfrac{\log h}{\epsilon})}{\epsilon^2 \delta} \right)$, the additive term of the above approximation guarantee is at most $\epsilon \cdot \reward{\pbest}$.
So the theorem holds.

The treatment for the case when $h$ is not known up front is essentially the same as in Theorem~\ref{thm:multadditivebounds} and Theorem~\ref{thm:multadditivebounds-simple}.
As a hypothetical algorithm useful for analysis, we consider an algorithm (similar to Algorithm~\ref{alg:MSMW}) with a countably infinite action space comprising all prices of the form $(1+\epsilon)^j$, for $j \ge 0$.
Then, let the prior distribution $\pib$ be such that for any price $p = (1 + \epsilon)^j$, $\pi_p = \epsilon(\eps+2) (1 + \epsilon)^{-2(j+1)} = \frac{\epsilon(eps+2))}{(1 + \epsilon)^2} \cdot \frac{1}{p^2}$. The rest of the proof and how to implement is the same as in the proof of Theorem~\ref{thm:multadditivebounds-simple} (i.e. Algorithm~\ref{alg:single-auction-unknown}).

\medskip

\emph{Online posted pricing.~}
Recall the above formulation of the problem as an online learning problem with bandit feedback.
By Theorem~\ref{thm:bandit-regret-bound} with $k = O(\myfrac{\log{h}}{\epsilon})$ actions, we have that for any price $p$:
\[
\alg \ge (1 - \epsilon) \cdot \reward{p} - O \left( \tfrac{\log{h} \log ( \myfrac{\log{h}}{\epsilon} )}{\epsilon^3} \cdot p \right) ~.
\]
Again, for the $\delta$-guarded optimal price $\pbest$ (i.e., subject to selling in at least $\delta T$ rounds), we have $\reward{\pbest} \ge \delta T \cdot \pbest$.
Therefore, when $T \ge O \left( \myfrac{\log{h} \log \big( \myfrac{\log{h}}{\epsilon} \big)}{\epsilon^4 \delta} \right)$, the additive term of the above approximation guarantee is at most $\epsilon \cdot \reward{\pbest}$.
So the theorem holds.

\medskip

\emph{Online multi buyer auction.~}
Suppose $\ibest$ is the $\delta$-guarded best Myerson-type auction.
Recall that $\bar{V}$ is the largest value such that there are at least $\delta T$ distinct $v(t)$'s with $\max_{\ell\in[n]} v_\ell(t) \ge \bar{V}$.
So we may assume without loss of generality that $\ibest$ does not distinguish values greater than $\bar{V}$.
Hence:
\begin{equation}
\label{eqn:multi-buyer-convergence1}
\range_{\ibest} \le \bar{V} ~.
\end{equation}

Further, note that running a second-price auction with anonymous reserve $\bar{V}$ is a Myerson-type auction (e.g., mapping values less than $\bar{V}$ to virtual value $-\infty$ and values greater than or equal to $\bar{V}$ to virtual value $\bar{V}$), and it gets revenue at least $\delta T \cdot \bar{V}$.
So we have that:
\begin{equation}
\label{eqn:multi-buyer-convergence2}
\reward{\pbest} \ge \delta T \cdot \bar{V} ~.
\end{equation}

Finally, the above implies that to obtain a $1 - \epsilon$ approximation, it suffices to consider prices that are at least $\epsilon \delta \bar{V}$.
Hence, it suffices to consider Myerson-type auctions that, for a given $\bar{V}$, do not distinguish among values greater than $\bar{V}$, and do not distinguish among values smaller than $\epsilon \delta \bar{V}$.
There are $O(\myfrac{\log{h}}{\epsilon})$ different values of $\bar{V}$.
Further, given $\bar{V}$, there are only $O(\myfrac{\log(\myfrac{1}{\epsilon \delta})}{\epsilon})$ distinct values to be considered and, thus, there are at most $O((\myfrac{n \log(\myfrac{1}{\epsilon \delta})}{\epsilon})!)$ distinct Myerson-type auctions of this kind.
Hence, the total number of distinct Myerson-type actions that we need to consider is at most:
\[
k = O \left( \frac{\log{h}}{\epsilon} \cdot \left( \frac{n \log(\myfrac{1}{\epsilon \delta})}{\epsilon} \right)! \right) ~.
\]

Letting $\pib$ be the uniform distribution over the $k$ actions in Theorem~\ref{thm:expert-regret-bound-rewards}, we have that (recall Eqn.~\eqref{eqn:multi-buyer-convergence1}):
\[
\alg \ge (1 - \epsilon) \cdot \reward{\ibest} - O\left( \frac{n \log{(\myfrac{1}{\epsilon \delta})} \log( \myfrac{n \log(\myfrac{1}{\epsilon \delta})}{\epsilon} )}{\epsilon^2} + \frac{\log{ (\myfrac{\log{h}}{\epsilon}) }}{\epsilon} \right) \cdot \bar{V}~.
\]

When $T \ge O\left( \frac{n \log{(\myfrac{1}{\epsilon \delta})} \log( \myfrac{n \log(\myfrac{1}{\epsilon \delta})}{\epsilon} )}{\epsilon^3 \delta} + \frac{\log{ (\myfrac{\log{h}}{\epsilon}) }}{\epsilon^2 \delta} \right)$, the additive term of the above approximation guarantee is at most $\epsilon \cdot \reward{\ibest}$ due to Eqn.~\eqref{eqn:multi-buyer-convergence2}.
So the theorem holds.

Again, the treatment for the case when $h$ is not known up front is similar to that in Theorem~\ref{thm:multadditivebounds}.
When $h$ is not known up front, we consider a hypothetical algorithm with a countably infinite action space $\actions$ as follows.
For any $\bar{V} = (1 + \epsilon)^j$, $j \ge 0$, let the $k^{\prime} = O((\myfrac{n \log(\myfrac{1}{\epsilon \delta})}{\epsilon})!)$ Myerson-type auctions that do not distinguish among values greater than $\bar{V}$, and do not distinguish among values smaller than $\epsilon \delta \bar{V}$ be in $\actions$.
Further, we choose the prior distribution $\pib$ such that the probability mass of each Myerson-type auction for a given $\bar{V}$ is equal to $\frac{\epsilon}{1 + \epsilon} \cdot \frac{1}{\bar{V}} \cdot \frac{1}{k^{\prime}}$.
The approximation guarantee then follows by Theorem~\ref{thm:expert-regret-bound-rewards} and essentially the same argument as the known $h$ case. Implementation is similar to the proof of Theorem~\ref{thm:multadditivebounds} and Theorem~\ref{thm:multadditivebounds-simple} (i.e. a multi-buyer auction version of Algorithm~\ref{alg:single-auction-unknown}). The rest of the proof that shows the revenue loss of this algorithm compared to the hypothetical algorithm is negligible is similar to the proof of Theorem~\ref{thm:multadditivebounds} (and hence omitted for brevity). 
\end{proof}
\vspace{-4mm}
\paragraph{Remark}
\citet{devanur2016sample} show that when the values are drawn from independent regular distributions, the $\epsilon$-guarded optimal price is a $1 - \epsilon$ approximation of the unguarded optimal price.
So our convergence rate for the online multi buyer auction problem in Theorem~\ref{thm:expert-regret-bound-rewards} implies a $\tilde{O}( n \epsilon^{-4})$ sample complexity modulo a mild $\log\log{h}$ dependency on the range, almost matching the best known sample complexity upper bound for regular distributions.

\section{Multi-scale Online Learning with Symmetric Range}
\label{sec:symmetric}
In this section, we consider multi-scale online learning when the rewards are in a symmetric range, i.e. for all $i\in\actions$ and $t\in[T]$, $\rew{i}{t}\in [-\range_i,\range_i]$. The standard analysis for the experts and the bandit problems holds even if the range of $\rew{i}{t}$ is $[-\range_i,\range_i],$ instead of  $[0,\range_i]$.  In contrast, there are subtle differences on the best achievable multi-scale regret bounds between the non-negative and the symmetric range, which we explore in this section. We look at both the full information and bandit setting, and prove action-specific regret upper bounds. We then prove a tight lower-bound in Section~\ref{sec:proof-log-dep} for the full information case, and an almost tight lower-bound in Section~\ref{sec:bandit-lower-bound-symmetric} for the bandit setting.

\subsection{Multi-scale regret bounds for symmetric ranges}
We first show the following upper bound for the full information setting when the range is symmetric. This bound follows the same style of action-specific regret bounds as in Theorem~\ref{thm:expert-regret-bound-rewards}. 
More detailed discussion on how the choice of initial distribution $\pib$ affects the bound is deferred to the appendix, Section~\ref{appendix:bandit-symmetric-initial} (recall that the initial distribution $\pib$ is the distribution over actions that is used in the first round of Algorithm~\ref{alg:MSMW}).
\begin{theorem}
	\label{thm:expert-regret-bound-symmetric}
There exists an algorithm for the multi-scale experts problem with symmetric range that	takes as input any distribution $\pib$ over $\expset$, the ranges $c_i, ~\forall ~i \in A$, and a parameter $0 < \epsilon \le 1$,  and satisfies:
	\begin{equation}
	\label{eqn:expert-regret-bound-symmetric}
	\textstyle
	\forall i \in \actions:~~\ex{\regret{i}}\leq \epsilon\cdot\ex{ \sum_{t\in[T]} \big| \rew{t}{i} \big|}+O \bigg( \frac{1}{\epsilon} \log \big( \frac{1}{\pi_i} \cdot \frac{\range_i}{\range_{\min}}  \big) \cdot \range_i \bigg) .
	\end{equation}
\end{theorem}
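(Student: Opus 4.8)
The plan is to run the same MSMW algorithm (Algorithm~\ref{alg:MSMW}), i.e.\ OMD with the weighted negative entropy $F(x)=\sum_i \range_i x_i\ln x_i$, but with a \emph{range-aware} initial distribution $\mub$ that down-weights large-range experts (morally $\mu_j\propto \pi_j\cdot \range_{\min}/\range_j$, mixed with a point mass on $\imin$), and then to invoke the OMD regret decomposition of Proposition~\ref{prop:expert-regret-bound-general} with the point-mass comparator $\qb=\mathbf 1_i$. The single place where the non-negativity of rewards was used in the proof of Theorem~\ref{thm:expert-regret-bound-rewards} is in turning the local-norm term into a multiple of $\reward{i}$; the whole difficulty of the symmetric case is to redo this step when rewards can be negative, which is also what degrades the logarithmic factor from $\log(1/(\epsilon\pi_i))$ to $\log(\tfrac{1}{\pi_i}\tfrac{\range_i}{\range_{\min}})$.

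For the local norm I would first observe that symmetry does not break the elementary estimate: since $\rew{j}{t}\in[-\range_j,\range_j]$ and $0<\eta\le 1$ we still have $\eta\,\rew{j}{t}/\range_j\in[-1,1]$, so $\exp(x)-x-1\le x^2$ applies and, exactly as in Proposition~\ref{prop:expert-regret-bound-general}, $\tfrac1\eta\sum_t D_F(\pb{t},\wb{t+1})\le \eta\sum_t\sum_j \p{j}{t}\,(\rew{j}{t})^2/\range_j$. The multi-scale normalization then yields the key inequality $(\rew{j}{t})^2/\range_j\le |\rew{j}{t}|$, so the local norm is at most $\eta\sum_t\sum_j \p{j}{t}\,|\rew{j}{t}|$, i.e.\ $\eta$ times the algorithm's expected \emph{absolute} reward. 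This is the honest output of the mirror-descent machinery, and it is a comparator-\emph{independent} quantity that I must convert into the per-expert quantity $\sum_t|\rew{i}{t}|$ demanded by the statement.

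The conversion I propose is a self-bounding step tailored to two-sided rewards. Splitting $g=g^+-g^-$ coordinatewise, the algorithm's expected absolute reward decomposes as $\sum_t\sum_j\p{j}{t}\,|\rew{j}{t}|=\ex{\alg}+2N$, where $N:=\sum_t\sum_j \p{j}{t}\,(\rew{j}{t})^-$ is the expected amount of \emph{negative} reward the algorithm collects. Plugging $\qb=\mathbf 1_i$ into Proposition~\ref{prop:expert-regret-bound-general}, the $\ex{\alg}$ part recombines with the left-hand side $\reward{i}-\ex{\alg}=\ex{\regret{i}}$ (moving a factor $\eta\,\ex{\regret{i}}$ to the left) to leave the clean term $\eta\,\reward{i}$, and since $\reward{i}=\sum_t\rew{i}{t}\le\sum_t(\rew{i}{t})^+\le\sum_t|\rew{i}{t}|$ this is at most $\eta\sum_t|\rew{i}{t}|$ — precisely the target multiplicative term once $\eta=\Theta(\epsilon)$. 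What then remains on the right-hand side is the residual $2\eta N$ together with the initial divergence $\tfrac1\eta D_F(\mathbf 1_i,\mub)$.

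The main obstacle — and the reason the range-aware $\mub$ is needed — is controlling these last two terms \emph{simultaneously}, so that both scale with $\range_i$ (not $\range_{\max}$) up to a $\log(\range_i/\range_{\min})$ factor. The negative-collection term $N$ is an exploration cost: an expert whose normalized reward is persistently negative is suppressed by MSMW, and the total mass-time it absorbs before being driven out is governed by its initial mass; because a large-range expert contributes up to $\range_j$ per round, its initial mass must be discounted by roughly $\range_{\min}/\range_j$ for its contribution to $N$ to be charged at scale $\range_i$ rather than $\range_{\max}$. This is exactly where the multi-scale projection of Algorithm~\ref{alg:MSMW} (the per-coordinate factor $\exp(-\lambda^*/\range_j)$) does real work, and proving that it forces $N=O(\tfrac1\eta\range_i\log(\cdots))$ is the technical heart. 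The same discounting keeps $\mu_i\gtrsim \pi_i\range_{\min}/\range_i$, which produces $\tfrac1\eta D_F(\mathbf 1_i,\mub)=O(\tfrac1\eta\range_i\log(\tfrac1{\pi_i}\tfrac{\range_i}{\range_{\min}}))$ after bounding the $j=i$ and $j=\imin$ contributions as in Section~\ref{sec:expert-reward-only} (the off-diagonal terms cancelling once $\qb$ and $\mub$ agree away from $i$); a detailed discussion of how the input prior $\pib$ enters is deferred to Section~\ref{appendix:bandit-symmetric-initial}. Setting $\eta=\Theta(\epsilon)$ and collecting the three pieces then gives \eqref{eqn:expert-regret-bound-symmetric}.
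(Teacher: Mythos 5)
Your first three steps coincide exactly with the paper's own proof. The paper also runs MSMW with the range-aware initial distribution $\mu_j = \pi_j \range_{\min}/\range_j$ for $j \ne \imin$ (remaining mass on $\imin$), invokes Proposition~\ref{prop:expert-regret-bound-general} with $\qb = \mathbf{1}_i$, bounds the local norm via $(\rew{j}{t})^2/\range_j \le |\rew{j}{t}|$, bounds the initial divergence by $\tfrac{1}{\eta}\range_i\log\big(\tfrac{1}{\pi_i}\tfrac{\range_i}{\range_{\min}}\big) + \tfrac{2}{\eta}\range_{\min}$, and sets $\eta = \epsilon$. Crucially, the paper \emph{stops there}: the multiplicative term its derivation actually produces (see the display preceding Corollary~\ref{lem:expert-regret-bound}) is the \emph{algorithm's} expected absolute reward $\ex{\sum_{t\in[T]}\sum_{j\in\actions}\p{j}{t}\,|\rew{j}{t}|}$, and no conversion to the comparator's quantity $\sum_t |\rew{i}{t}|$ is attempted (the notation in Corollary~\ref{lem:expert-regret-bound} is sloppy on exactly this point, and you correctly noticed the discrepancy). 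Your fourth step, the self-bounding conversion, is therefore an addition to the paper's argument — and it is where the proposal breaks.

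The missing lemma you flag as the ``technical heart,'' namely $N = O\big(\tfrac{1}{\eta}\range_i\log(\cdots)\big)$ for the expected negative collection $N$, is not merely unproven: it is false, for every choice of initial distribution, because the adversary need not be persistently negative — it can alternate signs and repeatedly re-inflate a suppressed expert at the algorithm's expense. Concretely, take $k=2$ with $\range_1=\range_2=1$ (so MSMW is exactly Hedge, and with uniform $\pib$ one has $\mub = (\tfrac12,\tfrac12)$), let expert $1$ have reward $0$ in every round and expert $2$ alternate $+1,-1$. On odd rounds the algorithm plays expert $2$ with probability $\tfrac12$ and gains $\tfrac12$; on even rounds it plays it with probability $\tfrac{e^{\eta}}{1+e^{\eta}}$ and loses that amount; each pair of rounds nets $\tfrac12 - \tfrac{e^{\eta}}{1+e^{\eta}} = -\tfrac12\tanh(\eta/2) = -\Theta(\eta)$, so $\ex{\alg} = -\Theta(\eta T)$ and $N = \Theta(T)$, unbounded in $T$. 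Worse, this shows the comparator-form inequality you are aiming at is itself false for MSMW: $\ex{\regret{1}} = \Theta(\eta T)$ while your target bound for $i=1$ reads $\epsilon \cdot 0 + O\big(\tfrac{1}{\epsilon}\log 2\big)$, which is violated once $T \gg \epsilon^{-2}$ (with $\eta = \Theta(\epsilon)$). So the gap cannot be patched along your route; a per-expert multiplicative term of that form would require a genuinely different algorithm. The repair is to drop step four entirely and read the theorem's first term, as the paper's derivation does, as the algorithm's expected absolute reward $\ex{\sum_t\sum_j \p{j}{t}|\rew{j}{t}|}$ — under that reading your steps 1--3 already constitute the paper's complete proof. (One minor slip elsewhere: with $\qb = \mathbf{1}_i$ the off-diagonal divergence terms do not cancel, since $q_j = 0 \ne \mu_j$ for $j \ne i$; they contribute $\sum_{j\ne i}\mu_j\range_j \le 2\range_{\min}$, which is precisely the paper's third term.)
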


Similar to Section~\ref{sec:multi-scale-nonnegative-results}, we can compute the pure-additive version of the bound in Theorem~\ref{thm:expert-regret-bound-symmetric} by setting $\epsilon=\sqrt{\frac{\log (k\cdot \frac{\range_{\max}}{\range_{\min}})}{T}}$, as in Corollary~\ref{cor:expert-regret-bound-rewards-additive}. 
\begin{corollary}
	\label{cor:expert-regret-bound-rewards-additive-symmetric}
	There exists an algorithm for the online multi-scale experts problem with symmetric range that  takes as input the ranges $c_i, ~\forall ~i \in A$,  and satisfies:
	\begin{equation}\textstyle
	\forall i \in \actions:~~\ex{\regret{i}} \leq {O} \left( \range_{i}\cdot\sqrt{T\log (k\cdot \frac{\range_{\max}}{\range_{\min}})} \right)
	\end{equation}
\end{corollary}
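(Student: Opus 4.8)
The plan is to instantiate the online-mirror-descent regret bound of Proposition~\ref{prop:expert-regret-bound-general} for the weighted-negative-entropy mirror map, exactly as in the proof of Theorem~\ref{thm:expert-regret-bound-rewards} (via Lemma~\ref{lem:MSMWisOMD}), and to track the two places where the symmetry of the range forces a departure from the non-negative analysis. Proposition~\ref{prop:expert-regret-bound-general} already holds verbatim for rewards in $[-\range_i,\range_i]$, since its proof only uses $\eta\,\rew it/\range_i\in[-1,1]$. So for any benchmark $\qb$ and any initial distribution $\mub=\pb 1$ I start from
\[
\sum_{i\in\expset}q_i\reward i-\ex{\alg}\le\eta\sum_{t\in[T]}\sum_{i\in\actions}\p it\frac{(\rew it)^2}{\range_i}+\frac1\eta\sum_{i\in\expset}\range_i\Big(q_i\ln\tfrac{q_i}{\mu_i}-q_i+\mu_i\Big),
\]
and everything reduces to bounding the two terms on the right.

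For the local-norm (first) term, the non-negative proof used $(\rew it)^2/\range_i\le\rew it$ to identify it with $\eta\,\ex{\alg}$. Here the only available inequality is $(\rew it)^2/\range_i\le|\rew it|$, so the first term is at most $\eta\,\ex{\sum_{t\in[T]}\sum_{i\in\actions}\p it\,|\rew it|}$, i.e. $\eta$ times the algorithm's expected total absolute reward. This is exactly the quantity $\ex{\sum_t|\rew ti|}$ appearing in the statement (the magnitude of the reward actually collected), and it is the source of the multiplicative $\epsilon$-term. The conceptual point is that with signed rewards this term \emph{cannot} be charged to the benchmark action's reward as in the non-negative case; the multiplicative factor is necessarily against the realized absolute reward, which is why the symmetric bound differs from Theorem~\ref{thm:expert-regret-bound-rewards}.

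For the divergence (second) term, the $\range_i/\range_{\min}$ factor inside the logarithm dictates the choice of initial distribution. I would not start from $\pib$ directly but from the range-reweighted distribution $\nu_j\propto\pi_j/\range_j$, taking $\mub=(1-\eta)\mathbf{1}_{\imin}+\eta\nub$ together with $\qb=(1-\eta)\mathbf{1}_{i}+\eta\nub$, so that every coordinate $j\notin\{i,\imin\}$ cancels in the divergence. The $j=i$ coordinate then contributes $\tfrac1\eta\range_i\ln(1/\mu_i)$, and since $\nu_i\ge\pi_i\,\range_{\min}/\range_i$ (because $\sum_k\pi_k/\range_k\le 1/\range_{\min}$) this is at most $\tfrac1\eta\range_i\ln\!\big(\tfrac{1}{\pi_i}\cdot\tfrac{\range_i}{\range_{\min}}\big)$ up to an additive $\ln(1/\eta)$, matching the claimed additive term; the $j=\imin$ coordinate contributes at most $\tfrac1\eta\range_{\imin}\le\tfrac1\eta\range_i$. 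This reweighting is the single genuinely new ingredient over the non-negative proof, and the detailed effect of the prior on the bound is precisely what the deferred appendix discussion concerns.

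The main obstacle I expect is reconciling the benchmark mixing with the signed rewards. In the non-negative proof the extra mass $\eta\sum_j\pi_j\reward j$ that the mixing puts on the left-hand side is non-negative and can simply be dropped; with symmetric rewards $\sum_j\nu_j\reward j$ may be negative, so this term — together with the residual divergence contribution of order $1/Z=\big(\sum_k\pi_k/\range_k\big)^{-1}$ created by the reweighting — must be \emph{absorbed} by the multiplicative absolute-reward term rather than discarded, and checking that it is dominated is the delicate step. Once these cross terms are controlled, the inequality rearranges as in Theorem~\ref{thm:expert-regret-bound-rewards} into $\ex{\regret i}\le O(\eta)\cdot\ex{\sum_t|\rew ti|}+O\big(\tfrac1\eta\range_i\log(\tfrac{1}{\pi_i}\tfrac{\range_i}{\range_{\min}})\big)$, and taking $\eta=\Theta(\epsilon)$ yields the theorem. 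The pure-additive Corollary~\ref{cor:expert-regret-bound-rewards-additive-symmetric} then follows by the stated tuning $\epsilon=\sqrt{\log(k\,\range_{\max}/\range_{\min})/T}$ together with $|\rew it|\le\range_i$.
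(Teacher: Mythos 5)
Your skeleton is the paper's: Proposition~\ref{prop:expert-regret-bound-general} applied verbatim to symmetric rewards, the local norm bounded via $(\rew{i}{t})^2/\range_i\le|\rew{i}{t}|$, a range-reweighted prior $\mu_j\propto\pi_j\range_{\min}/\range_j$ to control the divergence, and the tuning $\epsilon=\sqrt{\log(k\range_{\max}/\range_{\min})/T}$ (this is exactly Corollary~\ref{lem:expert-regret-bound} plus the proofs of Theorem~\ref{thm:expert-regret-bound-symmetric} and Corollary~\ref{cor:expert-regret-bound-rewards-additive-symmetric}). But there is a genuine gap at the final step. You correctly identify the multiplicative term as $\eta\,\ex{\sum_{t}\sum_{j}\p{j}{t}\,|\rew{j}{t}|}$ --- the algorithm's expected \emph{realized} absolute reward --- and you even stress that this quantity ``cannot be charged to the benchmark action's reward.'' Yet your last sentence deduces the corollary ``together with $|\rew{i}{t}|\le\range_i$,'' an inequality that only bounds the \emph{comparator's} absolute reward. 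For the quantity you actually derived, the only uniform bound is $\sum_t\sum_j\p{j}{t}|\rew{j}{t}|\le T\range_{\max}$ (and this is attained, e.g., when the adversary makes the largest-range action dominant so the algorithm concentrates on it), so your chain of inequalities yields $O\big(\range_{\max}\sqrt{T\log(k\range_{\max}/\range_{\min})}\big)$ for \emph{every} $i$, not the claimed $O\big(\range_i\sqrt{T\log(k\range_{\max}/\range_{\min})}\big)$. What Theorem~\ref{thm:expert-regret-bound-symmetric} and Corollary~\ref{lem:expert-regret-bound} put in the multiplicative slot is $\ex{\sum_t|\rew{i}{t}|}$ for the comparator $i$ itself, which is at most $T\range_i$; that per-action form is precisely what makes the stated tuning produce the per-action additive bound, and it does \emph{not} follow from the local-norm inequality you use --- the local norm involves all actions' rewards weighted by the algorithm's distribution and cannot be bounded pointwise by $|\rew{i}{t}|$ (take $\rew{i}{t}\equiv 0$ while other actions receive non-zero rewards). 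So you must either establish the comparator-indexed form by a separate argument or the corollary does not follow from what you proved; conflating the two readings is exactly the delicate point in this section of the paper, and your write-up makes the unjustified leap explicit.

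A second, smaller defect: mixing the benchmark, $\qb=(1-\eta)\mathbf{1}_{i}+\eta\nub$, is unnecessary here and breaks the bound for low-range comparators. The paper takes $\qb=\mathbf{1}_{i}$ exactly; with the reweighted prior the divergence is already at most $\range_i\ln\big(\tfrac{1}{\pi_i}\cdot\tfrac{\range_i}{\range_{\min}}\big)+\sum_j\mu_j\range_j\le\range_i\ln\big(\tfrac{1}{\pi_i}\cdot\tfrac{\range_i}{\range_{\min}}\big)+2\range_{\min}$, with no $\ln(1/\eta)$ term and no cross term to absorb. Your mixing creates the term $\eta\sum_j\nu_j\reward{j}$, whose magnitude can reach $\eta T\sum_j\nu_j\range_j=\eta T\big(\sum_k\pi_k/\range_k\big)^{-1}$, which for uniform $\pib$ can be of order $\eta Tk\range_{\min}$; after tuning this contributes $\Theta\big(k\range_{\min}\sqrt{T\log(\cdot)}\big)$ to every comparator's bound --- a factor $k$ too large when $i=\imin$. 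It is not dominated by the absolute-reward term in any evident way, so the step you defer as ``delicate'' fails as set up; dropping the mixing, as the paper does, removes the problem entirely.
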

If we compare the above regret bound with the standard $O(\range_{\max}\sqrt{T\log k})$ regret bound for the experts problem, we see that we replace the dependency on $\range_{\max}$ in the standard bound with $  \range_i \sqrt{\log(\frac{\range_{\max}}{\range_{\min}})}$.
It is natural to ask whether we could get rid of the dependence on $\log(\range_i/\range_{\min})$ and show a regret bound of  $O(\range_{i}\sqrt{T\log k})$, like we did for non-negative rewards.
However, the next theorem shows that this dependence on $\log(\range_i/\range_{\min})$ in the above bound is necessary, in a weak sense: where the constant in the $O(\cdot)$ is universal and does not depend on the ranges $c_i$. 
This is because the lower bound only holds for ``small'' values of the horizon $T$, which nonetheless grows with the $\{c_i\}$s.\footnote{For this reason we chose not to include this bound in Table \ref{tab:multi-scale}.}
\begin{theorem}
	\label{thm:expert-log-dependency-symmetric}
	There exists an action set of size $k$, and ranges $c_i , \forall i \in [k],$ and time horizon $T$, such that for all algorithms for the online multi-scale experts problem with symmetric range,   there is a sequence of     $T$ gain vectors such that 
	\[\textstyle
	\exists i \in \actions:~~\ex{\regret{i}} >  \frac{ \range_{i}}{4}\cdot\sqrt{T\log (k\cdot \frac{\range_{\max}}{\range_{\min}})} 
	\]
\end{theorem}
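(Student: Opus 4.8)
The plan is to prove this lower bound through Yao's principle: I will fix one hard instance (ranges and horizon) together with a distribution over reward sequences, and show that \emph{every} algorithm incurs expected regret against the best fixed expert exceeding the target. Note that for a fixed sequence $\max_{i}\ex{\regret{i}}=\opt-\ex{\alg}$ (the maximum over $i$ and the expectation over the algorithm's coins decouple because each $\reward{i}$ is then deterministic), so it suffices to lower bound $\opt-\ex{\alg}$ and to ensure the maximizing expert has the intended range. Matching the footnote, the instance will live in a \emph{small}-$T$ regime with exponentially many experts and an exponentially large range ratio: fix $T$, place $N=2^{2T}$ experts at the top scale $\range_{\max}=2^{2T}$ and one expert at scale $\range_{\min}=1$, so that $k=N+1$ and $\log(k\,\range_{\max}/\range_{\min})=\Theta(T)$, whence the target $\tfrac{\range_i}{4}\sqrt{T\log(k\,\range_{\max}/\range_{\min})}$ is $\Theta(\range_{\max}T)$ for a top-scale comparator.

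For the randomized adversary I would \emph{plant} a hidden good expert: draw $i^*$ uniformly among the $N$ top-scale experts, give $i^*$ the reward $+\range_{\max}$ in every round (a legal value in $[-\range_{\max},\range_{\max}]$), let every other top-scale expert receive an independent uniform $\pm\range_{\max}$ each round, and give the small expert the constant $0$. The virtue of planting is that $\reward{i^*}=\range_{\max}T$ is the largest reward any expert can attain, so $\opt=\range_{\max}T$ \emph{deterministically} and the best fixed expert always sits at the top scale; this sidesteps the difficulty that under a plain i.i.d.\ construction the realized best expert might land at the wrong scale, and it fixes the comparator's range at $\range_{\max}$.

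The core step is to upper bound $\ex{\alg}$ for an arbitrary algorithm. Because the feedback is full information, the ``consistent set'' $C_t$ of experts that have returned $+\range_{\max}$ in all of rounds $1,\dots,t-1$ depends on the rewards alone; since $i^*$ is always consistent and all consistent experts are exchangeable and produce identical observations, the posterior of $i^*$ given $\Hist{t-1}$ is uniform on $C_t$. As the algorithm's choice $i_t$ is, given the history, independent of $i^*$, its round-$t$ expected reward is at most $\range_{\max}\cdot\max_j\Prob[\,j=i^*\mid \Hist{t-1}]=\range_{\max}/|C_t|$, so $\ex{\alg}\le \range_{\max}\sum_{t=1}^{T}\ex{1/|C_t|}$. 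Writing $|C_t|=1+B_t$ with $B_t\sim\mathrm{Bin}(N-1,2^{-(t-1)})$ and using $\ex{1/(1+B_t)}\le 2^{t-1}/N$, this sum telescopes to at most $(2^{T}-1)/N\le 2^{-T}\le 1$. Hence $\ex{\alg}\le\range_{\max}$ and $\opt-\ex{\alg}\ge \range_{\max}(T-1)$.

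Putting the pieces together with Yao's principle yields a single sequence on which some top-scale expert $i^*$ has $\ex{\regret{i^*}}=\opt-\ex{\alg}\ge\range_{\max}(T-1)$; since $\log(k\,\range_{\max}/\range_{\min})\approx 4T\ln 2$, the target equals $\tfrac{\range_{\max}}{4}\sqrt{T\cdot 4T\ln2}=\tfrac{\range_{\max}T}{4}\sqrt{4\ln2}\approx 0.42\,\range_{\max}T$, which is beaten by $\range_{\max}(T-1)\ge\tfrac12\range_{\max}T$ for $T\ge 2$, giving the claimed factor $\tfrac14$ with $\range_{i^*}=\range_{\max}$. The main obstacle I anticipate is exactly the bookkeeping that forces the small-$T$ regime: the whole argument hinges on $|C_t|$ remaining exponentially large for \emph{all} $t\le T$ (so the algorithm can never isolate $i^*$), which requires $T=O(\log N)$ and is why $T$ must stay small yet grow with the ranges; a secondary point is pinning down the exponents of $N$ and $\range_{\max}$ so that $\log(k\,\range_{\max}/\range_{\min})<16\,T$ and the universal constant $\tfrac14$ is actually cleared, while staying consistent with the $O(\range_{\max}\sqrt{T\log k})$ guarantee of vanilla multiplicative weights.
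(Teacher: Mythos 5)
Your proposal is formally correct, and it takes a genuinely different route from the paper. The paper uses only $k=2$ experts ($c_1=1$, $c_2=h$, $T=\tfrac{1}{2}\log_2 h - 1$) and an adaptive punishment argument: by induction, if the algorithm keeps its regret against the zero-reward expert below $\tfrac{1}{2}T+\sqrt{h}$, then its probability $q_t$ of playing the big expert on the all-$(+h)$ history must stay below $2^{t}/\sqrt{h}$ (otherwise the adversary finishes with $-h$ gains and forces a large negative total), and consequently on the all-$(+h)$ sequence the algorithm collects at most $h$ while expert $2$ collects $Th$. Your construction instead plants a needle among $N=2^{2T}$ top-scale experts and runs a Bayesian counting argument: the posterior of $i^*$ stays uniform on the consistent set $C_t$, $\ex{1/|C_t|}\le 2^{t-1}/N$, so $\ex{\alg}\le \range_{\max}$ while $\opt=\range_{\max}T$ deterministically; your decoupling observation that $\max_i \ex{\regret{i}}=\opt-\ex{\alg}$ for a fixed sequence, and the final averaging step, are both sound, and the constant clears (for $T\ge 3$, under either base of logarithm, once the $+1$ terms in $k$ and in the exponent are tracked).

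There is, however, a caveat worth flagging. Because you take $k=2^{2T}+1$, you have $\log k=\Theta(T)$ already, and your regret lower bound $\range_{\max}(T-1)$ is \emph{smaller} than $\range_{\max}\sqrt{2T\ln k}$: your instance is entirely consistent with an algorithm achieving regret $O\big(\range_i\sqrt{T\log k}\big)$ with no dependence on $\range_{\max}/\range_{\min}$ at all. The paper states this theorem precisely to show that the $\log(\range_i/\range_{\min})$ term in Theorem~\ref{thm:expert-regret-bound-symmetric} is necessary; its two-expert instance accomplishes this because there $\log k=O(1)$, so the $\Omega(h\log h)$ regret can only be charged to the range ratio. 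Relatedly, your construction does not use the symmetric range in an essential way: replacing the $\pm\range_{\max}$ rewards of the non-planted experts by uniform $\{0,\range_{\max}\}$ rewards would give essentially the same bound, whereas the paper's argument hinges on the adversary's ability to punish exploration with $-h$, which is exactly the phenomenon separating the symmetric-range setting from the non-negative one (cf.\ Theorem~\ref{thm:expert-regret-bound-rewards}). So you do prove the literal statement, but via an instance that does not certify the necessity of the range-ratio factor, which is the theorem's purpose in the paper.
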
 


We then show the following upper bound for the bandit setting when the range is symmetric. This bound also follows the same style of action-specific regret bounds as in Theorem~\ref{thm:bandit-regret-bound}.
\begin{theorem}
	\label{thm:bandit-regret-bound-symmetric}
There exists an algorithm for the multi-scale bandits problem with symmetric range that	takes as input 
the ranges $c_i, ~\forall ~i \in A$, and a parameter $0 < \epsilon \le 1/2$,  and satisfies:
	\begin{equation}
	\label{eqn:bandit-regret-bound-symmetric}
	\textstyle
	\forall i \in \actions:~~\ex{\regret{i}}\leq O \big( \epsilon T + \frac{k}{\epsilon} \frac{c_{\max}}{c_{\min}} \log\big(\frac{k}{\epsilon} \frac{c_{\max}}{c_{\min}}\big) \big) \cdot c_i .
	\end{equation}
\end{theorem}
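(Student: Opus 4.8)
The plan is to run the Bandit-MSMW algorithm (Algorithm~\ref{alg:Bandit-MSMW}) essentially unchanged, but with the exploration probability and learning rate tuned to the spread of the ranges: I would set $\gamma = \epsilon\,\range_{\min}/\range_{\max}$ and $\eta = \gamma/k = \epsilon\,\range_{\min}/(k\range_{\max})$ (both computable since the ranges are given as input). The skeleton mirrors the non-negative analysis of Lemma~\ref{lem:bandit-msmw}. By unbiasedness of the importance-weighted estimators we have $\ex{\treward{i}} = \reward{i}$ and $\ex{\talg} = \ex{\alg}$, so it suffices to bound $\ex{\treward{i} - \talg}$. Since the iterates $\pb{t}$ are exactly the MSMW updates on the estimated rewards $\trewb{t}$, I would apply Proposition~\ref{prop:expert-regret-bound-general} with benchmark $\qb = \mathbf{1}_i$; this is legitimate because $|\trew{j}{t}| \le k\range_j/\gamma$ together with $\eta \le \gamma/k$ forces $\eta\,\trew{j}{t}/\range_j \in [-1,1]$, the only hypothesis the proposition requires.

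This produces, in expectation, the usual split into a local-norm term, an initial-divergence term, and a mixing term arising from $\tpb{t} = (1-\gamma)\pb{t} + \tfrac{\gamma}{k}\mathbf{1}$. The genuinely new ingredient, forced by the symmetric range, is how these are controlled. For the local norm I would use $(\rew{j}{t})^2 \le \range_j^2$ together with $\p{j}{t}/\tp{j}{t} \le 1/(1-\gamma) \le 2$, so that its conditional expectation is at most $2\sum_j\range_j \le 2k\range_{\max}$ per round, hence $\le 2\eta T k\range_{\max}$ overall. For the mixing term I cannot invoke the non-negative trick of lower-bounding $\ex{\talg}$ by $\tfrac{\gamma}{k}\sum_j\reward{j}\ge 0$, since here $\reward{j}$ may be negative; instead I would bound $\ex{\talg}$ and $\sum_j\reward{j}$ crudely in absolute value by $T\range_{\max}$ and $kT\range_{\max}$, which (after accounting for the $\tfrac{1}{1-\gamma}$ factors relating $\ex{\talg}$ to $\ex{\sum_t\pb{t}\cdot\rewb{t}}$) yields a contribution of order $\gamma T\range_{\max}$.

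It remains to bound the initial divergence $\tfrac{1}{\eta}\sum_j\range_j\big(q_j\ln\tfrac{q_j}{\p{j}{1}} - q_j + \p{j}{1}\big)$ with $\qb = \mathbf{1}_i$ and $\pb{1} = (1-\gamma)\mathbf{1}_{\imin} + \tfrac{\gamma}{k}\mathbf{1}$. The $j=i$ term is at most $\tfrac{1}{\eta}\range_i\ln(k/\gamma)$ since $\p{i}{1} \ge \gamma/k$, and the residual mass contributes $\tfrac{1}{\eta}\sum_j\range_j\p{j}{1}$. The crucial observation is that $\sum_j\range_j\p{j}{1} = (1-\gamma)\range_{\min} + \tfrac{\gamma}{k}\sum_j\range_j \le \range_{\min} + \gamma\range_{\max}$, and with $\gamma = \epsilon\,\range_{\min}/\range_{\max}$ this is exactly $(1+\epsilon)\range_{\min}$; without this identity one would naively pay $\range_{\max}$ here and incur a spurious $(\range_{\max}/\range_{\min})^2$ factor. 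Writing $\rho = \range_{\max}/\range_{\min}$ and substituting the parameters, the local-norm and mixing terms each collapse to $O(\epsilon T\range_{\min}) \le O(\epsilon T\range_i)$, while the divergence becomes $O\big(\tfrac{k}{\epsilon}\rho\log(\tfrac{k}{\epsilon}\rho)\range_i\big)$. I would finish by factoring $\range_i$ out of every term using $\range_{\min}\le\range_i$ and $\range_{\max}\le\rho\,\range_i$, reaching the claimed bound.

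The main obstacle is precisely the loss of the non-negativity structure. In the non-negative setting the local-norm and exploration terms are charged against the non-negative quantity $\ex{\alg}$, giving clean multiplicative-in-$\reward{i}$ control; with symmetric rewards I must instead pay $\range_{\max}$ for each such term, and the only way to keep the time-dependent part proportional to $\epsilon T$ rather than $\epsilon\rho T$ is to shrink the exploration probability to $\gamma = \epsilon\,\range_{\min}/\range_{\max}$. Checking that this single choice simultaneously (i) respects the constraint $\eta\le\gamma/k$, (ii) drives both $T$-dependent terms down to $\epsilon T\range_i$, and (iii) yields exactly the additive factor $\tfrac{k}{\epsilon}\rho\log(\tfrac{k}{\epsilon}\rho)$ while sidestepping the $\rho^2$ trap in the initial divergence, is the delicate part of the argument.
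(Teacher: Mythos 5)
Your proposal is correct and follows essentially the same route as the paper: the paper proves an intermediate lemma (Lemma~\ref{lem:bandit-msmw-symmetric}) for general $(\gamma,\eta)$ with exactly your term-by-term bounds --- local norm charged at $2\eta T k c_{\max}$, exploration/mixing terms charged crudely against $\gamma T c_{\max}$ rather than against $\ex{\alg}$, initial divergence via $\p{i}{1}\ge \gamma/k$ --- and then sets $\gamma = \epsilon\,c_{\min}/c_{\max}$, $\eta = \gamma/k$ so that $\gamma c_{\max} = \epsilon c_{\min}\le\epsilon c_i$, which is precisely your parameter choice and your key observation. The only (immaterial) difference is that you take the benchmark $\qb=\mathbf{1}_i$ as a point mass while the paper uses $\qb=(1-\gamma)\mathbf{1}_i+\tfrac{\gamma}{k}\mathbf{1}$.
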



Also, similar to Section~\ref{sec:multi-scale-nonnegative-results}, we can compute the pure-additive version of the bound in Theorem~\ref{thm:bandit-regret-bound-symmetric} by setting $\epsilon=\sqrt{\frac{k\frac{\range_{\max}}{\range_{\min}}\log (kT\cdot \frac{\range_{\max}}{\range_{\min}})}{T}}$, as in Corollary~\ref{cor:expert-regret-bound-rewards-additive}. This bound is comparable to the standard regret bound of $O(\range_{\max}\sqrt{kT\log k})$~\citep{auer1995gambling} for the adversarial multi-armed bandits problem.
\begin{corollary}
	\label{cor:bandits-regret-bound-rewards-additive-symmetric}
	There exists an algorithm for the online multi-scale bandits problem with symmetric range that satisfies:
	\begin{equation}\textstyle
	\forall i \in \actions:~~\ex{\regret{i}} \leq {O} \left( \range_{i}\cdot\sqrt{Tk\cdot \frac{\range_{\max}}{\range_{\min}}\log (k T\cdot \frac{\range_{\max}}{\range_{\min}})} \right) .
	\end{equation}
\end{corollary}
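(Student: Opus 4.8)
The plan is to derive this corollary directly from Theorem~\ref{thm:bandit-regret-bound-symmetric} by substituting the prescribed parameter $\epsilon = \sqrt{\frac{k \frac{\range_{\max}}{\range_{\min}} \log(kT \cdot \frac{\range_{\max}}{\range_{\min}})}{T}}$, exactly as the pure-additive corollaries in the non-negative setting were obtained from their parameterized versions. Writing $R := \frac{\range_{\max}}{\range_{\min}}$ for brevity, Theorem~\ref{thm:bandit-regret-bound-symmetric} bounds $\ex{\regret{i}}$ by $O(\epsilon T + \frac{k}{\epsilon} R \log(\frac{k}{\epsilon} R)) \cdot \range_i$, so it suffices to show that, after the substitution, each of the two summands inside the $O(\cdot)$ is of order $\sqrt{TkR\log(kTR)}$; the $\range_i$ factor then carries through unchanged.

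For the first summand I would simply observe that $\epsilon T = \sqrt{T \cdot kR\log(kTR)} = \sqrt{TkR\log(kTR)}$, which already matches the target.

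For the second summand I would first simplify the coefficient, $\frac{k}{\epsilon} = \sqrt{\frac{kT}{R\log(kTR)}}$, so that $\frac{k}{\epsilon} R = \sqrt{\frac{R kT}{\log(kTR)}}$. The only step needing a little care is the nested logarithm $\log(\frac{k}{\epsilon}R)$: since $\frac{k}{\epsilon}R \le \sqrt{RkT}$, we get $\log(\frac{k}{\epsilon}R) \le \frac{1}{2}\log(RkT) = O(\log(kTR))$. Multiplying through then gives $\frac{k}{\epsilon}R\log(\frac{k}{\epsilon}R) = O(\sqrt{RkT/\log(kTR)} \cdot \log(kTR)) = O(\sqrt{RkT\log(kTR)})$, matching the first summand.

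The two terms are thus balanced---which is precisely why this choice of $\epsilon$ is the right one---so summing them and reinserting $\range_i$ yields the claimed bound $O(\range_i \sqrt{Tk \frac{\range_{\max}}{\range_{\min}} \log(kT \frac{\range_{\max}}{\range_{\min}})})$. I do not anticipate a genuine obstacle beyond the bookkeeping of this nested logarithm; the one side condition to keep in mind is that Theorem~\ref{thm:bandit-regret-bound-symmetric} requires $\epsilon \le 1/2$, which holds for the prescribed value once $T$ is sufficiently large relative to $kR\log(kTR)$---precisely the regime in which the additive bound carries content.
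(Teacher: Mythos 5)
Your proposal is correct and follows exactly the paper's route: the paper derives this corollary by the same one-line substitution $\epsilon=\sqrt{\frac{k\frac{\range_{\max}}{\range_{\min}}\log (kT\cdot \frac{\range_{\max}}{\range_{\min}})}{T}}$ into Theorem~\ref{thm:bandit-regret-bound-symmetric}, leaving the bookkeeping implicit. Your explicit handling of the nested logarithm and of the side condition $\epsilon\le 1/2$ is in fact more careful than the paper's own treatment.
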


Once again, for the bandit problem, the following theorem shows  that this bound cannot be improved beyond logarithmic factors
(to get a  guarantee like that of Theorem \ref{thm:bandit-regret-bound}, for instance).

\begin{theorem}\label{thm:bandit-lower-bound-symmetric} 
	There exists an action set of size $k$, and ranges $c_i , \forall i \in [k],$ such that for all algorithms for the online multi-scale bandit problem with symmetric range, for all sufficiently large time horizon $T$,  there is a sequence of $T$ gain vectors such that 
	\[
		\exists i \in \actions:~~\ex{\regret{i}} >   \frac{\range_{i}}{8\sqrt{2}}\cdot\sqrt{Tk\cdot \frac{\range_{\max}}{\range_{\min}}} . 
	\]
\end{theorem}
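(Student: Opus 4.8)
The plan is to prove the matching lower bound by a Bayesian (minimax) argument in the spirit of the classical bandit lower bound, but on a \emph{two-scale} instance in which the essential use of \emph{negative} rewards is what manufactures the extra $\sqrt{\range_{\max}/\range_{\min}}$ factor. Set $\range_{\min}=1$ and $\range_{\max}=c$. The action set consists of one \emph{safe} arm $s$ of range $1$ whose reward is deterministically $0$, together with $m:=k-1$ \emph{risky} arms of range $c$ whose rewards lie in $\{-c,+c\}$. I would consider a family of stochastic environments: a \emph{null} environment in which every risky arm has mean $-\delta c$ (i.e.\ $\Prob[\,\text{reward}=+c\,]=\tfrac{1-\delta}{2}$), and, for each risky arm $j$, an environment $\nu_j$ in which arm $j$ has mean $+\delta c$ while all other risky arms keep mean $-\delta c$. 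The gap is tuned to $\delta=\Theta(\sqrt{mc/T})$, which is at most $\tfrac12$ precisely once $T$ is large; this is where the ``sufficiently large $T$'' hypothesis enters. Write $\mathbb{E}_0$ and $\mathbb{E}_j$ for expectations in the null environment and in $\nu_j$.

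The instance produces two qualitatively different ways for the regret to be large, and the safe arm's deterministic zero reward is exactly what couples one of them to its \emph{own} unit scale. In the null environment the safe arm is optimal, and since every pull of a risky arm loses $\delta c$ in expectation, $\ex{\regret{s}}=\delta c\cdot \mathbb{E}_0[M]$, where $M$ is the number of risky pulls; the positivity of this quantity is precisely what fails for non-negative rewards and is the source of the amplification. In environment $\nu_j$ the good arm $j$ is optimal with value $\delta c\,T$, so $\ex{\regret{j}}\ge \delta c\,(T-\mathbb{E}_j[N_j])$, where $N_j$ counts the pulls of arm $j$. I would then argue by contradiction: suppose some algorithm keeps the scaled regret $\ex{\regret{i}}/\range_i$ below the target $\tau:=\tfrac{1}{8\sqrt2}\sqrt{Tk\,c}$ in \emph{every} environment. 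The null bound then forces $\mathbb{E}_0[M]<\tau/(\delta c)$, while the $m$ bounds coming from the $\nu_j$'s force $\mathbb{E}_j[N_j]>T-\tau/\delta$ for every $j$, hence $\sum_j \mathbb{E}_j[N_j]>m(T-\tau/\delta)$.

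The heart of the proof is to show these two requirements are incompatible for the chosen $\delta$. Since the null environment and $\nu_j$ differ only in the reward law of arm $j$, whose per-pull KL divergence is at most $\kappa\delta^2$ for a universal constant $\kappa$, a standard information-theoretic estimate (Pinsker's inequality together with the divergence decomposition for bandits) gives $\mathbb{E}_j[N_j]\le \mathbb{E}_0[N_j]+T\delta\sqrt{\tfrac{\kappa}{2}\,\mathbb{E}_0[N_j]}$. Summing over $j$, using $\sum_j \mathbb{E}_0[N_j]=\mathbb{E}_0[M]$ and Cauchy--Schwarz, bounds $\sum_j \mathbb{E}_j[N_j]$ above by $\mathbb{E}_0[M]+T\delta\sqrt{\kappa/2}\,\sqrt{m\,\mathbb{E}_0[M]}$. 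Combining the lower bound $m(T-\tau/\delta)$ with this upper bound and the null bound $\mathbb{E}_0[M]<\tau/(\delta c)$ yields a single inequality in $\delta$; substituting $\delta=\sqrt{mc/T}$ and $\tau=\tfrac{1}{8\sqrt2}\sqrt{Tmc}$ makes $\tau/\delta=\tfrac{T}{8\sqrt2}$, the dominant terms scale as $mT$, and the inequality becomes self-contradictory (the explicit constants are exactly where the $8\sqrt2$ is engineered). Hence no such algorithm exists: some arm in some environment exceeds $\tau\cdot\range_i$, and that environment supplies the required sequence of gain vectors.

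The main obstacle, and the step demanding the most care, is the information-theoretic coupling: one must weight the KL divergence by the expected number of pulls of arm $j$ under the \emph{null} (under-explored) measure rather than using the crude bound $N_j\le T$, since the latter is far too lossy to close the argument for large $c$. A secondary point is the simultaneous tuning of $\delta$ so that both failure modes are \emph{binding} at the common scale $\sqrt{mc/T}$: too small a $\delta$ lets an algorithm simply ignore the risky arms (defeating the null-environment mode), while too large a $\delta$ lets it detect the good arm cheaply (defeating the $\nu_j$ mode). I would also remark explicitly that the construction collapses in the non-negative regime---there a risky pull in the null environment costs nothing, so $\ex{\regret{s}}$ is no longer positive---which mirrors the fact that Theorem~\ref{thm:bandit-regret-bound} carries no $\range_{\max}/\range_{\min}$ factor.
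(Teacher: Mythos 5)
Your proposal is correct, and it takes a route that is the same in spirit as the paper's but technically different and strictly more general. Both proofs use the identical instance design --- a safe arm of range $\range_{\min}=1$ with deterministic reward $0$ against risky arms of range $c=\range_{\max}$ with symmetric $\pm c$ rewards --- and the same contradiction structure: the regret budget against the safe arm is only $\tau\cdot 1$, which caps exploration since each risky pull costs $\delta c$ in expectation (this is where negative rewards are essential, exactly as you note), while the budget against the good risky arm is $\tau\cdot c$, which requires heavy exploration. The difference is the information-theoretic engine and the generality: the paper instantiates only $k=2$ (one risky arm, two instances, gap $4\epsilon h$ with $\epsilon=\sqrt{h/(256T)}$) and closes the argument with a per-round Bretagnolle--Huber-type inequality $\Prob_{\rho_1}(\psi=2)+\Prob_{\rho_2}(\psi=1)\ge\tfrac12\exp(-\mathrm{KL}(\rho_1,\rho_2))$, so the $\sqrt{k}$ in its stated bound is really just the constant $\sqrt{2}$; you instead run the standard multi-environment bandit machinery (divergence decomposition weighted by the null-measure pull counts $\mathbb{E}_0[N_j]$, Pinsker, and Cauchy--Schwarz over the $m=k-1$ alternatives), which is precisely what delivers a genuine $\sqrt{k}$ dependence for arbitrary $k$. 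Your constants also close with room to spare: $\delta=\sqrt{mc/T}$ and $\tau=\tfrac{1}{8\sqrt{2}}\sqrt{Tkc}$ give $\tau/\delta\le T/8$, so the $\nu_j$ bounds force $\sum_j\mathbb{E}_j[N_j]>\tfrac78 mT$, while the null bound $\mathbb{E}_0[M]<\tau/(\delta c)\le T/8$ together with per-pull KL at most $3\delta^2$ (valid once $\delta\le\tfrac12$, i.e.\ $T\ge 4(k-1)c$ --- your ``sufficiently large $T$'') caps the same sum by $\tfrac{T}{8}+\tfrac{\sqrt{3}}{4}mT<0.6\,mT$, a contradiction. The only steps to spell out in a full write-up are the averaging step converting the stochastic environments into a fixed sequence of gain vectors (also left implicit in the paper), and the observation that under $\nu_j$ the pulls of the other risky arms only increase the regret against arm $j$, which your inequality already uses.
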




\subsection{Upper bound for experts with symmetric range - Proof of Theorem~\ref{thm:expert-regret-bound-symmetric}}
\label{sec:proof-thm-1}
Recall the proof of Proposition~\ref{prop:expert-regret-bound-general}. The proof only requires $\rew{i}{t}\in [-\range_i,\range_i]$ for all $i\in \actions, t\in[T]$.  Choosing $q$ to be $\mathbf{1}_i$, a vector with a $1$-entry in $i^{\textrm{th}}$ coordinate and $0$-entries elsewhere for an action $i \in \actions$, and noting that 
\[
\textstyle
\sum_{t \in [T]} \sum_{i\in \actions}\p{i}{t} \frac{(\rew{i}{t})^2}{\range_i} \le \sum_{t \in [T]} \sum_{i\in \actions}\p{i}{t} \cdot \big| \rew{i}{t} \big| ~,
\]
we get the following regret bound as a corollary of Proposition~\ref{prop:expert-regret-bound-general}.

\begin{corollary}
\label{lem:expert-regret-bound}
For any initial distribution $\mu$ over $\actions$, and any learning rate parameter $0 < \eta \le 1$, the MSMW algorithm achieves the following regret bound:
\begin{equation}
\textstyle
\forall i\in\actions:~~\ex{\regret{i}} \leq \eta\cdot\ex{\sum_{t \in [T]} \big| \rew{i}{t} \big|}+\frac{1}{\eta} \range_i \cdot \log \big( \frac{1}{\mu_i} \big)+ \frac{1}{\eta} \sum_{j\in\actions} \mu_j \range_j
\end{equation}
\end{corollary}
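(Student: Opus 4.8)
The plan is to read off the claim directly from Proposition~\ref{prop:expert-regret-bound-general}. The key observation is that the proof of that proposition only invokes $\exp(x)-x-1\le x^2$ for $x=\eta\,\rew{j}{t}/\range_j\in[-1,1]$, which holds precisely because $\rew{j}{t}\in[-\range_j,\range_j]$ and $0<\eta\le 1$; hence the proposition remains valid verbatim in the symmetric-range setting. I would instantiate it with the benchmark distribution $\qb=\mathbf{1}_i$ and with the initial distribution equal to the given $\mub$. Then the left-hand side $\sum_{j}q_j\reward{j}-\ex{\alg}$ collapses to $\reward{i}-\ex{\alg}=\ex{\regret{i}}$, exactly the quantity to be bounded.

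For the divergence (second) term, I would evaluate $\frac1\eta\sum_{j}\range_j\big(q_j\ln(q_j/\mu_j)-q_j+\mu_j\big)$ at $\qb=\mathbf{1}_i$. The $j=i$ summand is $\range_i\big(\ln(1/\mu_i)-1+\mu_i\big)$, and each $j\ne i$ summand is $\range_j\mu_j$ (using $0\ln0=0$). Discarding the non-positive quantity $-\range_i+\range_i\mu_i$ and folding the leftover $\range_i\mu_i$ back into the sum gives exactly $\frac1\eta\range_i\ln(1/\mu_i)+\frac1\eta\sum_{j}\range_j\mu_j$, which matches the last two terms of the claim. This step is routine.

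For the local-norm (first) term I would start from $\eta\sum_{t}\sum_{j}\p{j}{t}\,(\rew{j}{t})^2/\range_j$ and apply the pointwise inequality $(\rew{j}{t})^2/\range_j\le|\rew{j}{t}|$, valid since $|\rew{j}{t}|\le\range_j$; this turns the second-order local-norm term into a first-order absolute-gain term. The step I expect to be the main obstacle is to control this quantity by the benchmark action's absolute gain $\ex{\sum_{t}|\rew{i}{t}|}$ appearing in the statement, rather than by the played-action absolute gain $\sum_{t}\sum_{j}\p{j}{t}|\rew{j}{t}|$ that the raw local-norm estimate produces. I would try to bridge this gap by a self-bounding argument adapted to the symmetric range—decomposing each $\rew{j}{t}$ into its positive and negative parts so that the signed played gain can be moved back to the left-hand side and identified with $\ex{\alg}$—exploiting the freedom in choosing $\eta$ together with the concentration of $\pb{t}$ on high-gain actions enforced by the MSMW projection step. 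Once this term is matched to $\eta\,\ex{\sum_t|\rew{i}{t}|}$, combining it with the divergence bound from the previous paragraph assembles the stated inequality.
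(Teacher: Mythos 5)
Your first two paragraphs are exactly the paper's proof. The paper obtains this corollary by observing that Proposition~\ref{prop:expert-regret-bound-general} only uses $\rew{j}{t}\in[-\range_j,\range_j]$, instantiating it with $\qb=\mathbf{1}_i$ (so the left side becomes $\reward{i}-\ex{\alg}=\ex{\regret{i}}$), evaluating the divergence term just as you do, and bounding the local norm pointwise via $(\rew{j}{t})^2/\range_j\le|\rew{j}{t}|$. The step you flag as the ``main obstacle,'' however, rests on a misreading of the statement, and the bridge you propose there does not exist. The first term of the corollary is not the benchmark action's absolute gain: the expectation $\ex{\cdot}$ refers to the algorithm's randomness, and the index inside it is (sloppy) shorthand for the played action $i_t$, i.e.\ the term is $\eta\cdot\ex{\sum_{t\in[T]}|\rew{i_t}{t}|}=\eta\sum_{t\in[T]}\sum_{j\in\actions}\p{j}{t}\,|\rew{j}{t}|$ --- precisely what your local-norm estimate already produces. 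This is visible in the paper's displayed inequality immediately preceding the corollary, where the sum runs over all actions weighted by $\p{i}{t}$ with $i$ a mere summation index (and the reversed indices $\rew{t}{i}$ in Theorem~\ref{thm:expert-regret-bound-symmetric} are the same notational looseness). Under the intended reading, the proof is complete after your second paragraph; no self-bounding argument is needed or present in the paper.

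Under your reading the claimed inequality is actually false for MSMW, so no argument could close the gap. Take two actions with $\range_1=1$ and $\range_2=h$, where action $1$ always has gain $0$ and action $2$'s gain alternates $+h,-h,+h,\dots$. From the update and projection rules one checks that $\p{2}{t}$ stays close to $\mu_2$ (the projection corrections over each $\pm$ pair cancel to first order), while the algorithm holds more weight on action $2$ when the $-h$ arrives than when the preceding $+h$ did, netting about $-\eta h\,\p{2}{t}$ per pair; hence $\ex{\regret{1}}=-\ex{\alg}=\Omega(\eta h\mu_2 T)$ grows linearly in $T$. Your reading of the bound for $i=1$ would give $0+\frac1\eta\range_1\log(1/\mu_1)+\frac1\eta\sum_{j\in\actions}\mu_j\range_j$, a quantity independent of $T$ --- a contradiction. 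The played-action term $\eta\sum_{t}\p{2}{t}\,h\approx\eta h\mu_2 T$ is exactly what absorbs this loss, which confirms the intended interpretation. Your sign-decomposition idea cannot rescue the benchmark reading either: moving the signed played gain to the left recovers $\ex{\alg}$ only up to the positive part of the played gains, and nothing in the dynamics ties either part to the gains of the fixed action $i$ (in the example above they are identically zero while the played absolute gains are large).
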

Now, we can prove the multi-scale regret upper-bound in Theorem~\ref{thm:expert-regret-bound-symmetric} using Corollary~\ref{lem:expert-regret-bound}.

\begin{proof}[ \textbf{of Theorem \ref{thm:expert-regret-bound-symmetric}}]
The proof follows by choosing an appropriate initial distribution $\mu$ in  Corollary~\ref{lem:expert-regret-bound}.
By Corollary~\ref{lem:expert-regret-bound}, we have:
\[
\textstyle
\ex{\regret{i}}\leq \eta\cdot\ex{\sum_{t \in [T]} \big| \rew{i}{t} \big|}+\frac{1}{\eta} \range_i \cdot \log(\frac{1}{\mu_i})+ \frac{1}{\eta} \sum_{j\in\actions} \mu_j \range_j
\]

Let $\imin$ be an action with the minimum range $\range_{\imin} = \range_{\min}$.
Consider an initial distribution $\mu_j = \pi_j \frac{\range_{\min}}{\range_j}$ for all $j \ne \imin$, and $\mu_{\imin} = 1 - \sum_{j \ne \imin} \mu_j$, i.e., putting all remaining probability mass on action $\imin$.
Then, the third term on the RHS is upper bounded by:
\[
\textstyle
\sum_{j \in \actions} \mu_j \range_j = \sum_{j \ne \imin} \mu_j \range_j + \mu_{\imin} \range_{\imin} = \sum_{j \ne \imin} \pi_j \range_{\min} + \mu_{\imin} \range_{\min} \le 2 \range_{\min} \le 2 \range_i \enspace.
\]
For $i \ne \imin$, by the definition of $\mu_i$, we have:
\begin{align*}
\ex{\regret{i}} & 
\textstyle
\leq \eta\cdot\ex{\sum_{t \in [T]} \big| \rew{i}{t} \big|} +\frac{1}{\eta} \range_i \cdot \log(\frac{1}{\pi_i} \cdot \frac{\range_i}{\range_{\min}})+ \frac{1}{\eta} \cdot 2 \range_{\min} \\
& 
\textstyle
= \eta\cdot\ex{\sum_{t \in [T]} \big| \rew{i}{t} \big|} + O \bigg( \frac{1}{\eta} \log \big( \frac{1}{\pi_i} \cdot \frac{\range_i}{\range_{\min}} \big) \cdot \range_i \bigg) \enspace.
\end{align*}
So the theorem follows by choosing $\eta = \epsilon$. For $i = \imin$, note that $\mu_j \le \pi_j$ for all $j \ne \imin$ and, thus, $\mu_{\imin} = 1 - \sum_{j \ne \imin} \mu_j \ge 1 - \sum_{j \ne \imin} \pi_j = \pi_{\imin} = \pi_{\imin} \frac{\range_{\min}}{\range_{\imin}}$.
The theorem then holds following the same calculation as in the $j \ne \imin$ case.
\end{proof}

\subsection{Lower bound for experts with symmetric range - proof of Theorem~\ref{thm:expert-log-dependency-symmetric}}
\label{sec:proof-log-dep}
\begin{proof}[\textbf{of Theorem~\ref{thm:expert-log-dependency-symmetric}}]
We first show that for any online learning algorithm, and any sufficiently large $h > 1$, there is an instance that has two experts with $\range_1 = 1$ and $\range_2 = h$ with $T = \Theta(\log{h})$ rounds, such that either
	\[
	\ex{\regret{1}} > \tfrac{1}{2} T + \sqrt{h}~,\quad\quad\textrm{or}\quad\quad\ex{\regret{2}} > \tfrac{1}{2} T h + \tfrac{1}{5} h \log_2{h}~.
	\]
We will construct this instance with $T = \frac{1}{2} \log_2{h} - 1$ rounds adaptively that always has gain $0$ for action $1$ and gain either $h$ or $-h$ for action $2$. The proof of the theorem then follows as $\range_{\min}=1$, $\range_{\max}=h$, $T= \frac{1}{2} \log_2{h} - 1$, and $k=2$ in this instance.
Let $q_t$ denote the probability that the algorithm picks action $2$ in round $t$ after having the same rewards $1$ and $h$ for the two actions respectively in the first $t-1$ rounds. 
We will first show that (1) if the algorithm has small regret with respect to action $1$, then $q_t$ must be upper bounded since the adversary may let action $2$ have cost $-h$ in any round $t$ in which $q_t$ is too large.
Then, we will show that (2) since $q_t$ is upper bounded for any $1 \le t \le T$, the algorithm must have large regret with respect to action $2$. 

We proceed with the upper bounding $q_t$'s.
Concretely, we will show the following lemma.

\begin{lemma}
\label{lem:log-dependency}
Suppose $\ex{\regret{1}} \le \tfrac{1}{2} T + \sqrt{h}$.
Then, for any $1 \le t \le T$, we have $q_t \le \frac{2^{t}}{\sqrt{h}}$.
\end{lemma}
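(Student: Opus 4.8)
The plan is to prove the bound on $q_t$ by strong induction on $t$, exploiting that we are in the full-information setting. The first thing I would note is that, because the algorithm observes the entire reward function each round, the \emph{canonical history} in which action $2$ receives gain $h$ in every round is deterministic; hence $q_t$ is simply the probability that the algorithm plays action $2$ in round $t$ when it has faced the all-$h$ sequence through round $t-1$. This value depends only on rounds $1,\dots,t-1$, so it is unaffected by whatever the adversary does from round $t$ onward. I would also record the elementary but crucial identity $\regret{1} = -\alg$, which holds because action $1$ has gain $0$ in every round; thus keeping the regret with respect to action $1$ small is exactly a guarantee that the algorithm's total loss is small.

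For the inductive step, fix $t$ and consider the adversary $\sigma_t$ that gives action $2$ gain $h$ in rounds $1,\dots,t-1$ and gain $-h$ in rounds $t,\dots,T$ (action $1$ stays at $0$). Since $\sigma_t$ agrees with the canonical sequence through round $t-1$, the probability that it plays action $2$ in any round $s \le t$ is exactly $q_s$. I would then split the expected loss by round: rounds $1,\dots,t-1$ contribute expected \emph{gain} $h\sum_{s<t}q_s$ to the algorithm; round $t$ contributes expected \emph{loss} $h q_t$; and every later round, where action $2$ is now bad, contributes a nonnegative amount to the regret. Dropping the (nonnegative) tail yields the clean lower bound $\ex{\regret{1}} \ge h q_t - h\sum_{s<t} q_s$ for the sequence $\sigma_t$.

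To close the induction I would invoke the hypothesis $q_s \le 2^s/\sqrt{h}$ for $s<t$, so that $\sum_{s<t} q_s \le \sum_{s=1}^{t-1} 2^s/\sqrt{h} = (2^t-2)/\sqrt h$, and combine it with the standing assumption $\ex{\regret{1}} \le \tfrac12 T + \sqrt h$ (which, in the theorem's dichotomy, we may take to hold for every adversary of this form, including $\sigma_t$). Solving for $q_t$ gives $q_t \le \frac{2^t}{\sqrt h} + \frac{T - 2\sqrt h}{2h}$, and since here $T = \tfrac12\log_2 h - 1 \le 2\sqrt h$ for all sufficiently large $h$, the error term is nonpositive, which delivers the desired bound $q_t \le 2^t/\sqrt h$. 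The base case $t=1$ is the identical computation with an empty sum.

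The hard part is not the algebra but making the two opposing effects balance. A large $q_t$ lets the adversary inflict loss $\approx h q_t$ at round $t$, but the earlier ``good'' rounds simultaneously hand the algorithm gain $h\sum_{s<t}q_s$, which erodes the very regret we are trying to lower-bound. The key is that the inductive hypothesis pins $\sum_{s<t}q_s$ to a geometric sum capped at $(2^t-2)/\sqrt h$, strictly below the round-$t$ target $2^t/\sqrt h$, so the fresh loss genuinely dominates the accumulated gains; the small residual is then exactly what the logarithmically short horizon $T \le 2\sqrt h$ absorbs. Tracking these constants — the doubling factor $2^t$ and the relation $T \le 2\sqrt h$ — is the delicate step, while everything else is bookkeeping.
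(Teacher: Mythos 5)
Your proof is correct and follows essentially the same route as the paper's: the same induction on $t$, the same adversary (gain $h$ for action $2$ through round $t-1$, then $-h$), the same use of the inductive hypothesis to cap the accumulated gain at $(2^t-2)\sqrt{h}$, and the same reliance on $T \le 2\sqrt{h}$ to absorb the residual. The only difference is presentational — you derive the bound on $q_t$ directly from the regret assumption, whereas the paper argues by contradiction from $q_t > 2^t/\sqrt{h}$ — and your explicit remarks that $q_t$ is well-defined because the full-information feedback makes the canonical history action-independent, and that the regret assumption must be invoked against each adversary $\sigma_t$, are welcome clarifications of points the paper leaves implicit.
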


\begin{proof}[Proof of Lemma~\ref{lem:log-dependency}]
We will prove by induction on $t$.
Consider the base case $t = 1$. 
Suppose for contradiction that $q_1 > \frac{2}{\sqrt{h}}$.
Then, consider an instance in which action $2$ always has gain.
In this case, the expected gain of the algorithm (even if it always correctly picks action $1$ in the remaining instance) is at most $q_1 \cdot (-h) < - 2 \sqrt{h}$.
This is a contradiction to the assumption that $\ex{\regret{1}} \le \frac{1}{2} T + \sqrt{h} < 2 \sqrt{h}$.

Next, suppose the lemma holds for all rounds prior to round $t$.
Then, the expected gain of algorithm in the first $t-1$ rounds if arm $2$ has gain $H$ is 
\[
\sum_{\ell = 1}^{t-1} q_{\ell} \cdot h \le \sum_{\ell = 1}^{t-1} 2^{\ell} \sqrt{h} = \big( 2^{t} - 2 \big) \sqrt{h} ~.
\]
Suppose for contradiction that $q_t > \frac{2^{t}}{\sqrt{h}}$.
Then, consider an instance in which action $2$ has gain $H$ in the first $t-1$ rounds and $-H$ afterwards.
In this case, the expected gain of the algorithm (even if it always correctly picks action $1$ after round $t$) is at most 
\[
\big( 2^{t} - 2 \big) \sqrt{h} + q_t (-h) < \big( 2^{t} - 2 \big) \sqrt{h} + 2^{t} \sqrt{h} < - 2 \sqrt{h} ~.
\]
This is a contradiction to the assumption that $\ex{\regret{1}} \le \frac{1}{2} T + \sqrt{h} < 2 \sqrt{h}$.
\end{proof}

Consider an instance in which action $2$ always has gain $H$.
Suppose that $\ex{\regret{1}} \le \frac{1}{2} T + \sqrt{h}$.
As an immediate implication of the above lemma, the algorithm  is that the expected gain of the algorithm is upper bounded by:
\[
\sum_{t=1}^T q_t h \le \sum_{t=1}^T 2^{t} \sqrt{h} < 2^{T+1} \sqrt{h} = h ~.
\]

Note that in this instance $\ex{\reward{2}} = T \cdot h$.
Thus, the regret w.r.t.\ action $2$ is at least $(T - 1) h$, which is greater than $\tfrac{1}{2} \cdot \ex{\reward{2}} + \tfrac{1}{5} h \log_2{h}$ for sufficiently large $h$.
\end{proof}

\subsection{Upper bound for bandits with symmetric range - Proof of Theorem~\ref{thm:bandit-regret-bound-symmetric}}


We start by presenting the following regret bound, whose proof is an alteration of that for Lemma~\ref{lem:bandit-msmw} under symmetric range. Next, we prove Theorem~\ref{thm:bandit-regret-bound-symmetric}.
\begin{lemma}
\label{lem:bandit-msmw-symmetric}
For any exploration rate $0 < \gamma \leq \min \{\frac{1}{2}, \frac{c_{\min}}{c_{\max}} \}$ and any learning rate $0 < \eta \le \frac{\gamma}{k}$, the Bandit-MSMW algorithm (Algorithm~\ref{alg:Bandit-MSMW}) achieves the following regret bound:
%
\[
\forall i\in\actions: \ex{\regret{i}}\leq O \left( \frac{1}{\eta} \log \big( \frac{k}{\gamma} \big) \cdot \range_i + \gamma T \cdot c_{\max} \right)
\]
\end{lemma}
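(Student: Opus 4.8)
The plan is to follow the proof of Lemma~\ref{lem:bandit-msmw} almost verbatim, making exactly the two changes that the possibility of negative rewards forces. As before, I would work with the unbiased estimators $\trewb{t}$ and the quantities $\talg = \sum_{t\in[T]}\rew{i_t}{t} = \sum_{t\in[T]}\tpb{t}\cdot\trewb{t}$ and $\treward{j} = \sum_{t\in[T]}\trew{j}{t}$, using that $\ex{\alg} = \ex{\talg}$ and $\reward{j} = \ex{\treward{j}}$ still hold (these rely only on unbiasedness, not on the sign of the rewards). Hence it again suffices to bound $\sum_{j}q_j\ex{\treward{j}} - \ex{\talg}$ for the benchmark $\qb = (1-\gamma)\mathbf{1}_i + \tfrac{\gamma}{k}\mathbf{1}$ and to invoke Proposition~\ref{prop:expert-regret-bound-general}, which (as noted at the start of Section~\ref{sec:proof-thm-1}) is valid whenever $\eta\,\trew{j}{t}/\range_j\in[-1,1]$; the choice $\eta\le\gamma/k$ guarantees this, since $|\trew{j}{t}|\le\range_j/\tp{j}{t}\le k\range_j/\gamma$.

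The first change concerns the exploration bookkeeping. In the non-negative proof one uses $\trewb{t}\ge 0$ to pass from $\tpb{t} = (1-\gamma)\pb{t} + \tfrac{\gamma}{k}\mathbf{1}$ to $\ex{\talg}\ge(1-\gamma)\sum_t\ex{\pb{t}\cdot\trewb{t}}$, to replace $\tfrac{\gamma}{1-\gamma}\ex{\talg}$ by $2\gamma\ex{\talg}$, and to drop the nonnegative term $\tfrac{\gamma}{k}\sum_j\treward{j}$ from the left-hand side. None of these sign facts survive for symmetric ranges, so I would instead keep the exploration contribution exactly, writing $\ex{\talg} = (1-\gamma)\sum_t\ex{\pb{t}\cdot\rewb{t}} + \tfrac{\gamma}{k}\sum_j\reward{j}$, and control the two $\gamma$-order error terms by absolute values: $|\pb{t}\cdot\rewb{t}|\le\sum_j\p{j}{t}\range_j\le c_{\max}$ yields $\gamma\,|\sum_t\ex{\pb{t}\cdot\rewb{t}}|\le\gamma T c_{\max}$, and $|\reward{j}|\le T\range_j\le Tc_{\max}$ yields $|\tfrac{\gamma}{k}\sum_j\reward{j}|\le\gamma T c_{\max}$. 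The same two estimates also absorb the stray $\gamma\reward{i}$ and $\tfrac{\gamma}{k}\sum_j\reward{j}$ terms that appear when I rewrite the left-hand side $(1-\gamma)\reward{i} + \tfrac{\gamma}{k}\sum_j\reward{j} - \ex{\talg}$ back into $\ex{\regret{i}}$. Each such term costs only $O(\gamma T c_{\max})$, which is within budget.

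The second change is the local-norm estimate. Conditioning on the history, $\ex{\tfrac{\p{j}{t}}{\range_j}(\trew{j}{t})^2} = \tfrac{\p{j}{t}}{\tp{j}{t}}\,\tfrac{(\rew{j}{t})^2}{\range_j}$. Where the non-negative proof bounds $(\rew{j}{t})^2\le\range_j\rew{j}{t}$ (producing $\eta\sum_j\reward{j}$), I would here use $(\rew{j}{t})^2\le\range_j^2$ together with $\tp{j}{t}\ge(1-\gamma)\p{j}{t}$, so that $\tfrac{\p{j}{t}}{\tp{j}{t}}\le\tfrac{1}{1-\gamma}\le 2$ (using $\gamma\le\tfrac12$) and each term is at most $2\range_j$. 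Summing over $j$ and $t$ and using $\eta\le\gamma/k$ and $\sum_j\range_j\le k\,c_{\max}$ gives a total local-norm contribution of at most $2\eta T\sum_j\range_j\le 2\gamma T c_{\max}$. The Bregman-divergence term is handled identically to the non-negative case: with $\qb = (1-\gamma)\mathbf{1}_i + \tfrac{\gamma}{k}\mathbf{1}$ and the algorithm's initialization $\pb{1} = (1-\gamma)\mathbf{1}_{\imin} + \tfrac{\gamma}{k}\mathbf{1}$, only the $i$- and $\imin$-coordinates contribute, and exactly as in Section~\ref{sec:expert-reward-only} they sum to $O(\tfrac{1}{\eta}\log(\tfrac{k}{\gamma})\range_i)$, crucially using $\range_{\imin} = c_{\min}\le\range_i$.

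Assembling the three pieces gives $\sum_j q_j\ex{\treward{j}} - \ex{\talg}\le O(\tfrac{1}{\eta}\log(\tfrac{k}{\gamma})\range_i + \gamma T c_{\max})$, and converting the left-hand side into $\ex{\regret{i}}$ via the absolute-value estimates of the second paragraph yields the claim. I expect the only delicate part to be the sign bookkeeping in the exploration/mixing terms: one must check that every step where the original proof silently relied on nonnegativity of the rewards (or of $\ex{\talg}$) is replaced by a two-sided $O(\gamma T c_{\max})$ bound, so that no term of order $T c_{\max}$ without the $\gamma$ factor slips in. Finally, I note that the hypothesis $\gamma\le c_{\min}/c_{\max}$ is not strictly needed inside this lemma; it is precisely the admissibility range that makes the instantiation $\gamma\asymp\epsilon\,c_{\min}/c_{\max}$ (with $\epsilon\le\tfrac12$) legal when the lemma is fed into Theorem~\ref{thm:bandit-regret-bound-symmetric}.
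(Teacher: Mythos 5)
Your proposal is correct and takes essentially the same route as the paper's proof: reduce to Proposition~\ref{prop:expert-regret-bound-general} through the unbiased estimators, absorb the exploration and local-norm contributions into $O(\gamma T c_{\max})$ error terms, and bound the Bregman term exactly as in the non-negative case with $\qb = (1-\gamma)\mathbf{1}_i + \frac{\gamma}{k}\mathbf{1}$ and $\pb{1} = (1-\gamma)\mathbf{1}_{\imin} + \frac{\gamma}{k}\mathbf{1}$. Your exact-identity bookkeeping (in place of the paper's chain through $\frac{\gamma}{1-\gamma}\ex{\talg} \le 2\gamma T c_{\max}$), your explicit verification that $\eta \le \gamma/k$ keeps the estimated rewards within the range where the proposition's quadratic bound applies, and your observation that the hypothesis $\gamma \le c_{\min}/c_{\max}$ is only needed when the lemma is instantiated in Theorem~\ref{thm:bandit-regret-bound-symmetric} are all accurate refinements of details the paper leaves implicit.
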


\begin{proof}[\textbf{of Lemma~\ref{lem:bandit-msmw-symmetric}}]
We further define:
\begin{eqnarray*}
\talg & \triangleq & \textstyle \sum_{t\in[T]} \rew{i_t}{t} = \sum_{t\in[T]}\tpb{t}\cdot\trewb{t}~,\\
\treward{j} & \triangleq & \textstyle \sum_{t\in[T]}\trew{j}{t}~.
\end{eqnarray*}
%

In expectation over the randomness of the algorithm, we have:
\begin{enumerate}
	\item $\ex{\alg} = \ex{\talg}$; and
	\item $\reward{j} = \ex{\treward{j}}$ for any $j \in \actions$.
\end{enumerate}
%

Hence, to upper bound $\ex{\regret{i}}=\reward{i}-\ex{\alg}$, it suffices to upper bound $\ex{\treward{i} - \talg}$.

By the definition of the probability that the algorithm picks each arm, i.e., $\tpb{t}$, and that reward of each round is at least $-c_{\max}$, we have that:
\[
\ex{\talg} \ge (1 - \gamma) \sum_{t\in[T]} \pb{t} \cdot \trewb{t} - \gamma T c_{\max}~.
\]

Hence, for any benchmark distribution $\qb$ over $\actions$, we have that:
\begin{align}
\textstyle
\sum_{j \in \actions} q_j \cdot \ex{\treward{j}} - \ex{\talg} \le ~ & \textstyle \ex{\sum_{j \in \actions} q_j \cdot \treward{j} - \sum_{t\in[T]} \pb{t} \cdot \trewb{t}} + \frac{\gamma}{1-\gamma} \ex{\talg} + \frac{\gamma}{1-\gamma} T c_{\max} \notag \\
\le ~ & \textstyle \ex{\sum_{j \in \actions} q_j \cdot \treward{j} - \sum_{t\in[T]} \pb{t} \cdot \trewb{t}} + 2 \gamma \ex{\talg} + 2 \gamma T c_{\max} \notag \\
\le ~ & \textstyle \ex{\sum_{j \in \actions} q_j \cdot \treward{j} - \sum_{t\in[T]} \pb{t} \cdot \trewb{t}} + 4 \gamma T c_{\max} ~. \label{eqn:bandit-msmw-symmetric1}
\end{align}
where the 2nd inequality is due to $\gamma \le \frac{1}{2}$, and the 3rd inequality follows by that $c_{\max}$ is the largest possible reward per round.

Next, we upper bound the 1st term on the RHS of \eqref{eqn:bandit-msmw-symmetric1}.
Note that $\pb{t}$'s are the probability of choosing experts by MSMW when the experts have rewards $\trewb{t}$'s.
By Proposition~\ref{prop:expert-regret-bound-general}, we have that for any benchmark distribution $\qb$ over $S$, the Bandit-MSMW algorithm satisfies that:
\begin{equation}
\label{eqn:bandit-msmw-symmetric2}
\sum_{j \in \actions} q_j \cdot \treward{j} - \sum_{t\in[T]} \pb{t} \cdot \trewb{t} \leq \eta \sum_{t\in[T]} \sum_{j \in \actions} \frac{\p{j}{t}}{\range_j} \cdot \big( \trew{j}{t} \big)^2 + \frac{1}{\eta} \sum_{j \in \actions} \range_j \left( q_j \ln \big( \frac{q_j}{\p{j}{1}} \big) - q_j + \p{j}{1} \right) ~.
\end{equation}

For any $t \in [T]$ and any $j \in \actions$, by the definition of $\trew{j}{t}$, it equals $\tfrac{\rew{j}{t}}{\tp{j}{t}}$ with probability $\tp{j}{t}$, and equals $0$ otherwise.
Thus, if we fix the random coin flips in the first $t-1$ rounds and, thus, fix $\tpb{t}$, and take expectation over the randomness in round $t$, we have that:
\[
\ex{\frac{\p{j}{t}}{\range_j} \cdot \big( \trew{j}{t} \big)^2} = \frac{\p{j}{t}}{\range_j} \cdot \tp{j}{t} \cdot \left(\frac{\rew{j}{t}}{\tp{j}{t}} \right)^2 = \frac{\p{j}{t}}{\tp{j}{t}} \frac{(\rew{j}{t})^2}{\range_j} ~.
\]

Further note that $\tp{j}{t} \ge (1 - \gamma) \p{j}{t}$, and $| \rew{j}{t} | \le \range_j$, the above is upper bounded by $\frac{1}{1 - \gamma} | \rew{j}{t} | \le 2 | \rew{j}{t} | \le 2 c_{\max}$.
Putting together with \eqref{eqn:bandit-msmw-symmetric2}, we have that for any $0 < \eta \le \frac{\gamma}{n}$:
\begin{align*}
\ex{\sum_{j \in \actions} q_j \cdot \treward{j} - \sum_{t\in[T]} \pb{t} \cdot \trewb{t}}
\le ~
&
\eta \sum_{t\in[T]}\sum_{j \in \actions} 2 c_{\max} + \frac{1}{\eta} \sum_{j \in \actions} \range_j \left( q_j \ln \big( \frac{q_j}{\p{j}{1}} \big) - q_j + \p{j}{1} \right) \\
= ~
&
2 \eta T k c_{\max} + \frac{1}{\eta} \sum_{j \in \actions} \range_j \left( q_j \ln \big( \frac{q_j}{\p{j}{1}} \big) - q_j + \p{j}{1} \right)
\end{align*}

Combining with \eqref{eqn:bandit-msmw-symmetric1}, we have (recall that $\eta \le \frac{\gamma}{k}$):
\begin{align*}
\sum_{j \in \actions} q_j \cdot \ex{\treward{j}} - \ex{\talg} & \le 2 \eta T k c_{\max} + \frac{1}{\eta} \sum_{j \in \actions} \range_j \left( q_j \ln \big( \frac{q_j}{\p{j}{1}} \big) - q_j + \p{j}{1} \right) + 4 \gamma T c_{\max} \\
& \le \frac{1}{\eta} \sum_{j \in \actions} \range_j \left( q_j \ln \big( \frac{q_j}{\p{j}{1}} \big) - q_j + \p{j}{1} \right) + 6 \gamma T c_{\max}
\end{align*}

Let $\qb = (1 - \gamma) \mathbf{1}_{i} + \frac{\gamma}{k} \mathbf{1}$.
Recall that $\pb{1} = (1 - \gamma) \mathbf{1}_{\imin} + \frac{\gamma}{k} \mathbf{1}$ (recall $\imin$ is the arm with minimum range $\range_{\imin}$).
Similar to the discussion for the expert problem in Section~\ref{sec:expert-reward-only}, the 1st term on the RHS is upper bounded by $O \left( \frac{1}{\eta} \log \big( \frac{k}{\gamma} \big) \cdot \range_i \right)$.
Hence, we have:
\begin{equation}
\label{eqn:bandit-msmw-symmetric3}
\sum_{j \in \actions} q_j \cdot \ex{\treward{j}} - \ex{\talg} \le O \left( \frac{1}{\eta} \log \big( \frac{k}{\gamma} \big) \cdot \range_i \right) + 6 \gamma T c_{\max} ~.
\end{equation}

Further, the LHS is lower bounded as:
\[ 
(1 - \gamma) \ex{\treward{i}} + \frac{\gamma}{k} \sum_{j \in \actions} \ex{\treward{j}} - \ex{\talg} \ge (1 - \gamma) \ex{\treward{i}} - \gamma T c_{\max} - \ex{\talg} ~.
\]

The lemma then follows by putting it back to \eqref{eqn:bandit-msmw-symmetric3} and rearranging terms.
\end{proof}

\begin{proof}[ \textbf{of Theorem~\ref{thm:bandit-regret-bound-symmetric}}]
Let $\gamma = \epsilon \frac{c_{\min}}{c_{\max}}$ and $\eta = \frac{\gamma}{k}$ in Lemma~\ref{lem:bandit-msmw-symmetric}. Theorem follows noting that $\gamma c_{\max} = \epsilon c_{\min} \le \epsilon c_i$.
\end{proof}

\subsection{Lower-bound for bandits with symmetric range - Proof of Theorem~\ref{thm:bandit-lower-bound-symmetric}}
\label{sec:bandit-lower-bound-symmetric}
\begin{proof}[\textbf{of Theorem~\ref{thm:bandit-lower-bound-symmetric}}]
We first show that  for any online multi-scale bandits algorithm problem, and there is an instance that has two arms with $\range_1 = 1$ and $\range_2 = h$ for some sufficiently large $h$, a sufficiently large $T$,  
and $\epsilon  = \sqrt{\frac{h}{256 T}}$, such that either
	\[
	\ex{\regret{1}} > \epsilon T + \tfrac{1}{256 \epsilon} h ~,\quad\quad \textrm{or}\quad\quad \ex{\regret{2}} > \epsilon T h + \tfrac{1}{256 \epsilon} h^2
	\]
We will prove the existence of this instance by looking at the stochastic setting, i.e., the gain vectors $\rewb{t}$'s are i.i.d.\ for $1 \le t \le T$.	
We consider two instances, both of which admit a fixed gain of $0$ for action $1$.
In the first instance, the gain of action $2$ is $h$ with probability $\frac{1}{2} - 2 \epsilon$, and $-h$ otherwise.
Hence, the expected gain of playing action $2$ is $- 4 \epsilon h$ per round in instance $1$.
In the second instance, the gain of action $2$ is $h$ with probability $\frac{1}{2} + 2 \epsilon$, and $-h$ otherwise.
Hence, the expected gain of playing action two is $4 \epsilon h$ per round in instance $2$. Note this proves the theorem, as $\range_{\min}=1$, $\range_{\max}=h$, $k=2$ and and $T = \frac{h}{256 \epsilon^2}$.

Suppose for contradiction that the algorithm satisfies:
\[
\ex{\regret{1}} \le \epsilon T + \tfrac{1}{256 \epsilon} h = \tfrac{1}{128 \epsilon} h \quad,\quad \ex{\regret{2}} \le \epsilon h T + \tfrac{1}{256 \epsilon} h^2 = \tfrac{1}{128 \epsilon} h^2 ~.
\]

Let $N_1$ denote the expected number of times that the algorithm plays action $2$ in instance $1$.
Then, the expected regret with respect to action $1$ in instance $1$ is $N_1 \cdot 4 \epsilon h$.
By the assumption that $\ex{\regret{1}} \le \tfrac{1}{128 \epsilon} h$, we have $N_1 \le \frac{1}{512 \epsilon^2}$.

Next, by standard calculation, we get that the Kullback-Leibler (KL) divergence of the observed rewards in a single round in the two instances is $0$ if action $1$ is played and is at most $64 \epsilon^2$ (for $0 < \epsilon < 0.1$) if action $2$ is played.
So the KL divergence of the observed reward sequences in the two instances is at most $64 \epsilon^2 \cdot N_1 \le \frac{1}{8}$.

Then, we use a standard inequality about KL divergences.
For any measurable function $\psi : X \mapsto \{1, 2\}$, we have $\Pr_{X \sim \rho_1} \big(\psi(X) = 2\big) + \Pr_{X \sim \rho_2} \big(\psi(X) = 1\big) \ge \frac{1}{2} \exp{\big( - KL(\rho_1, \rho_2) \big)}$. For any $1 \le t \le T$, let $\rho_1$ and $\rho_2$ be the distribution of observed rewards up to a round $t$ in the two instances, and let $\psi(X)$ be the action played by the algorithm.
By this inequality and the above bound on the KL divergence between the observed rewards in the two instances, we get that in each round, the probability that the algorithm plays action $2$ in instance $1$, plus the probability that the algorithm plays action $1$ in instance $2$, is at least $\frac{1}{2} \exp{(- \frac{1}{8})} > \frac{2}{5}$ in any round $t$.
Thus, the expected number of times that the algorithm plays action $1$ in instance $2$ from round $1$ to $T$, denoted as $N_2$, is at least $N_2 \ge \frac{2}{5} \cdot T - N_1 \ge \frac{1}{3} \cdot T$, where the second inequality holds for sufficiently large $h$.
Therefore, the expected regret w.r.t.\ action $2$ in instance $2$ is at least:
$ 4\epsilon h \cdot \frac{1}{3} \cdot T = \frac{4}{3} \epsilon h T > \tfrac{1}{128 \epsilon} h^2$. This is a contradiction to our assumption that $\ex{\regret{2}} \le \tfrac{1}{128 \epsilon} h^2$.
\end{proof}
\section{Conclusion}

Revenue management has emerged as a competitive toolbox of strategies for increasing the profit of web-based markets. In particular, dynamic pricing, and dynamic auction design as its less mature relative, have become prevalent market mechanisms in nearly all industries. In this paper, we studied these problems from the perspective of online learning. For the online auction for single buyer, we showed regret bounds that scale with the best fixed price, rather than the range of the values (with a generalization to learning auctions). Moreover, we demonstrated a connection between the optimal regret bounds for this problem and offline sample complexity lower-bounds of approximating optimal revenue, studied in \cite{ColeR14,HuangMR15}. Using this connection, we showed our regret bounds are almost optimal as they match these information theoretic lower-bounds. We further generalized our result to online pricing (bandit feedback) and online auction with multiple-buyers. 

The key to our development and improved regret bounds for online auction design is generalizing the classical learning from experts and multi-armed bandit problems to their ``multi-scale versions'', where the reward of each action is in a different range. Here the objective is to design online learning algorithms whose regret with respect to a given action scales with its own range, rather than the maximum range. We showed how a variant of online mirror descent solves this learning problem.

\begin{acks}
	
\end{acks}
\bibliography{bibliography}

\appendix
\section{Other Deferred Proofs and Discussions}
\subsection{Discussion on choice of \texorpdfstring{$\pib$ for bandit symmetric range}{Lg}}
\label{appendix:bandit-symmetric-initial}
We now describe how the choice of initial distribution $\pib$ affects the bound given in Theorem~\ref{thm:expert-regret-bound-symmetric}. 
\begin{itemize}
	\item When the action set is finite, we can choose $\pib$ to be the uniform distribution to get the term 
	$$ O \big( \frac{1}{\epsilon} \log ( \myfrac{k \range_i}{\range_{\min}} ) \cdot \range_i \big)$$ This recovers the standard bound by setting $\range_i = \range_{\max}$ for all $i\in \actions$. 
	\item  We can choose $\pi_i = \frac{\range_i}{\sum_{j \in \actions}  \range_j}$ to get 
	$O \big( \frac{1}{\epsilon} \log ( \myfrac{\sum_{j\in \actions} \range_j}{\range_{\min}} ) \cdot \range_i \big)$. 
	In particular, if the $\range_i$'s form an arithmetic progression with a constant difference then this is just 
	$O \big( \frac{\log{k}}{\epsilon} \cdot \range_{i}  \big)$.
\end{itemize}

\subsection{Proof of Proposition~\ref{prop:expert-regret-bound-general} from first principles}
\label{appendix:OMD-first-principles}

We also provide an elementary proof of this lemma using first principles. 

\begin{proof}[\textbf{of Proposition~\ref{prop:expert-regret-bound-general}}]
Based on the update rule of Algorithm~\ref{alg:MSMW}, we have $\rew{i}{t}=\frac{\range_i}{\eta}\log(\frac{\w{i}{t+1}}{\p{i}{t}})$ for any $i \in \actions$. 
Therefore:
\begin{align}
\rewb{t}\cdot \big( \qb - \pb{t} \big) & = \sum_{i\in\actions}\rew{i}{t} \big( q_i-\p{i}{t} \big) \notag \\
& = \sum_{i\in\actions} \frac{\range_i}{\eta} \cdot \log \big( \frac{\w{i}{t+1}}{\p{i}{t}} \big) \cdot \big( q_i -\p{i}{t} \big) \notag \\
& = \frac{1}{\eta} \left( \sum_{i \in S} \range_i \cdot q_i \cdot \log \big( \frac{\w{k}{t+1}}{\p{k}{t}} \big) + \sum_{i\in\actions}\range_i\cdot\p{i}{t} \cdot \log \big(\frac{\p{i}{t}}{\w{i}{t+1}}\big)\right) \notag \\
& = \frac{1}{\eta} \bigg( \sum_{i \in S} \range_i \cdot q_i \cdot \log \big( \frac{\w{k}{t+1}}{\p{k}{t+1}} \big) + \sum_{i \in S} \range_i \cdot q_i \cdot \log \big( \frac{\p{k}{t+1}}{\p{k}{t}} \big) \notag \\
& \quad\quad\quad + \sum_{i\in\actions}\range_i\cdot\p{i}{t} \cdot \log \big(\frac{\p{i}{t}}{\w{i}{t+1}}\big) \bigg) 
\label{eq:tel1}
\end{align}

Now, note that due to the normalization step of Algorithm~\ref{alg:MSMW}, for any $i \in S$ we have:
%
\[
\range_i \cdot \log(\frac{\w{i}{t+1}}{\p{i}{t+1}}) = \lambda = \sum_{j\in\actions} \range_j \cdot \p{j}{t+1}\cdot\frac{\lambda}{\range_j} = 
\sum_{j\in\actions} \range_j\cdot\p{j}{t+1} \cdot \log(\frac{\w{j}{t+1}}{\p{j}{t+1}}) 
\]

So the first summation in \eqref{eq:tel1} is equal to:
\begin{align}
\sum_{i \in S} \range_i \cdot q_i \cdot \log \big( \frac{\w{k}{t+1}}{\p{k}{t+1}} \big)
& =
\sum_{i \in S} q_i \cdot \sum_{j\in\actions} \range_j\cdot\p{j}{t+1} \cdot \log(\frac{\w{j}{t+1}}{\p{j}{t+1}}) \notag \\
& =
\sum_{j\in\actions} \range_j\cdot\p{j}{t+1} \cdot \log(\frac{\w{j}{t+1}}{\p{j}{t+1}}) \notag \\
& = 
\sum_{i\in\actions} \range_i\cdot\p{i}{t+1} \cdot \log(\frac{\w{i}{t+1}}{\p{i}{t+1}}) 
\label{eq:tel2}
\end{align}

Combining Eqn.~\eqref{eq:tel1} and \eqref{eq:tel2}, we have: 
\begin{align*}
\rewb{t}\cdot \big( \qb - \pb{t} \big) & = \frac{1}{\eta} \sum_{i\in\actions}\range_i\cdot\left(\p{i}{t}\cdot\log(\frac{\p{i}{t}}{\w{i}{t+1}}) + \p{i}{t+1}\cdot\log(\frac{\w{i}{t+1}}{\p{i}{t+1}})\right) \\
& \quad\quad\quad + \frac{1}{\eta} \sum_{i \in S} \range_i \cdot q_i \cdot \log \big( \frac{\p{i}{t+1}}{\p{i}{t}} \big)
\end{align*}

The 2nd part is a telescopic sum when we sum over $t$. 
We will upper bound the 1st part as follows.
By $\log(x)\leq (x-1)$, we get that:
\begin{align*}
& \sum_{i\in\actions}\range_i\cdot\left(\p{i}{t}\cdot\log(\frac{\p{i}{t}}{\w{i}{t+1}}) + \p{i}{t+1}\cdot\log(\frac{\w{i}{t+1}}{\p{i}{t+1}})\right) 
\\
& \quad\quad\quad\quad
\le
\sum_{i\in\actions}\range_i\cdot\left(\p{i}{t}\cdot\log(\frac{\p{i}{t}}{\w{i}{t+1}})-\p{i}{t+1}+\w{i}{t+1}\right) \\
& \quad\quad\quad\quad
= 
\sum_{i\in\actions}\range_i\cdot \big( \p{i}{t}-\p{i}{t+1} \big) + \sum_{i\in\actions}\range_i\cdot\left(\p{i}{t}\cdot\log(\frac{\p{i}{t}}{\w{i}{t+1}})-\p{i}{t}+\w{i}{t+1}\right)
\end{align*}

Again, the 1st part is a telescopic sum when we sum over $t$. 
We will further work on the 2nd part.
By the relation between $\w{i}{t+1}$ and $\p{i}{t}$, we get that:
\[
\sum_{i\in\actions}\range_i\cdot\left(\p{i}{t}\cdot\log(\frac{\p{i}{t}}{\w{i}{t+1}})-\p{i}{t}+\w{i}{t+1}\right) = \sum_{i\in\actions}\range_i\cdot\p{i}{t}\left(-\eta\cdot\frac{\rew{i}{t}}{\range_i}-1+\exp(\eta\cdot\frac{\rew{i}{t}}{\range_i})\right)
\]

Note that $\eta \cdot \frac{\rew{i}{t}}{\range_i} \in [-1, 1]$ because $\rew{i}{t} \in [-\range_i, \range_i]$ and $0 < \eta \le 1$.
By $\exp(x)-x-1\leq x^2$ for $-1 \le x \le 1$ and that $\eta \rew{i}{t} \in [-\range_i, \range_i]$, the above is upper bounded by $\eta^2\sum_{i\in \actions}\p{i}{t} \frac{(\rew{i}{t})^2}{\range_i}$.
Putting together, we get that:
\[
\rewb{t}\cdot \big( \qb - \pb{t} \big)
\le 
\frac{1}{\eta} \sum_{i \in S} \range_i \cdot \bigg( q_i \cdot \log \big( \frac{\p{i}{t+1}}{\p{i}{t}} \big) + \p{i}{t} - \p{i}{t+1} \bigg) 
+ \eta \sum_{i\in \actions}\p{i}{t} \frac{(\rew{i}{t})^2}{\range_i}
\]

Summing over $t$, we have:
\[
\rewb{t}\cdot \big( \qb - \pb{t} \big) \leq \frac{1}{\eta} \sum_{i \in S} \range_i \cdot \bigg( q_i \cdot \log \big( \frac{\p{i}{T+1}}{\p{i}{1}} \big) + \p{i}{1} - \p{i}{T+1} \bigg) + \eta \sum_{t\in[T]} \sum_{i\in \actions}\p{i}{t} \frac{(\rew{i}{t})^2}{\range_i}
\]

Finally, by $\log(x) \leq (x-1)$, we get that $q_i \log \big( \frac{\p{i}{T+1}}{q_i} \big) \le \p{i}{T+1} - q_i$.
Hence, we have:
\[
\rewb{t}\cdot \big( \qb - \pb{t} \big) \leq \frac{1}{\eta} \sum_{i \in S} \range_i \cdot \bigg( q_i \cdot \log \big( \frac{q_i}{\p{i}{1}} \big) + \p{i}{1} - q_i \bigg) + \eta \sum_{t\in[T]} \sum_{i\in \actions}\p{i}{t} \frac{(\rew{i}{t})^2}{\range_i}
\]

The lemma then follows by our choice of the initial distribution.
\end{proof}

\subsection{Proof of OMD regret bound}
\label{appendix:OMD-proof}
In order to prove the OMD regret bound, we need some properties of Bregman divergence. 
\begin{lemma}[Properties of Bregman divergence~\citep{bubeck2011introduction}]
\label{lem:BD-properties}
Suppose $F(\cdot)$ is a Legendre function and $D_F(\cdot,\cdot)$ is its associated Bregman divergence as defined in Definition~\ref{def:BD}. Then: 
\begin{itemize}
 \item $D_F(x,y)>0$ if $x\neq y$ as $F$ is strictly convex, and $D_F(x,x)=0$. 
 \item $D_F(.,y)$ is a convex function for any choice of $y$.
 \item  (\emph{Pythagorean theorem}) If $\mathcal{A}$ is a convex set, $a\in\mathcal{A}$, $b\notin\mathcal{A}$ and $c=\argmin{x\in\mathcal{A}}{D_F(x,b)}$, then 
\[
D_F(a,c)+D_F(c,b)\leq D_F(a,b)
\]
\end{itemize}
\end{lemma}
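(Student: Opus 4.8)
The plan is to establish all three properties directly from the definition $D_F(x,y) = F(x) - F(y) - (x-y)^T \nabla F(y)$, leveraging only the strict convexity and smoothness guaranteed by $F$ being a Legendre function.

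For the first property, I would first note that substituting $x=y$ gives $D_F(x,x)=0$ immediately. For $x \neq y$, the key observation is that $D_F(x,y)$ is precisely the slack in the first-order convexity inequality $F(x) \geq F(y) + (x-y)^T \nabla F(y)$; since a Legendre function is strictly convex, this inequality is strict whenever $x \neq y$, yielding $D_F(x,y) > 0$. For the second property, I would regard $D_F(x,y)$ as a function of its first argument only, writing it as $F(x) - x^T \nabla F(y) + C_y$, where $C_y$ collects the terms independent of $x$; this is the convex function $F$ plus an affine function of $x$, hence convex.

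The substantive step---and the main obstacle---is the Pythagorean inequality. I would begin by computing the gradient of the projection objective, $\nabla_x D_F(x,b) = \nabla F(x) - \nabla F(b)$. Because $c$ minimizes the convex function $D_F(\cdot,b)$ over the convex set $\mathcal{A}$ and $a \in \mathcal{A}$, the first-order optimality condition (the variational inequality for constrained convex minimization) gives
\[
\big\langle \nabla F(c) - \nabla F(b),\, a - c \big\rangle \geq 0.
\]
I would then expand the three Bregman divergences and cancel terms to obtain the identity
\[
D_F(a,b) - D_F(a,c) - D_F(c,b) = \big\langle \nabla F(c) - \nabla F(b),\, a - c \big\rangle,
\]
after which combining with the variational inequality yields $D_F(a,c) + D_F(c,b) \leq D_F(a,b)$ as desired. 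The delicate point to get right is the validity of the variational inequality: one must argue that the minimizer $c$ lies in the open domain $\mathcal{D}$ on which $\nabla F$ is defined, so that $\nabla F(c)$ exists and the optimality condition applies. This is exactly where the Legendre property is used, since the blow-up of $\|\nabla F\|$ on the boundary $\bar{\mathcal{D}} \setminus \mathcal{D}$ forces the minimizer of $D_F(\cdot,b)$ to stay in the interior.
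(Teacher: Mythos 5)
Your proposal is correct. Note that the paper itself does not prove this lemma at all: it is stated with a citation to \citet{bubeck2011introduction} and used as a black box in the proof of Lemma~\ref{lem:OMD-regret}, so there is no in-paper argument to compare against. What you give is the standard argument from the cited literature, and it is complete: the first two properties follow directly from strict convexity and from writing $D_F(\cdot,y)$ as $F$ plus an affine function, and the Pythagorean inequality follows from the three-point identity
\[
D_F(a,b) - D_F(a,c) - D_F(c,b) = \big\langle \nabla F(c) - \nabla F(b),\, a - c \big\rangle
\]
combined with the variational inequality $\langle \nabla F(c) - \nabla F(b), a - c\rangle \ge 0$ at the constrained minimizer $c$. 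You also correctly identify the one genuinely delicate point, namely that the optimality condition requires $\nabla F(c)$ to exist, i.e.\ $c \in \mathcal{D}$, and that this is exactly what the Legendre condition $\lim_{x \to \bar{\mathcal{D}} \setminus \mathcal{D}} \lVert \nabla F(x) \rVert = +\infty$ guarantees (assuming, as the projection step \eqref{eq:projection} implicitly does, that $\mathcal{A} \cap \mathcal{D} \neq \emptyset$ and $b \in \mathcal{D}$ so that $D_F(\cdot,b)$ is defined). This fills in precisely what the paper delegates to the reference.
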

Given Lemma~\ref{lem:BD-properties}, we are now ready to prove Lemma~\ref{lem:OMD-regret}.
\begin{proof}[Proof of Lemma~\ref{lem:OMD-regret}]
To obtain the OMD regret bound, we have:
\begin{align}
\qb\cdot\rewb{t}-\pb{t}\cdot\rewb{t}
&=\frac{1}{\eta}(\qb-\pb{t})\cdot(\nabla F(\wb{t+1})-\nabla F(\pb{t}))\nonumber\\
&=\frac{1}{\eta}(D_F(qb,\pb{t})+D_F(\pb{t},\wb{t+1})-D_F(qb,\wb{t+1}))\nonumber\\
&\label{eq:neg-ent-OMD}
\overset{(1)}{\leq} \frac{1}{\eta} D_F(\pb{t},\wb{t+1})+\frac{1}{\eta}\left(D_F(\qb,\pb{t})-D_F(\qb,\pb{t+1})\right)
\end{align}
where in (1) we use $D_F(\pb{t+1},\wb{t+1})\geq 0$ and $D_F(\qb,\pb{t+1})+D_F(\pb{t+1},\wb{t+1})\leq D_F(\qb,\wb{t+1})$ due to Pythagorean theorem (Lemma~\ref{lem:BD-properties}). By summing up both hand sides of (\ref{eq:neg-ent-OMD}) for $t=1,\cdots,T$ we have:
\begin{equation}
\sum_{t\in[T]} \rewb{t} \cdot \big( \qb - \pb{t} \big)\leq \frac{1}{\eta}\sum_{t\in[T]}D_F(\pb{t},\wb{t+1})+\frac{1}{\eta}D_F(\qb,\pb{1})
\end{equation}
\end{proof}

\end{document}